%% file: arxiv.tex
\documentclass{article}

\usepackage{arxiv}

\usepackage[utf8]{inputenc} % allow utf-8 input
\usepackage[T1]{fontenc}    % use 8-bit T1 fonts
\usepackage{hyperref}       % hyperlinks
\usepackage{url}            % simple URL typesetting
\usepackage{booktabs}       % professional-quality tables
\usepackage{amsfonts}       % blackboard math symbols
\usepackage{nicefrac}       % compact symbols for 1/2, etc.
\usepackage{microtype}      % microtypography
\usepackage{graphicx}
\usepackage[numbers, compress]{natbib}
\usepackage{doi}

\usepackage{orcidlink}
\usepackage{algorithm}
\usepackage{algpseudocode}
\usepackage{amsmath, amssymb, amsthm}
\usepackage{subcaption}
\usepackage[normalem]{ulem} % for \sout
\usepackage{multirow}
\usepackage{xcolor}         % colors
\usepackage{makecell}
\usepackage{empheq}
\usepackage[most]{tcolorbox}
\usepackage{wrapfig}

\algnewcommand\Input{\item[\textbf{Input:}]}
\algnewcommand\Output{\item[\textbf{Output:}]}

\theoremstyle{plain}
\newtheorem{theorem}{Theorem}[section]
\newtheorem{proposition}[theorem]{Proposition}

\newtheorem{corollary}[theorem]{Corollary}
\theoremstyle{definition}

\newtheorem{assumption}[theorem]{Assumption}
\theoremstyle{remark}
\newtheorem{remark}[theorem]{Remark}

\newtcbox{\eqbox}{
  colback=blue!15,
  colframe=black,
  arc=3mm,
  boxrule=0.8pt,
  left=6pt,
  right=6pt,
  top=6pt,
  bottom=6pt
}

\title{Nonparametric Deconvolution and Denoising using Simulation Based Inference}
\renewcommand{\shorttitle}{Nonparametric Deconvolution and Denoising} 

\author{
  Ritwik~Vashistha \orcidlink{tbd}\\
  Department of Statistics and Data Sciences, University of Texas at Austin, Austin, Texas 78712, USA\\
  The NSF-Simons AI Institute for Cosmic Origins, USA \\
  \texttt{ritwik.v@utexas.edu} \\
  \And
  Abhra~Sarkar \orcidlink{tbd}\\
  Department of Statistics and Data Sciences, University of Texas at Austin, Austin, Texas 78712, USA\\
  \texttt{abhra.sarkar@utexas.edu}\\
  \And
  Arya~Farahi \orcidlink{0000-0003-0777-4618}\\
  Department of Statistics and Data Sciences, University of Texas at Austin, Austin, Texas 78712, USA\\
  The NSF-Simons AI Institute for Cosmic Origins, USA \\
  \texttt{arya.farahi@utexas.edu} \\
}

\date{} 

\begin{document}

\maketitle

\begin{abstract}
Latent signals are often obscured by measurement noise, yet encode the underlying laws and dynamics of complex systems; learning both the signals and their distributions remains a central challenge in scientific inference. The noise is often non-negligible, and the likelihoods for expressive generative models are often intractable. We utilize a convolutional maximum mean discrepancy (convMMD) loss and propose a likelihood-free framework for nonparametric density deconvolution and empirical Bayes denoising under additive measurement error. Our method learns a latent generative model by matching the observed data distribution to the noise-convolved model distribution. This yields a differentiable, simulation-based objective for multivariate homoscedastic or heteroscedastic noise, compatible with expressive sieve classes such as Gaussian mixtures and normalizing flows. The learned density then serves as an empirical prior for posterior denoising of individual latent values. Theoretically, we extend convMMD from parametric to nonparametric estimation, proving finite-sample bounds for empirical sieve minimizers and $L_2$ convergence rates under Sobolev smoothness. These rates recover the classical inverse-problem dependence: polynomial for ordinary-smooth and logarithmic for super-smooth noises. Our method provides a practical, theoretically grounded approach to deconvolution and denoising under generative latent distribution models.
\end{abstract}

\input{sections/introduction}

\input{sections/theory}

\input{sections/results}

\input{sections/conclusion}

\section*{Acknowledgments}
We acknowledge support from the National Science Foundation under Cooperative Agreement 2421782 and grant DMS-2515902, and the Simons Foundation award MPS-AI-00010515.

\newpage

\bibliographystyle{unsrt}
\bibliography{references}

\newpage
%%%%%%%%%%%%%%%%%%%%%%%%%%%%%%%%%%%%%%%%%%%%%%%%%%%%%%%%%%%%

\appendix

\input{sections/assumptions}

\input{sections/justifications}

%%%%%%%%%%%%%%%%%%%%%%%%%%%%%%%%%%%%%%%%%%%%%%%%%%%%%%%%%%%%

\subsection{Proofs}

\begin{theorem}
Let $k: \mathbb{R}^d \times \mathbb{R}^d \rightarrow \mathbb{R}$ be a continuous, translation-invariant kernel bounded by $\sup_{\mathbf{x}} k(\mathbf{x}, \mathbf{x}) \leq K$. Let $\widehat{(p * m)}_N$ denote the empirical distribution of $N$ i.i.d. observations drawn from the true noisy distribution $p * m$. Let the estimator $\widehat{\boldsymbol{\theta}}_N$ be the global minimizer of the exact empirical squared objective over the sieve class $\Theta_J$:  
$$
\widehat{\boldsymbol{\theta}}_N=\underset{\boldsymbol{\theta} \in \Theta_J}{\arg \min }\ \widehat{\operatorname{convMMD}}_k^2\left(p, q_{\boldsymbol{\theta}}, m\right) \equiv \underset{\boldsymbol{\theta} \in \Theta_J}{\arg \min }\ \operatorname{MMD}_k^2\left(\widehat{(p * m)}_N, q_{\boldsymbol{\theta}} * m\right).
$$
Then, with probability at least $1-\delta$, we have:
$$
\operatorname{convMMD}_k\left(p, q_{\widehat{\boldsymbol{\theta}}_N}, m\right) \leq \inf_{\boldsymbol{\theta} \in \Theta_J} \operatorname{convMMD}_k\left(p, q_{\boldsymbol{\theta}}, m\right)+2 \sqrt{\frac{K}{N}}+2 \sqrt{\frac{2 K \log (1 / \delta)}{N}}.
$$
\end{theorem}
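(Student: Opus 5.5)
The argument is the usual oracle inequality for minimum-distance estimators, with a concentration bound for the kernel mean embedding of the empirical measure. Write $\mu_P$ for the kernel mean embedding of a distribution $P$ in the RKHS $\mathcal{H}_k$, so that $\operatorname{MMD}_k(P,Q)=\|\mu_P-\mu_Q\|_{\mathcal{H}_k}$. We assume, as in the paper's definition, that $\operatorname{convMMD}_k(p,q_{\boldsymbol\theta},m)=\operatorname{MMD}_k(p*m,q_{\boldsymbol\theta}*m)$. Boundedness of $k$ makes all these embeddings well defined. Then $\operatorname{MMD}_k$ is a (pseudo)metric and obeys the triangle inequality.

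\textbf{Step 1: oracle inequality.} Fix any $\boldsymbol\theta\in\Theta_J$. By the triangle inequality and the optimality of $\widehat{\boldsymbol\theta}_N$ for the empirical objective (minimizing the square is the same as minimizing the nonnegative MMD),
\begin{align*}
\operatorname{MMD}_k(p*m,q_{\widehat{\boldsymbol\theta}_N}*m)
&\le \operatorname{MMD}_k(p*m,\widehat{(p*m)}_N)+\operatorname{MMD}_k(\widehat{(p*m)}_N,q_{\widehat{\boldsymbol\theta}_N}*m)\\
&\le \operatorname{MMD}_k(p*m,\widehat{(p*m)}_N)+\operatorname{MMD}_k(\widehat{(p*m)}_N,q_{\boldsymbol\theta}*m)\\
&\le 2\,\operatorname{MMD}_k(p*m,\widehat{(p*m)}_N)+\operatorname{MMD}_k(p*m,q_{\boldsymbol\theta}*m).
\end{align*}
Taking the infimum over $\boldsymbol\theta$ reduces the claim to showing that, with probability at least $1-\delta$,
$$\operatorname{MMD}_k(p*m,\widehat{(p*m)}_N)\le\sqrt{K/N}+\sqrt{2K\log(1/\delta)/N}.$$
If the minimum is not attained, replace $\widehat{\boldsymbol\theta}_N$ by an $\epsilon$-minimizer and let $\epsilon\to0$. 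The theorem defines $\widehat{\boldsymbol\theta}_N$ as a global minimizer, so this is only a formality.

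\textbf{Step 2: concentration of the empirical embedding.} Let $Y_1,\dots,Y_N\sim p*m$ be i.i.d. and set $F(Y_1,\dots,Y_N)=\|\frac1N\sum_i k(Y_i,\cdot)-\mu_{p*m}\|_{\mathcal H_k}$.

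First bound the mean. By Jensen and independence,
$$\mathbb E F\le(\mathbb E F^2)^{1/2}=\Big(\tfrac1N\,\mathbb E\|k(Y,\cdot)-\mu_{p*m}\|^2\Big)^{1/2}\le(\mathbb E k(Y,Y)/N)^{1/2}\le\sqrt{K/N}.$$
The last step uses $\mathbb E\|k(Y,\cdot)-\mu\|^2=\mathbb Ek(Y,Y)-\|\mu\|^2\le K$.

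Next, the deviation. Replacing one $Y_i$ changes $F$ by at most $\frac1N\|k(Y_i,\cdot)-k(Y_i',\cdot)\|\le 2\sqrt K/N$. McDiarmid's inequality then gives
$$P(F-\mathbb EF\ge t)\le\exp\!\big(-2t^2/(N\cdot 4K/N^2)\big)=\exp(-Nt^2/(2K)).$$
Setting the right-hand side equal to $\delta$ yields $t=\sqrt{2K\log(1/\delta)/N}$.

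Plugging this bound into Step 1 and doubling gives exactly $2\sqrt{K/N}+2\sqrt{2K\log(1/\delta)/N}$.

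\textbf{Main obstacle.} The only delicate point is tracking constants: the bounded-difference constant $2\sqrt K/N$ and the variance bound $\mathbb E k(Y,Y)\le K$ must produce precisely the stated constants. Translation invariance and continuity are not needed for the inequality itself. Their role is to guarantee measurability and that $\mathcal H_k$ is separable, so that the mean embedding (a Bochner integral) exists and McDiarmid applies to the norm.
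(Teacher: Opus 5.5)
Your proposal is correct and follows essentially the same route as the paper: the triangle-inequality and optimality decomposition into approximation error plus twice the empirical-embedding deviation, then McDiarmid with bounded differences $2\sqrt{K}/N$ and a Jensen bound $\mathbb{E}F\le\sqrt{K/N}$, giving exactly the stated constants. Your remarks on $\epsilon$-minimizers and on measurability are harmless refinements that the paper omits.
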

\begin{proof}
Let $\mathcal{H}_k$ be the Reproducing Kernel Hilbert Space (RKHS) associated with kernel $k$. Let $\mu_P \in \mathcal{H}_k$ denote the mean embedding of a distribution $P$, defined as $\mu_P(\cdot) = \mathbb{E}_{\mathbf{x} \sim P}[k(\mathbf{x}, \cdot)]$.
The MMD between two distributions $P$ and $Q$ is then simply the distance between their mean embeddings in the RKHS:
$$ \text{MMD}(P, Q) = \| \mu_P - \mu_Q \|_{\mathcal{H}_k}. $$

By definition of the convMMD, $\text{convMMD}_k(p, q_{\boldsymbol{\theta}}, m) = \text{MMD}_k(p*m, q_{\boldsymbol{\theta}}*m) = \| \mu_{p*m} - \mu_{q_{\boldsymbol{\theta}}*m} \|_{\mathcal{H}_k}$.

Because the scalar mapping $v \mapsto v^2$ is strictly increasing on $[0, \infty)$, the estimator $\widehat{\boldsymbol{\theta}}_N$ that minimizes the squared empirical MMD identically minimizes the unsquared empirical MMD. Thus, for any arbitrary $\boldsymbol{\theta} \in \Theta_J$:

$$\| \mu_{\widehat{(p*m)}_N} - \mu_{q_{\widehat{\boldsymbol{\theta}}_N}*m} \|_{\mathcal{H}_k} \leq \| \mu_{\widehat{(p*m)}_N} - \mu_{q_{\boldsymbol{\theta}}*m} \|_{\mathcal{H}_k}.$$
By the triangle inequality in $\mathcal{H}_k$, we bound the true error of our estimator by introducing the empirical data embedding:

$$\| \mu_{p*m} - \mu_{q_{\widehat{\boldsymbol{\theta}}_N}*m} \|_{\mathcal{H}_k} \leq \| \mu_{p*m} - \mu_{\widehat{(p*m)}_N} \|_{\mathcal{H}_k} + \| \mu_{\widehat{(p*m)}_N} - \mu_{q_{\widehat{\boldsymbol{\theta}}_N}*m} \|_{\mathcal{H}_k}.$$
Applying the optimality condition of $\widehat{\boldsymbol{\theta}}_N$ to the second term:

$$\| \mu_{p*m} - \mu_{q_{\widehat{\boldsymbol{\theta}}_N}*m} \|_{\mathcal{H}_k} \leq \| \mu_{p*m} - \mu_{\widehat{(p*m)}_N} \|_{\mathcal{H}_k} + \| \mu_{\widehat{(p*m)}_N} - \mu_{q_{\boldsymbol{\theta}}*m} \|_{\mathcal{H}_k}.$$
We apply the triangle inequality once more to the right-most term to re-introduce the true target embedding $\mu_{p*m}$:

$$\| \mu_{\widehat{(p*m)}_N} - \mu_{q_{\boldsymbol{\theta}}*m} \|_{\mathcal{H}_k} \leq \| \mu_{\widehat{(p*m)}_N} - \mu_{p*m} \|_{\mathcal{H}_k} + \| \mu_{p*m} - \mu_{q_{\boldsymbol{\theta}}*m} \|_{\mathcal{H}_k}.$$

Combining these yields the fundamental decomposition. Since the bound holds for any $\boldsymbol{\theta} \in \Theta_J$, we take the infimum over the sieve:

\begin{equation} \label{eq: decompostion}
\| \mu_{p*m} - \mu_{q_{\widehat{\boldsymbol{\theta}}_N}*m} \|_{\mathcal{H}_k} \leq \underbrace{\inf_{\boldsymbol{\theta} \in \Theta_J} \| \mu_{p*m} - \mu_{q_{\boldsymbol{\theta}}*m} \|_{\mathcal{H}_k}}_{\text{Approximation Error}} + \underbrace{2 \| \mu_{p*m} - \mu_{\widehat{(p*m)}_N} \|_{\mathcal{H}_k}}_{\text{Estimation Error}}.
\end{equation}

Since GMMs are dense in $L_1\left(\mathbb{R}^d\right)$ \citep{goodfellow2016}, and convolution with $m$ is a continuous linear operator, the set of convolved models $\left\{q_{\boldsymbol{\theta}} * m\right\}$ is dense in the space of observed densities.

We bound the statistical estimation error $\Delta(\widetilde{\mathbf{x}}_1, \dots, \widetilde{\mathbf{x}}_N) = \| \mu_{p*m} - \mu_{\widehat{(p*m)}_N} \|_{\mathcal{H}_k}$ using McDiarmid's Inequality \citep{mcdiarmid1989method}. 
If we replace a single observation $\widetilde{\mathbf{x}}_i$ with an independent draw $\widetilde{\mathbf{x}}_i'$, the value of $\Delta$ changes by at most:
$$| \Delta(\dots, \widetilde{\mathbf{x}}_i, \dots) - \Delta(\dots, \widetilde{\mathbf{x}}_i', \dots) | \leq \frac{1}{N} \| k(\widetilde{\mathbf{x}}_i, \cdot) - k(\widetilde{\mathbf{x}}_i', \cdot) \|_{\mathcal{H}_k}.$$
Using the property of the RKHS norm:
$$\| k(\widetilde{\mathbf{x}}_i, \cdot) - k(\widetilde{\mathbf{x}}_i', \cdot) \|_{\mathcal{H}_k} = \sqrt{k(\widetilde{\mathbf{x}}_i, \widetilde{\mathbf{x}}_i) + k(\widetilde{\mathbf{x}}_i', \widetilde{\mathbf{x}}_i') - 2k(\widetilde{\mathbf{x}}_i, \widetilde{\mathbf{x}}_i')}.$$
By the Cauchy-Schwarz inequality in $\mathcal{H}_k$, we have $|k(\widetilde{\mathbf{x}}_i, \widetilde{\mathbf{x}}_i')| \leq \sqrt{k(\widetilde{\mathbf{x}}_i, \widetilde{\mathbf{x}}_i) k(\widetilde{\mathbf{x}}_i', \widetilde{\mathbf{x}}_i')} \leq K$, meaning $-k(\widetilde{\mathbf{x}}_i, \widetilde{\mathbf{x}}_i') \le K$. Thus:
$$\| k(\widetilde{\mathbf{x}}_i, \cdot) - k(\widetilde{\mathbf{x}}_i', \cdot) \|_{\mathcal{H}_k} \leq \sqrt{2K - 2(-K)} = 2\sqrt{K}.$$
Thus, the maximum change is bounded by $c_i = \frac{2\sqrt{K}}{N}$.
Applying McDiarmid's Inequality, for any $\epsilon > 0$:

$$\mathbb{P}(\Delta - \mathbb{E}[\Delta] \geq \epsilon) \leq \exp \left( \frac{-2\epsilon^2}{\sum_{i=1}^N c_i^2} \right) = \exp \left( \frac{-2\epsilon^2}{N (\frac{4K}{N^2})} \right) = \exp \left( \frac{-N\epsilon^2}{2K} \right).$$
Setting $\delta = \exp(-N\epsilon^2 / 2K)$ and solving for $\epsilon$ gives, with probability at least $1-\delta$:

\begin{equation} \label{eq: mc-darmid}
\Delta \leq \mathbb{E}[\Delta] + \sqrt{\frac{2K \log(1/\delta)}{N}}.
\end{equation}

By Jensen's inequality, $\mathbb{E}[\Delta] \leq \sqrt{\mathbb{E}[\Delta^2]}$. Expanding the squared RKHS norm gives the variance of the sample mean embedding:

$$\mathbb{E}[\Delta^2] = \mathbb{E} \left[ \left\| \frac{1}{N} \sum_{i=1}^N k(\widetilde{\mathbf{x}}_i, \cdot) - \mu_{p*m} \right\|_{\mathcal{H}_k}^2 \right] = \frac{1}{N} \mathbb{E}_{\widetilde{\mathbf{X}} \sim p*m} \left[ \| k(\widetilde{\mathbf{X}}, \cdot) - \mu_{p*m} \|_{\mathcal{H}_k}^2 \right].$$
Using the identity $\mathbb{E}[\|\mathbf{X} - \mathbb{E}[\mathbf{X}]\|^2] \leq \mathbb{E}[\|\mathbf{X}\|^2]$, we have:
\begin{align}
\mathbb{E}[\Delta^2] &\leq \frac{1}{N} \mathbb{E}_{\widetilde{\mathbf{X}}} [ \| k(\widetilde{\mathbf{X}}, \cdot) \|_{\mathcal{H}_k}^2 ] = \frac{1}{N} \mathbb{E}_{\widetilde{\mathbf{X}}} [ k(\widetilde{\mathbf{X}}, \widetilde{\mathbf{X}}) ] \leq \frac{K}{N} \nonumber \\
\Rightarrow \mathbb{E}[\Delta] &\leq \sqrt{\frac{K}{N}}. \label{eq: jensen}
\end{align}

Combining equations \eqref{eq: mc-darmid} and \eqref{eq: jensen}, we have with probability $1-\delta$:
$$\| \mu_{p*m} - \mu_{\widehat{(p*m)}_N} \|_{\mathcal{H}_k} \leq \sqrt{\frac{K}{N}} + \sqrt{\frac{2K \log(1/\delta)}{N}}$$
Substituting this back into the equation \eqref{eq: decompostion} (multiplying by 2), we have with probability $1-\delta$:
\begin{align*}
\| \mu_{p*m} - \mu_{q_{\widehat{\boldsymbol{\theta}}_N}*m} \|_{\mathcal{H}_k} &\leq \inf_{\boldsymbol{\theta} \in \Theta_J} \| \mu_{p*m} - \mu_{q_{\boldsymbol{\theta}}*m} \|_{\mathcal{H}_k} + 2  \sqrt{\frac{K}{N}} + 2  \sqrt{\frac{2K \log(1/\delta)} {N}} \\
\Rightarrow \operatorname{convMMD}_k\left(p, q_{\widehat{\boldsymbol{\theta}}_N}, m\right) &\leq \inf_{\boldsymbol{\theta} \in \Theta_J} \operatorname{convMMD}_k\left(p, q_{\boldsymbol{\theta}}, m\right)+2 \sqrt{\frac{K}{N}}+2 \sqrt{\frac{2 K \log (1 / \delta)}{N}}
\end{align*}
\end{proof}

\begin{theorem}[Nonparametric $L_2$ Convergence] 
  Under Assumptions~\ref{assump:noise-indep} --  \ref{assump:compactness}, let $q_{\widehat{\boldsymbol{\theta}}_N}$ be the convMMD estimator over $\mathcal{Q}_J$. 

  \textbf{Case A (Ordinary Smooth Noise and Kernel):} Under (OS-N) with exponent $\gamma$ and (OS-K) with exponent $\nu$, choosing $J_N \asymp N^{d/(2\beta)}$ yields
  $$\|q_{\widehat{\boldsymbol{\theta}}_N} - p\|_{2} = O_p\left(N^{-\beta/(2\beta + 2\gamma + \nu)}\right).$$

  \textbf{Case B (Supersmooth Noise or Kernel):} If the noise is supersmooth (SS-N) with exponent $\gamma$, or the kernel is supersmooth (SS-K) with exponent
  $\nu$, or both, define
  $$\gamma^{*} = \begin{cases}
  \gamma & \text{if (SS-N) and (OS-K)}, \\
  \nu & \text{if (OS-N) and (SS-K)}, \\
  \max(\gamma, \nu) & \text{if (SS-N) and (SS-K)}.
  \end{cases}$$
  Then, choosing $J_N \asymp N^{d/(2\beta)}$ yields
  $$\|q_{\widehat{\boldsymbol{\theta}}_N} - p\|_{2} = O_p\left((\log N)^{-\beta/\gamma^{*}}\right).$$
\end{theorem}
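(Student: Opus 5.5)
The plan is to turn the convMMD guarantee of the preceding theorem into an $L_2$ bound through a Fourier-side stability inequality, then balance the sieve approximation error against the ill-posedness. For a translation-invariant kernel with spectral density $\widehat{k}$, Bochner's theorem gives
$$\operatorname{convMMD}_k^2(p,q,m)=\int |\widehat{p}(\omega)-\widehat{q}(\omega)|^2\,|\widehat{m}(\omega)|^2\,\widehat{k}(\omega)\,d\omega .$$
Write $g=q_{\widehat{\boldsymbol{\theta}}_N}-p$. By Plancherel,
$$\|g\|_2^2=\int_{|\omega|\le T}|\widehat g|^2\,d\omega+\int_{|\omega|>T}|\widehat g|^2\,d\omega .$$
The high-frequency piece is controlled by Sobolev smoothness. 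Under the compactness assumption, both $p$ and every sieve member lie in a Sobolev ball of order $\beta$ with uniform radius, so this tail is $O(T^{-2\beta})$.

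The low-frequency piece is handled by dividing by the weight $|\widehat m|^2\widehat k$. Under (OS-N) and (OS-K) the weight is bounded below by $c(1+|\omega|)^{-(2\gamma+\nu)}$; I take $|\widehat m|\gtrsim(1+|\omega|)^{-\gamma}$ and $\widehat k\gtrsim(1+|\omega|)^{-\nu}$, and I will match these to the exact normalizations in the assumptions. This yields
$$\int_{|\omega|\le T}|\widehat g|^2\lesssim T^{2\gamma+\nu}\operatorname{convMMD}_k^2 .$$
In the supersmooth cases the weight decays like $\exp(-c|\omega|^{\gamma^*})$. Whichever of the noise or the kernel decays exponentially dominates, and when both do the larger exponent dominates, which is where $\gamma^*$ comes from. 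The low-frequency factor then becomes $e^{cT^{\gamma^*}}$.

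Next I bound the convMMD itself. The preceding theorem gives
$$\operatorname{convMMD}_k\bigl(p,q_{\widehat{\boldsymbol\theta}_N},m\bigr)\le \inf_{\Theta_J}\operatorname{convMMD}_k+O_p(N^{-1/2}).$$
For the approximation term, convMMD is dominated by the $L_2$ distance, since $|\widehat m|\le1$ and $\widehat k$ is bounded. Hence it is at most $\inf_{q\in\mathcal Q_J}\|q-p\|_2$, and I will cite a sieve approximation rate of $J^{-\beta/d}$ from the assumptions. With $J_N\asymp N^{d/(2\beta)}$ this is $N^{-1/2}$, so $\operatorname{convMMD}=O_p(N^{-1/2})$.

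It remains to optimize over $T$:
\begin{itemize}
\item \textbf{Case A.} The bound is $\|g\|_2^2\lesssim T^{2\gamma+\nu}N^{-1}+T^{-2\beta}$. Taking $T\asymp N^{1/(2\beta+2\gamma+\nu)}$ gives $\|g\|_2=O_p(N^{-\beta/(2\beta+2\gamma+\nu)})$.
\item \textbf{Case B.} The bound is $e^{cT^{\gamma^*}}N^{-1}+T^{-2\beta}$. Taking $T=(c'\log N)^{1/\gamma^*}$ with $c'<1/c$ makes the first term a negative power of $N$, which is negligible. The second term is then $(\log N)^{-2\beta/\gamma^*}$, giving $\|g\|_2=O_p((\log N)^{-\beta/\gamma^*})$.
\end{itemize}

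I expect the main obstacle to be the uniform tail control for $|\omega|>T$. It requires the estimator $q_{\widehat{\boldsymbol\theta}_N}$, not just $p$, to stay in a Sobolev ball of order $\beta$ with a radius that does not grow with $J$. I will rely on the compactness assumption to impose this uniformly over $\mathcal Q_J$. If that assumption only bounds the parameters, I would first show that Gaussian mixtures with bounded bandwidths below and compact means have uniformly bounded $H^\beta$ norms, and handle flows similarly through their Lipschitz and derivative bounds.

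A secondary technical point is that the lower bounds on $|\widehat m|$ and $\widehat k$ are assumed only for large $|\omega|$. On compact frequency sets I will use their positivity, with $\widehat m$ nonvanishing and $k$ characteristic, to absorb the remainder into constants.
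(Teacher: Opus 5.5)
Your proposal is correct and follows essentially the same route as the paper: a Plancherel split at a frequency cutoff, a Sobolev tail bound from the uniform $H^\beta$ constraint in the compactness assumption, inversion of the spectral weight $|\phi_m|^2\phi_k$ on the low-frequency ball, and control of the convMMD by the oracle inequality, with the approximation term bounded via $\sup W<\infty$ and the sieve rate, followed by $J_N\asymp N^{d/(2\beta)}$ and optimization of the cutoff. Your explicit choice $T=(c'\log N)^{1/\gamma^*}$ with $c'<1/c$ in Case B is a slightly cleaner version of the paper's ``balancing'' step, and your concern that the frequency lower bounds hold only at large $|\omega|$ is moot, since the paper's $\asymp$ conditions are stated globally.
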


\begin{proof}
We aim to bound the expected true $L_2$ risk of the latent density estimator, $\|q_{\widehat{\boldsymbol{\theta}}_N}- p\|_2$. By Plancherel's theorem \citep{rudin1974real}, the squared $L_2$ distance between the true density and our estimator is proportional to the integrated squared difference of their Fourier transforms. Let $\phi_q(\mathbf{t})$ and $\phi_p(\mathbf{t})$ denote the characteristic functions of $q_{\widehat{\boldsymbol{\theta}}_N}$ and $p$ respectively, and let $\Delta \phi(\mathbf{t}) = \phi_q(\mathbf{t}) - \phi_p(\mathbf{t})$. 
We then have:
$$ \|q_{\widehat{\boldsymbol{\theta}}_N} - p\|_2^2 = \frac{1}{(2\pi)^d} \int_{\mathbb{R}^d} |\Delta \phi(\mathbf{t})|^2 d\mathbf{t}. $$

By the convolution theorem and Plancherel's identity (or Bochner's theorem for translation-invariant kernels) \citep{sriperumbudur2010hilbert}, the squared MMD between the convolved densities $q_{\widehat{\boldsymbol{\theta}}_N} *m$ and $p*m$ can be written in the frequency domain as: 
$$ \text{MMD}^2(q_{\widehat{\boldsymbol{\theta}}_N} *m, p*m) = \frac{1}{(2\pi)^d} \int_{\mathbb{R}^d} |\Delta \phi(\mathbf{t})|^2 \underbrace{|\phi_m(\mathbf{t})|^2 \phi_{k}(\mathbf{t})}_{W(\mathbf{t})} d\mathbf{t}, $$
where we recall that $\phi_m(\mathbf{t})$ is the characteristic function of the error distribution $m$, and $\phi_{k}(\mathbf{t})$ is the Fourier transform of the MMD kernel $k$. 
The term $W(\mathbf{t}) = |\phi_m(\mathbf{t})|^2 \phi_{k}(\mathbf{t})$ then acts as a spectral weighting function. Because $W(\mathbf{t}) \to 0$ as $\|\mathbf{t}\| \to \infty$, high-frequency differences are heavily \emph{discounted} (i.e., lightly penalized) in the MMD objective, which makes them difficult to recover and fundamentally characterizes the ill-posed nature of deconvolution.

Under the two-sided smoothness assumptions (Assumptions~\ref{assump:noise-smooth} and \ref{assump:kernel-smooth}), $W(\mathbf{t})$ satisfies the following two-sided bounds:
\begin{itemize}
      \item (OS-N) + (OS-K): $W(\mathbf{t}) {\asymp} (1+\|\mathbf{t}\|^2)^{-\gamma - \nu/2}$
      \item (SS-N) + (OS-K): $W(\mathbf{t}) {\asymp} \exp(-2c_m\|\mathbf{t}\|^\gamma)(1+\|\mathbf{t}\|^2)^{-\nu/2}$
      \item (OS-N) + (SS-K): $W(\mathbf{t}) {\asymp} (1+\|\mathbf{t}\|^2)^{-\gamma} \exp(-c_k\|\mathbf{t}\|^\nu)$
      \item (SS-N) + (SS-K): $W(\mathbf{t}) {\asymp}  \exp(-2c_m\|\mathbf{t}\|^\gamma - c_k\|\mathbf{t}\|^\nu)$
\end{itemize}

To recover the $L_2$ error from the MMD space, we must invert this weight.  We truncate the $L_2$ integral, for any $\Omega > 0$:
\begin{equation} \label{eq: split_integral}
\|q_{\widehat{\boldsymbol{\theta}}_N} - p\|_2^2 = \frac{1}{(2\pi)^d} \left( \underbrace{\int_{\|\mathbf{t}\| \leq \Omega} |\Delta \phi(\mathbf{t})|^2 d\mathbf{t}}_{\text{Low-Frequency Error}} + \underbrace{\int_{\|\mathbf{t}\| > \Omega} |\Delta \phi(\mathbf{t})|^2 d\mathbf{t}}_{\text{High-Frequency Tail}} \right).
\end{equation}
The cutoff $\Omega$ is a free parameter to be optimized at the end; it is not tied to the sieve complexity $J$.

\textbf{Bounding the High-Frequency Tail:}
Because both the true density $p$ and the empirical estimator $q_{\widehat{\boldsymbol{\theta}}_N}$ belong to the Sobolev space $H^\beta$ with bounded norms, their difference $\Delta \phi \in H^\beta$ is also bounded: $\int_{\mathbb{R}^d} (1+\|\mathbf{t}\|^2)^\beta |\Delta \phi(\mathbf{t})|^2 d\mathbf{t} \le C$ for some constant $C$. For any $\|\mathbf{t}\| > \Omega$, the fraction $\frac{(1+\|\mathbf{t}\|^2)^\beta}{(1+\Omega^2)^\beta} \ge 1$. We use this to bound the tail:
\begin{align}
\int_{\|\mathbf{t}\| > \Omega} |\Delta \phi(\mathbf{t})|^2 d\mathbf{t} &\le \int_{\|\mathbf{t}\| > \Omega} \frac{(1+\|\mathbf{t}\|^2)^\beta}{(1+\Omega^2)^\beta} |\Delta \phi(\mathbf{t})|^2 d\mathbf{t} \nonumber \\
&\le \frac{1}{(1+\Omega^2)^\beta} \int_{\mathbb{R}^d} (1+\|\mathbf{t}\|^2)^\beta |\Delta \phi(\mathbf{t})|^2 d\mathbf{t} \le \frac{C}{(1+\Omega^2)^\beta}.\label{eq:high-freq-bound}
\end{align}
For large $\Omega$, the term $(1+\Omega^2)^\beta \asymp \Omega^{2\beta}$. 

\textbf{Bounding the Low-Frequency Error:}
For the low-frequency integral ($\|\mathbf{t}\| \le \Omega$), we multiply and divide by the spectral weight $W(\mathbf{t})$. Using the \emph{lower bound} from the two-sided assumption on $W(\mathbf{t})$, we can control the supremum of the inverse weight: $\sup_{\|\mathbf{t}\| \le \Omega} 1/W(\mathbf{t})$ is finite and grows at a known rate in $\Omega$. By pulling this supremum outside the integral and noting that the integrand is non-negative, extending the bounds back to all of $\mathbb{R}^d$ yields an upper bound matching the true MMD. Note that because we pull the supremum of $1/W(\mathbf{t})$ outside the integral, the geometric volume of the domain $\|\mathbf{t}\| \le \Omega$ is implicitly captured entirely by the full MMD integral, preventing an explicit $\Omega^d$ factor from appearing here:
\begin{align} 
\int_{\|\mathbf{t}\| \leq \Omega} |\Delta \phi(\mathbf{t})|^2 d\mathbf{t} &\le \left( \sup_{\|\mathbf{t}\| \leq \Omega} \frac{1}{W(\mathbf{t})} \right) \int_{\|\mathbf{t}\| \leq \Omega} |\Delta \phi(\mathbf{t})|^2 W(\mathbf{t}) d\mathbf{t} \nonumber \\
&\le \left( \sup_{\|\mathbf{t}\| \leq \Omega} \frac{1}{W(\mathbf{t})} \right) (2\pi)^d \text{MMD}^2(q_{\widehat{\boldsymbol{\theta}}_N} *m, p*m). \label{eq: low_freq_bound}
\end{align}

Squaring both sides of Theorem~\ref{th:oracle} and using $(a+b)^2 \le 2a^2 + 2b^2$:
\begin{equation}
\text{MMD}^2(q_{\widehat{\boldsymbol{\theta}}_N} *m, p*m) \lesssim \inf_{\boldsymbol{\theta} \in \mathcal{Q}_J} \text{MMD}^2(q *m, p*m) + O_p(N^{-1}). \label{eq:mmd-oracle-squared} 
\end{equation} 
Let $q_{J}^{*} \in \mathcal{Q}_J$ be the element achieving the sieve approximation rate $\|q_{J}^{*} - p\|_{2} \le C_{\mathrm{approx}}\,J^{-\beta/d}$ (Assumption~\ref{assump:sieve-approx}).  The spectral weight is bounded globally: $\sup_{\mathbf{t}} W(\mathbf{t}) \le C_W$ (which follows from $\phi_k(\mathbf{t})$ being bounded and $|\phi_m(\mathbf{t})| \le 1$), for some constant $C_W$. By factoring out this maximum weight and applying Plancherel's theorem to the remaining integral, we can relate the MMD distance directly to the $L_2$ distance:
\begin{align}
\operatorname{MMD}_k^2(q_{J}^{*} * m,\; p * m)
&=
\frac{1}{(2\pi)^d}\int_{\mathbb{R}^d} |\phi_{q_{J}^{*}}(\mathbf{t}) - \phi_p(\mathbf{t})|^2\,W(\mathbf{t})\,d\mathbf{t} \nonumber \\
&\le
C_W \left( \frac{1}{(2\pi)^d} \int_{\mathbb{R}^d} |\phi_{q_{J}^{*}}(\mathbf{t}) - \phi_p(\mathbf{t})|^2\,d\mathbf{t} \right) \nonumber \\
&=
C_W\,\|q_{J}^{*} - p\|_{2}^{2} \nonumber \\
&\le
C_W\,C_{\mathrm{approx}}^2\,J^{-2\beta/d}.
\label{eq:mmd-bias}
\end{align}
Substituting into~\eqref{eq:mmd-oracle-squared}:
\begin{equation}
\operatorname{MMD}_k^2(q_{\widehat{\boldsymbol{\theta}}_N} * m,\; p * m)
\lesssim
J^{-2\beta/d} + N^{-1}.
\label{eq:mmd-combined}
\end{equation}

\medskip
Collecting~\eqref{eq:high-freq-bound},~\eqref{eq: low_freq_bound}, and~\eqref{eq:mmd-combined}:
\begin{equation}
\|q_{\widehat{\boldsymbol{\theta}}_N} - p\|_{2}^{2}
\lesssim
\left(\sup_{\|\mathbf{t}\| \leq \Omega} \frac{1}{W(\mathbf{t})}\right) \left(J^{-2\beta/d} + N^{-1}\right) + \Omega^{-2\beta}.
\label{eq:total-risk}
\end{equation}
This bound has two free parameters: $J$ and $\Omega$. We optimize them sequentially.

\emph{Choose $J$:} Select $J$ large enough that the sieve bias is dominated by the statistical error:
\begin{equation}
J^{-2\beta/d} \le N^{-1}
\qquad\Longleftrightarrow\qquad
J \gtrsim N^{d/(2\beta)}.
\end{equation}
This simplifies~\eqref{eq:total-risk} to
\begin{equation}
\|q_{\widehat{\boldsymbol{\theta}}_N} - p\|_{2}^{2}
\lesssim
W^{-1}(\Omega)\,N^{-1} + \Omega^{-2\beta}.
\end{equation}
\textbf{Case A (OS-N) and (OS-K):}
Under polynomial decay assumptions, 
$|\phi_m(\mathbf{t})| \asymp (1+\|\mathbf{t}\|^2)^{-\gamma/2}$ and $\phi_{k}(\mathbf{t}) \asymp (1+\|\mathbf{t}\|^2)^{-\nu/2}$, we have 
\begin{equation}
\sup_{\|\mathbf{t}\| \leq \Omega} \frac{1}{W(\mathbf{t})}
=
\sup_{\|\mathbf{t}\| \leq \Omega} \frac{1}{|\phi_m(\mathbf{t})|^2\,\phi_k(\mathbf{t})} \asymp \{(1+\Omega^{2})^{\gamma/2}\}^{2} \cdot \{(1+\Omega^{2})^{\nu/2}\}
\asymp
\Omega^{2\gamma+\nu}. 
\end{equation}
This simplifies~\eqref{eq:total-risk} to
\begin{equation}
\|q_{\widehat{\boldsymbol{\theta}}_N} - p\|_{2}^{2}
\lesssim
\Omega^{2\gamma+\nu}N^{-1} + \Omega^{-2\beta}.
\end{equation}
Balancing the bias and variance terms yields: 
\begin{equation}
\Omega^{2\gamma+\nu}\,N^{-1} = \Omega^{-2\beta}
\qquad\Longrightarrow\qquad
\Omega_N = N^{1/(2\beta+2\gamma+\nu)}.
\end{equation}
Substituting $\Omega \asymp N^{1/(2\beta+2\gamma+\nu)}$, and taking the square root provides the final polynomial convergence rate:
$$ \|q_{\widehat{\boldsymbol{\theta}}_N} - p\|_{{2}} = O_p\left( \sqrt{ \left( N^{\frac{1}{2\beta + 2\gamma + \nu}} \right)^{-2\beta}} \right) = O_p\left( N^{-\frac{\beta}{2\beta + 2\gamma + \nu}} \right). $$

\textbf{Case B (SS-N) or (SS-K):}
Under exponential decay assumptions, $|\phi_m(\mathbf{t})|^2 \asymp \exp(-2c_m\|\mathbf{t}\|^\gamma)$  or $\phi_{k}(\mathbf{t}) \asymp \exp(-2c_k\|\mathbf{t}\|^\nu)$.
Let
  $$\gamma^{*} = \begin{cases}
  \gamma & \text{if (SS-N) and (OS-K)}, \\
  \nu & \text{if (OS-N) and (SS-K)}, \\
  \max(\gamma, \nu) & \text{if (SS-N) and (SS-K)}.
  \end{cases}$$
Then, using the \emph{lower bound} from the two-sided assumptions and identifying the dominant exponent $\gamma^{*}$, we have:
$$\sup_{\|\mathbf{t}\| \leq \Omega} \frac{1}{W(\mathbf{t})} = \sup_{\|\mathbf{t}\| \leq \Omega} \frac{1}{|\phi_m(\mathbf{t})|^2\,\phi_k(\mathbf{t})} \asymp \exp(c^{*}\Omega^{\gamma^{*}}).$$ 
The variance term dominates the decay and balancing $\exp(c^{*}\Omega^{\gamma^{*}})N^{-1} \asymp \Omega^{-2\beta}$ gives $\Omega_N \asymp (\log N)^{1/\gamma^{*}}$, and substituting yields
\begin{equation}
\|q_{\widehat{\boldsymbol{\theta}}_N} - p\|_{2} = O_p\!\left((\log N)^{-\beta/\gamma^{*}}\right).
\end{equation}
\end{proof}

\begin{remark}
While the dimension $d$ does not appear explicitly in the convergence exponents, it imposes fundamental implicit constraints on the model parameters. Specifically, for the target density to lie in $H^\beta$, we require $\beta > d/2$ per the Sobolev embedding theorem. Furthermore, to ensure the kernel is bounded and integrable (Assumption~\ref{assump:kernel}), its Fourier decay $\nu$ must satisfy $\nu > d$ in Case A, ensuring that $\int (1+\|\mathbf{t}\|^2)^{-\nu/2} d\mathbf{t} < \infty$. 
Thus, higher dimensions strictly require smoother targets and smoother kernels, effectively slowing down the permissible rates.
\end{remark}

\begin{theorem}[Empirical Bayes Denoising Rates]
Let $\widehat{\mathbf{x}}^*$ be the oracle posterior mean and $\widehat{\mathbf{x}}^{\mathrm{EB}}$ be the empirical Bayes estimator using $q_{\widehat{\boldsymbol{\theta}}_N}$. Under Assumptions~\ref{assump:noise-indep} -- \ref{assump:bounded-noise},  with the additional requirement that the true marginal $\widetilde{p}$ is strictly positive on a compact observation domain $\mathcal{K}$, the expected excess empirical Bayes risk evaluated over $\mathcal{K}$ satisfies:
$$ \int_{\mathcal{K}} \|\widehat{\mathbf{x}}^{\mathrm{EB}}(\widetilde{\mathbf{x}}) - \widehat{\mathbf{x}}^*(\widetilde{\mathbf{x}})\|^2 \widetilde{p}(\widetilde{\mathbf{x}}) d\widetilde{\mathbf{x}} = O_p\!\left(\|q_{\widehat{\boldsymbol{\theta}}_N} - p\|_{2}^2\right). $$
Combined with Theorem~\ref{th:l2rate}:
\begin{itemize}
    \item \textbf{Case A (Ordinary-smooth):} $O_p\Big(N^{-\frac{2\beta}{2\beta + 2\gamma + \nu}}\Big)$.
    \item \textbf{Case B (Super-smooth):} $O_p\Big((\log N)^{-\frac{2\beta}{\gamma^*}}\Big)$.
\end{itemize}
\end{theorem}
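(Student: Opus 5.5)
We write the oracle and empirical Bayes posterior means as ratios of convolution integrals,
$$\widehat{\mathbf{x}}^*(\widetilde{\mathbf{x}})=\frac{N_p(\widetilde{\mathbf{x}})}{\widetilde p(\widetilde{\mathbf{x}})},\qquad \widehat{\mathbf{x}}^{\mathrm{EB}}(\widetilde{\mathbf{x}})=\frac{N_q(\widetilde{\mathbf{x}})}{\widetilde q(\widetilde{\mathbf{x}})},$$
where
$$N_p(\widetilde{\mathbf{x}})=\int \mathbf{x}\,m(\widetilde{\mathbf{x}}-\mathbf{x})p(\mathbf{x})\,d\mathbf{x},\qquad \widetilde p=p*m,$$
and analogously $N_q$, $\widetilde q=q_{\widehat{\boldsymbol\theta}_N}*m$. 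The plan is to control the difference of two ratios pointwise on $\mathcal{K}$ by the differences of the numerators and denominators. Those differences are then bounded in $L_2(\mathcal{K})$ by $\|q_{\widehat{\boldsymbol\theta}_N}-p\|_2$ using Young's inequality, and the result is integrated against $\widetilde p\le \sup\widetilde p<\infty$. The heteroscedastic case is handled the same way with $m$ replaced by $m_i$; uniformity in $i$ follows from the noise assumptions.

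\textbf{Step 1 (ratio decomposition).} We use the identity
$$\frac{N_q}{\widetilde q}-\frac{N_p}{\widetilde p}=\frac{N_q-N_p}{\widetilde p}-\frac{N_q}{\widetilde q}\cdot\frac{\widetilde q-\widetilde p}{\widetilde p}=\frac{(N_q-N_p)-\widehat{\mathbf{x}}^{\mathrm{EB}}(\widetilde q-\widetilde p)}{\widetilde p}.$$
This gives
$$\|\widehat{\mathbf{x}}^{\mathrm{EB}}-\widehat{\mathbf{x}}^*\|^2\le \frac{2}{\widetilde p^2}\Bigl(\|N_q-N_p\|^2+\|\widehat{\mathbf{x}}^{\mathrm{EB}}\|^2\,|\widetilde q-\widetilde p|^2\Bigr).$$
Since $\widetilde p$ is continuous and strictly positive on the compact set $\mathcal{K}$, we have $\inf_{\mathcal{K}}\widetilde p=:c_{\mathcal K}>0$.

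Next we need $\sup_{\mathcal K}\|\widehat{\mathbf{x}}^{\mathrm{EB}}\|\le R$. Under Assumption~\ref{assump:compactness}, the latent support (hence the support of every $q\in\mathcal Q_J$) lies in a bounded set. A posterior mean is a convex combination of support points, so $\|\widehat{\mathbf{x}}^{\mathrm{EB}}\|\le R$ deterministically, and no lower bound on $\widetilde q$ is required for this step. If compactness were only assumed for $p$, I would fall back on the bounded-noise assumption (\ref{assump:bounded-noise}): it forces the posterior to concentrate in $\widetilde{\mathbf{x}}-\mathrm{supp}(m)$, which gives the same bound.

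\textbf{Step 2 ($L_2$ control of convolutions).} Let $\Delta=q_{\widehat{\boldsymbol\theta}_N}-p$. Young's inequality gives
$$\|\widetilde q-\widetilde p\|_{L_2}=\|m*\Delta\|_2\le\|m\|_1\|\Delta\|_2=\|\Delta\|_2.$$
For the numerator, $N_q-N_p=m*(\mathbf{x}\Delta)$, so
$$\|N_q-N_p\|_{L_2}\le \|\mathbf{x}\Delta\|_2\le R\,\|\Delta\|_2,$$
again using bounded latent support.

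\textbf{Step 3 (integrate).} Combining the two steps,
$$\int_{\mathcal K}\|\widehat{\mathbf{x}}^{\mathrm{EB}}-\widehat{\mathbf{x}}^*\|^2\,\widetilde p\le \frac{2\sup\widetilde p}{c_{\mathcal K}^2}\,(R^2+R^2)\,\|\Delta\|_2^2.$$
This is $O_p(\|\Delta\|_2^2)$, and here it even holds deterministically given $\widehat{\boldsymbol\theta}_N$. Squaring the rates of Theorem~\ref{th:l2rate} then yields Case A and Case B directly.

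\textbf{Main obstacle.} The delicate point is Step 1's uniform bound on $\widehat{\mathbf{x}}^{\mathrm{EB}}$ when the estimated marginal $\widetilde q$ may be tiny somewhere on $\mathcal K$. I would handle it first via the support/convexity argument above. If the sieve does not have compact support (e.g.\ normalizing flows), I would instead show that $\widetilde q\ge c_{\mathcal K}/2$ on $\mathcal K$ with probability tending to one. The argument would bound $\sup|\widetilde q-\widetilde p|\le \|m\|_2\|\Delta\|_2\to0$ in probability via Cauchy--Schwarz, and would require $m\in L_2$. On that event the posterior-mean bound follows, and the complement has vanishing probability, which is compatible with the $O_p$ statement.
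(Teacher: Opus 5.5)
Your proof is correct, and it differs from the paper's at the one delicate point. The paper writes the difference as
\[
\frac{\mathbf{r}_q-\mathbf{r}_p}{\widetilde q}+\widehat{\mathbf{x}}^{*}\,\frac{\widetilde p-\widetilde q}{\widetilde q},
\]
with the \emph{estimated} marginal in the denominator and the \emph{oracle} posterior mean as multiplier. It must therefore show $\widetilde q\ge\rho_0$ on $\mathcal{K}$. It does this via $\|\widetilde q-\widetilde p\|_\infty\le\|m\|_2\|q-p\|_2$, using $\|m\|_2^2\le\|m\|_\infty$, plus consistency from Theorem~\ref{th:l2rate}. Its inequality therefore holds only on an event of probability tending to one. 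It then integrates the pointwise bound $\frac{2M}{\rho_0}(|q-p|*m)$ against $\widetilde p\le\|p\|_\infty$ (Sobolev embedding) and applies Young.

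You instead put the \emph{true} marginal in the denominator and the \emph{EB} posterior mean in the multiplier. Since $\inf_{\mathcal{K}}\widetilde p>0$ is given and $\|\widehat{\mathbf{x}}^{\mathrm{EB}}\|\le R$ follows from the bounded support of $q_{\widehat{\boldsymbol{\theta}}_N}$, your inequality is deterministic given $\widehat{\boldsymbol{\theta}}_N$. It needs no uniform-convergence step and no $m\in L_2$. It even survives points where $\widetilde q=0$: your identity is equivalent to $\widehat{\mathbf{x}}^{\mathrm{EB}}\widetilde q=N_q$, which then holds for any bounded convention. You can also drop $\sup\widetilde p$: the weight $\widetilde p$ cancels one power of the denominator, giving the constant $4R^2/c_{\mathcal{K}}$ with neither Sobolev embedding nor bounded noise.

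The paper's arrangement keeps the multiplier independent of the estimator, at the price of the high-probability lower bound on $\widetilde q$. Both proofs use compact support of $q$ anyway: the paper for $\|\mathbf{r}_q-\mathbf{r}_p\|\le M(|q-p|*m)$, you for $\|N_q-N_p\|_2\le R\|\Delta\|_2$. So yours is the simpler argument under the stated assumptions.

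Two small corrections that do not affect the main line. First, the bounded support of every $q_{\boldsymbol{\theta}}$ is Assumption~\ref{assump:compact-domain}, not Assumption~\ref{assump:compactness}, which only constrains parameters and Sobolev norms. Second, your fallback is invalid: $\|m\|_\infty<\infty$ bounds the noise density, not its support. Gaussian noise has a bounded density with full support, so bounded noise does not confine the posterior to a bounded set.
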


\begin{proof}
For the rest of the proof, we denote $q = q_{\widehat{\boldsymbol{\theta}}_N}$ for brevity. First, we prove that the total Mean Squared Error (MSE) decomposes exactly into the irreducible Bayes risk and the excess empirical Bayes (EB) risk. We write the total error as:
\begin{align*}
\|\mathbf{X} - \widehat{\mathbf{x}}^{\mathrm{EB}}(\widetilde{\mathbf{X}})\|^2 
&= \|\mathbf{X} - \widehat{\mathbf{x}}^{*}(\widetilde{\mathbf{X}}) + \widehat{\mathbf{x}}^{*}(\widetilde{\mathbf{X}}) - \widehat{\mathbf{x}}^{\mathrm{EB}}(\widetilde{\mathbf{X}})\|^2 \\
&= \|\mathbf{X} - \widehat{\mathbf{x}}^{*}(\widetilde{\mathbf{X}})\|^2 + \|\widehat{\mathbf{x}}^{*}(\widetilde{\mathbf{X}}) - \widehat{\mathbf{x}}^{\mathrm{EB}}(\widetilde{\mathbf{X}})\|^2 + 2\langle \mathbf{X} - \widehat{\mathbf{x}}^{*}(\widetilde{\mathbf{X}}),\, \widehat{\mathbf{x}}^{*}(\widetilde{\mathbf{X}}) - \widehat{\mathbf{x}}^{\mathrm{EB}}(\widetilde{\mathbf{X}}) \rangle.
\end{align*}
Taking the expectation over the joint distribution of $(\mathbf{X}, \widetilde{\mathbf{X}})$, we can apply the tower property of expectation to the cross-term by conditioning on $\widetilde{\mathbf{X}}$:
\begin{align*}
& \mathbb{E}_{\mathbf{X}, \widetilde{\mathbf{X}}} \big[ \langle \mathbf{X} - \widehat{\mathbf{x}}^{*}(\widetilde{\mathbf{X}}),\, \widehat{\mathbf{x}}^{*}(\widetilde{\mathbf{X}}) - \widehat{\mathbf{x}}^{\mathrm{EB}}(\widetilde{\mathbf{X}}) \rangle \big] 
= \mathbb{E}_{\widetilde{\mathbf{X}}} \left[ \mathbb{E}_{\mathbf{X} \mid \widetilde{\mathbf{X}}} \big[ \langle \mathbf{X} - \widehat{\mathbf{x}}^{*}(\widetilde{\mathbf{X}}),\, \widehat{\mathbf{x}}^{*}(\widetilde{\mathbf{X}}) - \widehat{\mathbf{x}}^{\mathrm{EB}}(\widetilde{\mathbf{X}}) \rangle \mid \widetilde{\mathbf{X}} \big] \right] \\
&= \mathbb{E}_{\widetilde{\mathbf{X}}} \left[ \langle \underbrace{\mathbb{E}_{\mathbf{X} \mid \widetilde{\mathbf{X}}}[\mathbf{X} \mid \widetilde{\mathbf{X}}] - \widehat{\mathbf{x}}^{*}(\widetilde{\mathbf{X}})}_{=\, \mathbf{0}},\, \widehat{\mathbf{x}}^{*}(\widetilde{\mathbf{X}}) - \widehat{\mathbf{x}}^{\mathrm{EB}}(\widetilde{\mathbf{X}}) \rangle \right] = \mathbf{0},
\end{align*}
where the term vanishes because the oracle estimator $\widehat{\mathbf{x}}^{*}(\widetilde{\mathbf{X}})$ is defined exactly as the true posterior conditional mean $\mathbb{E}_p[\mathbf{X} \mid \widetilde{\mathbf{X}}]$. Thus, the total MSE decomposes perfectly:
\begin{equation}
    \mathbb{E}\big[\|\mathbf{X} - \widehat{\mathbf{x}}^{\mathrm{EB}}(\widetilde{\mathbf{X}})\|^2\big] = \underbrace{\mathbb{E}\big[\|\mathbf{X} - \widehat{\mathbf{x}}^{*}(\widetilde{\mathbf{X}})\|^2\big]}_{\text{Irreducible Bayes Risk}} + \underbrace{\mathbb{E}\big[\|\widehat{\mathbf{x}}^{\mathrm{EB}}(\widetilde{\mathbf{X}}) - \widehat{\mathbf{x}}^{*}(\widetilde{\mathbf{X}})\|^2\big]}_{\text{Excess EB Risk}}.
\end{equation}
Our goal is now to bound the excess EB risk.

By Bayes' rule, the true posterior density is $\pi^{*}(\mathbf{x} \mid \widetilde{\mathbf{x}}) = \frac{p(\mathbf{x}) m(\widetilde{\mathbf{x}}-\mathbf{x})}{\widetilde{p}(\widetilde{\mathbf{x}})}$. The oracle posterior mean, as a function of the noisy observation $\widetilde{\mathbf{x}}$, is:
\begin{equation}
    \widehat{\mathbf{x}}^{\widetilde{\mathbf{x}}} = \int \mathbf{x} \pi^{*}(\mathbf{x} \mid \widetilde{\mathbf{x}}) d\mathbf{x} = \frac{\int \mathbf{x}\,m(\widetilde{\mathbf{x}}-\mathbf{x})\,p(\mathbf{x})\,d\mathbf{x}}{\widetilde{p}(\widetilde{\mathbf{x}})}.
\end{equation}
We define the numerator functional $\mathbf{r}_p(\widetilde{\mathbf{x}}) = \int \mathbf{x}\,m(\widetilde{\mathbf{x}}-\mathbf{x})\,p(\mathbf{x})\,d\mathbf{x}$, making $\widehat{\mathbf{x}}^{\widetilde{\mathbf{x}}} = \frac{\mathbf{r}_p(\widetilde{\mathbf{x}})}{\widetilde{p}(\widetilde{\mathbf{x}})}$. Analogously, the empirical Bayes estimator is $\widehat{\mathbf{x}}^{\mathrm{EB}}(\widetilde{\mathbf{x}}) = \frac{\mathbf{r}_q(\widetilde{\mathbf{x}})}{\widetilde{q}(\widetilde{\mathbf{x}})}$.

We decompose the pointwise difference between the estimators by adding and subtracting $\frac{\mathbf{r}_p(\widetilde{\mathbf{x}})}{\widetilde{q}(\widetilde{\mathbf{x}})}$:
\begin{align}
\widehat{\mathbf{x}}^{\mathrm{EB}}(\widetilde{\mathbf{x}}) - \widehat{\mathbf{x}}^{*}(\widetilde{\mathbf{x}}) 
&= \frac{\mathbf{r}_q(\widetilde{\mathbf{x}})}{\widetilde{q}(\widetilde{\mathbf{x}})} - \frac{\mathbf{r}_p(\widetilde{\mathbf{x}})}{\widetilde{p}(\widetilde{\mathbf{x}})} \nonumber \\
&= \frac{\mathbf{r}_q(\widetilde{\mathbf{x}})}{\widetilde{q}(\widetilde{\mathbf{x}})} - \frac{\mathbf{r}_p(\widetilde{\mathbf{x}})}{\widetilde{q}(\widetilde{\mathbf{x}})} + \frac{\mathbf{r}_p(\widetilde{\mathbf{x}})}{\widetilde{q}(\widetilde{\mathbf{x}})} - \frac{\mathbf{r}_p(\widetilde{\mathbf{x}})}{\widetilde{p}(\widetilde{\mathbf{x}})} \nonumber \\
&= \frac{\mathbf{r}_q(\widetilde{\mathbf{x}}) - \mathbf{r}_p(\widetilde{\mathbf{x}})}{\widetilde{q}(\widetilde{\mathbf{x}})} + \mathbf{r}_p(\widetilde{\mathbf{x}}) \left( \frac{\widetilde{p}(\widetilde{\mathbf{x}}) - \widetilde{q}(\widetilde{\mathbf{x}})}{\widetilde{p}(\widetilde{\mathbf{x}})\widetilde{q}(\widetilde{\mathbf{x}})} \right) \nonumber \\
&= \underbrace{\frac{\mathbf{r}_q(\widetilde{\mathbf{x}}) - \mathbf{r}_p(\widetilde{\mathbf{x}})}{\widetilde{q}(\widetilde{\mathbf{x}})}}_{(A)} + \underbrace{\widehat{\mathbf{x}}^{*}(\widetilde{\mathbf{x}}) \cdot \frac{\widetilde{p}(\widetilde{\mathbf{x}}) - \widetilde{q}(\widetilde{\mathbf{x}})}{\widetilde{q}(\widetilde{\mathbf{x}})}}_{(B)}. \label{eq:AB-decomp-proof}
\end{align}

We bound the numerators of terms $(A)$ and $(B)$ using the assumption that $p$ and $q$ are supported on a compact domain $\mathcal{X}$ where $\|\mathbf{x}\| \le M$. Using the triangle inequality for integrals and the fact that $m(\cdot) \ge 0$:
\begin{align}
\|\mathbf{r}_q(\widetilde{\mathbf{x}}) - \mathbf{r}_p(\widetilde{\mathbf{x}})\| 
&= \left\| \int \mathbf{x}\, m(\widetilde{\mathbf{x}}-\mathbf{x}) (q(\mathbf{x}) - p(\mathbf{x})) d\mathbf{x} \right\| \nonumber \\
&\le \int \|\mathbf{x}\|\, m(\widetilde{\mathbf{x}}-\mathbf{x}) |q(\mathbf{x}) - p(\mathbf{x})| d\mathbf{x} \nonumber \\
&\le M \int m(\widetilde{\mathbf{x}}-\mathbf{x}) |q(\mathbf{x}) - p(\mathbf{x})| d\mathbf{x} \nonumber \\
&= M\,(|q-p|*m)(\widetilde{\mathbf{x}}).
\end{align}
Similarly, the difference in the marginal densities is bounded by:
\begin{align}
|\widetilde{p}(\widetilde{\mathbf{x}}) - \widetilde{q}(\widetilde{\mathbf{x}})| 
&= \left| \int m(\widetilde{\mathbf{x}}-\mathbf{x}) (p(\mathbf{x}) - q(\mathbf{x})) d\mathbf{x} \right| \nonumber \\
&\le \int m(\widetilde{\mathbf{x}}-\mathbf{x}) |p(\mathbf{x}) - q(\mathbf{x})| d\mathbf{x} \nonumber \\
&= (|q-p|*m)(\widetilde{\mathbf{x}}).
\end{align}
Because $\widehat{\mathbf{x}}^{*}(\widetilde{\mathbf{x}})$ is the expectation of $\mathbf{X}$ under the true posterior (a distribution strictly supported on $\mathcal{X}$ where $\|\mathbf{x}\| \le M$), its magnitude is bounded by the maximum norm of the support: $\|\widehat{\mathbf{x}}^{*}(\widetilde{\mathbf{x}})\| \le M$.

Applying the triangle inequality to Equation~\eqref{eq:AB-decomp-proof}:
\begin{align}
\|\widehat{\mathbf{x}}^{\mathrm{EB}}(\widetilde{\mathbf{x}}) - \widehat{\mathbf{x}}^{*}(\widetilde{\mathbf{x}})\|
&\le \frac{\|\mathbf{r}_q(\widetilde{\mathbf{x}}) - \mathbf{r}_p(\widetilde{\mathbf{x}})\|}{\widetilde{q}(\widetilde{\mathbf{x}})} + \|\widehat{\mathbf{x}}^{*}(\widetilde{\mathbf{x}})\| \frac{|\widetilde{p}(\widetilde{\mathbf{x}}) - \widetilde{q}(\widetilde{\mathbf{x}})|}{\widetilde{q}(\widetilde{\mathbf{x}})} \nonumber \\
&\le \frac{1}{\widetilde{q}(\widetilde{\mathbf{x}})} \Big[ M\,(|q-p|*m)(\widetilde{\mathbf{x}}) + M\,(|q-p|*m)(\widetilde{\mathbf{x}}) \Big] \nonumber \\
&= \frac{2M}{\widetilde{q}(\widetilde{\mathbf{x}})}(|q-p|*m)(\widetilde{\mathbf{x}}).
\end{align}

We now bound the denominator $\widetilde{q}(\widetilde{\mathbf{x}})$ using uniform convergence of convolved densities.

% Let $\mathcal{K} \subset \mathbb{R}^d$ be a compact observation domain. Because the true marginal density is strictly positive on $\mathcal{K}$, there exists an absolute constant $\rho_0 > 0$ such that $\inf_{\widetilde{\mathbf{x}} \in \mathcal{K}} \widetilde{p}(\widetilde{\mathbf{x}}) \ge 2\rho_0$. Since $\widetilde{q}$ converges uniformly to $\widetilde{p}$ as $N \to \infty$, we have $\widetilde{q}(\widetilde{\mathbf{x}}) \ge \rho_0$ on $\mathcal{K}$ with high probability. 

Let $\mathcal{K} \subset \mathbb{R}^d$ be a compact observation domain. By the Cauchy-Schwarz inequality, the uniform ($L_\infty$) distance between the marginal densities is explicitly controlled by the $L_2$ error of the latent densities:
\begin{align}
\|\widetilde{q} - \widetilde{p}\|_\infty 
&= \sup_{\widetilde{\mathbf{x}}} \left| \int (q(\mathbf{x}) - p(\mathbf{x})) m(\widetilde{\mathbf{x}} - \mathbf{x}) d\mathbf{x} \right| \nonumber \\
&\le \sup_{\widetilde{\mathbf{x}}} \left( \int |q(\mathbf{x}) - p(\mathbf{x})|^2 d\mathbf{x} \right)^{1/2} \left( \int m(\widetilde{\mathbf{x}} - \mathbf{x})^2 d\mathbf{x} \right)^{1/2} \nonumber \\
&= \|q - p\|_{2} \|m\|_{2}.
\end{align}
By assumption, the noise is a bounded probability density ($\|m\|_\infty < \infty$). Therefore, it is strictly square-integrable, since $\|m\|_{2}^2 \le \|m\|_\infty \|m\|_{1} = \|m\|_\infty < \infty$. Because Theorem~\ref{th:l2rate} establishes that the latent error $\|q_{\widehat{\boldsymbol{\theta}}_N} - p\|_{2} \to 0$ in probability as $N \to \infty$, the bound above mathematically guarantees that the marginal density $\widetilde{q}$ converges uniformly to $\widetilde{p}$ in probability.

Because the true marginal density is strictly positive on $\mathcal{K}$, there exists an absolute constant $\rho_0 > 0$ such that $\inf_{\widetilde{\mathbf{x}} \in \mathcal{K}} \widetilde{p}(\widetilde{\mathbf{x}}) \ge 2\rho_0$. The uniform convergence guarantees that, with probability approaching one for sufficiently large $N$, the estimator satisfies $\widetilde{q}(\widetilde{\mathbf{x}}) \ge \rho_0$ simultaneously for all $\widetilde{\mathbf{x}} \in \mathcal{K}$. 

Applying the triangle inequality to the pointwise difference over $\mathcal{K}$:
\begin{equation}
\|\widehat{\mathbf{x}}^{\mathrm{EB}}(\widetilde{\mathbf{x}}) - \widehat{\mathbf{x}}^*(\widetilde{\mathbf{x}})\| 
\le \frac{\|r_q - r_p\|}{\widetilde{q}} + M \frac{|\widetilde{p} - \widetilde{q}|}{\widetilde{q}} 
\le \frac{2M}{\rho_0}\,(|q-p|*m)(\widetilde{\mathbf{x}}).
\end{equation}
 We square the pointwise bound and integrate over the observation domain $\mathcal{K}$ with respect to the true marginal density $\widetilde{p}$:
\begin{align}
\int_{\mathcal{K}} \|\widehat{\mathbf{x}}^{\mathrm{EB}}(\widetilde{\mathbf{x}}) - \widehat{\mathbf{x}}^*(\widetilde{\mathbf{x}})\|^2 \widetilde{p}(\widetilde{\mathbf{x}}) d\widetilde{\mathbf{x}}
&\le \frac{4M^2}{\rho_0^2} \int_{\mathcal{K}} \big((|q-p|*m)(\widetilde{\mathbf{x}})\big)^2 \widetilde{p}(\widetilde{\mathbf{x}}) d\widetilde{\mathbf{x}}.
\end{align}

To bound the $\widetilde{p}(\widetilde{\mathbf{x}})$ term, we utilize the smoothness of the latent space. By Assumption~\ref{assump:latent-sobolev}, the true latent density $p$ belongs to the fractional Sobolev space $H^\beta(\mathbb{R}^d)$ with smoothness index $\beta > d/2$. Under this condition, the Sobolev Embedding Theorem \citep{adams2003sobolev} guarantees that $H^\beta(\mathbb{R}^d)$ embeds continuously into the space of bounded, continuous functions $L^\infty(\mathbb{R}^d) \cap C(\mathbb{R}^d)$. Consequently, the true latent density has a finite supremum norm: $\|p\|_\infty < \infty$. 

The true marginal density is defined by the convolution $\widetilde{p} = p * m$. We apply Young's convolution inequality \citep{folland1999real}, which states that for any $f \in L^\infty$ and $g \in L_1$, their convolution is globally bounded by $\|f * g\|_\infty \le \|f\|_\infty \|g\|_{1}$. Since the noise distribution $m$ is a valid probability density, its total absolute mass is exactly one ($\|m\|_{1} = \int m(\mathbf{u})d\mathbf{u} = 1$). Therefore, the marginal density is globally bounded across all of $\mathbb{R}^d$ by the peak of the latent density:
\begin{equation}
\widetilde{p}(\widetilde{\mathbf{x}}) \le \|p * m\|_\infty \le \|p\|_\infty \|m\|_{1} = \|p\|_\infty.
\end{equation}
Substituting this uniform upper bound into our integral, and extending the integration domain from $\mathcal{K}$ to the entirety of $\mathbb{R}^d$ (which is strictly valid since the squared integrand is non-negative), we obtain:
\begin{align}
\mathbb{E}_{\widetilde{p}}\!\left[\|\widehat{\mathbf{x}}^{\mathrm{EB}} - \widehat{\mathbf{x}}^*\|^2  \right] 
&\le \frac{4M^2 \|p\|_\infty}{\rho_0^2} \int_{\mathbb{R}^d} \big((|q-p|*m)(\widetilde{\mathbf{x}})\big)^2 d\widetilde{\mathbf{x}} \nonumber \\
&= \frac{4M^2 \|p\|_\infty}{\rho_0^2} \big\||q-p|*m\big\|_{2}^2.
\end{align}

By Young's convolution inequality, $\|f * g\|_2 \le \|f\|_2 \|g\|_1$. Since $\|m\|_1 = 1$:
\begin{equation}
\|(q-p)*m\|_{2} \le \|q-p\|_{2}.
\end{equation}

Thus, the excess empirical Bayes risk over $\mathcal{K}$ is directly bounded by the $L_2$ error of the latent density:
\begin{equation}
\mathbb{E}_{\widetilde{p}}\!\left[\|\widehat{\mathbf{x}}^{\mathrm{EB}} - \widehat{\mathbf{x}}^*\|^2\right] \le \frac{4M^2 \|p\|_\infty}{\rho_0^2} \|q_{\widehat{\boldsymbol{\theta}}_N}-p\|_{2}^{2}.
\end{equation}
Because $M$, $\|p\|_\infty$, and $\rho_0$ are absolute constants independent of $N$, the asymptotic rate is entirely governed by the squared $L_2$ norm. 

Squaring the rates established in Theorem~\ref{th:l2rate}:
\begin{itemize}
    \item \textbf{Ordinary-smooth regime:} $\|q_{\widehat{\boldsymbol{\theta}}_N} - p\|_{2} = O_p\Big(N^{-\frac{\beta}{2\beta + 2\gamma + \nu}}\Big) \implies \text{Excess Risk} = O_p\Big(N^{-\frac{2\beta}{2\beta + 2\gamma + \nu}}\Big)$.
    \item \textbf{Super-smooth regime:} $\|q_{\widehat{\boldsymbol{\theta}}_N} - p\|_{2} = O_p\Big((\log N)^{-\frac{\beta}{\gamma^{*}}}\Big) \implies \text{Excess Risk} = O_p\Big((\log N)^{-\frac{2\beta}{\gamma^{*}}}\Big)$.
\end{itemize}
This completes the proof.
\end{proof}

\section{Ordinary- and Super-smooth Regimes: Definitions and Examples}
\label{app:smooth-examples}

This section provides canonical examples underlying the two regimes in Theorem~\ref{th:l2rate}. These regimes are characterized by the decay of the characteristic function of the noise distribution, which determines the degree of ill-posedness of the deconvolution problem.

\paragraph{Setup.}
Let $\varepsilon \sim m$ denote the additive noise with characteristic function
\[
\phi_m(\mathbf{t}) = \mathbb{E}[e^{i \mathbf{t}^\top \varepsilon}].
\]
The behavior of $|\phi_m(\mathbf{t})|$ as $\|\mathbf{t}\| \to \infty$ governs how much high-frequency information about the latent density is lost.

\paragraph{Ordinary-smooth noise.}
The noise distribution is called \emph{ordinary-smooth} with exponent $\gamma > 0$ if
\[
|\phi_m(\mathbf{t})| \asymp (1 + \|\mathbf{t}\|^2)^{-\gamma/2}.
\]
In this case, the characteristic function decays polynomially, and the inverse problem is mildly ill-posed.

A canonical example is \textbf{Laplace noise}. If $\varepsilon \sim \mathrm{Laplace}(\mathbf{0}, b)$, then
\[
\phi_m(\mathbf{t}) = \frac{1}{1 + b^2 \|\mathbf{t}\|^2},
\]
which corresponds to $\gamma = 2$. More generally, Laplace-type and certain heavy-tailed distributions fall into this class.

Under ordinary-smooth noise and kernel assumptions, Theorem~\ref{th:l2rate} yields the polynomial convergence rate
\[
\|q_{\widehat{\boldsymbol{\theta}}_N} - p\|_2
=
O_p\!\left(
N^{-\frac{\beta}{2\beta + 2\gamma + \nu}}
\right).
\]

\paragraph{Super-smooth noise.}
The noise distribution is called \emph{super-smooth} with exponent $\gamma > 0$ if
\[
|\phi_m(\mathbf{t})| \asymp \exp(-c \|\mathbf{t}\|^\gamma)
\]
for some $c > 0$. In this case, the characteristic function decays exponentially, and the inverse problem is severely ill-posed.

A canonical example is \textbf{Gaussian noise}. If $\varepsilon \sim \mathcal{N}(\mathbf{0}, \sigma^2 \mathbf{I}_d)$, then
\[
\phi_m(\mathbf{t}) = \exp\!\left(-\frac{\sigma^2 \|\mathbf{t}\|^2}{2}\right),
\]
corresponding to $\gamma = 2$. 
In this regime, Theorem~\ref{th:l2rate} gives the logarithmic rate
\[
\|q_{\widehat{\boldsymbol{\theta}}_N} - p\|_2
=
O_p\!\left((\log N)^{-\beta/\gamma^*}\right),
\]
where $\gamma^*$ captures the dominant smoothness between the noise and kernel.

\paragraph{Interpretation.}
These two regimes reflect how aggressively the noise suppresses high-frequency components of the latent density. In the ordinary-smooth case, information is lost gradually, allowing polynomial recovery rates. In the super-smooth case, high-frequency components are exponentially attenuated, making recovery substantially harder and limiting convergence to logarithmic rates. This distinction is classical in nonparametric deconvolution and explains why Gaussian measurement error is significantly more challenging than Laplace noise.

\section{Experiment Details} \label{sec: exp-details}

\subsection{Verifying Rates of Convergence}

\paragraph{Data Generating Process (DGP)}
The true latent distribution $p$ of the scalar variable $X$ is a two-component Laplace mixture:
$$ X_i \sim 0.6 \, \text{Laplace}(\mu=3.0, \sigma=0.8) + 0.4 \, \text{Laplace}(\mu=-2.0, \sigma=1.0) $$
where $\sigma$ denotes the standard deviation (corresponding to a scale parameter $b = \sigma/\sqrt{2}$). 

For the observational noise, we simulate a heteroscedastic setting (consistent with Assumption~\ref{assump:known-noise}). For each sample $i$, a noise standard deviation is drawn uniformly: $\phi_i \sim \mathcal{U}(0.5, 1.0)$. We observe the noisy proxy $\widetilde{X}_i = X_i + U_{X, i}$, evaluating two specific noise types:
\begin{itemize}
    \item \textbf{Ordinary-Smooth (OS):} $U_{X, i} \sim \text{Laplace}(0, b = \phi_i/\sqrt{2})$.
    \item \textbf{Super-Smooth (SS):} $U_{X, i} \sim \mathcal{N}(0, \phi_i^2)$.
\end{itemize}

\paragraph{Theoretical Slope Calculations}
The theoretical slopes plotted in Figure~\ref{fig:rkhs-rate} are derived directly from Theorem~\ref{th:l2rate}. 
\begin{itemize}
    \item OS Case (Laplace): For Laplace distribution, we have polynomial convergence exponent $\frac{-2\beta}{2\beta + 2\gamma + \nu}$ and this theoretical limit evaluates to a slope of $-0.33$ in log-log space. Here, $\beta = 1.5$, $\gamma = 2$ and $\nu = 2$.
    \item SS Case (Gaussian): For Gaussian distribution, the convergence rate degrades to $(\log N)^{-\beta/\gamma^{*}}$. This corresponds to a slope of $-1.5$ in $\log$-$\log\log$ space. Here, $\beta = 1.5$, $\gamma = 2$ and $\nu = 2$.
    \item  Latent MMD (Corollary~\ref{cor:latent}): The squared MMD parametric rate is $O_p(N^{-1})$, yielding a log-log slope of $-1.0$.
\end{itemize}
\paragraph{Sieve Architectures}
We utilized two distinct sieve architectures to demonstrate that our bounds are agnostic to the choice of the underlying well-specified sieve:
\begin{itemize}
    \item  Gaussian Mixture Model (GMM):  The models are initialized via standard EM (scikit-learn) on the noisy data before being fine-tuned via the convMMD loss. We scale the number of components $K \in [12, 16]$ gracefully with $N$.  To ensure stability and differentiability, the covariance matrices are parameterized via their Cholesky factors $L$, where the diagonals are constrained using a Softplus activation. For optimization, we use the reparameterization trick to draw Monte Carlo samples (128 samples per batch) to evaluate the expected convMMD objective.
    \item  Normalizing Flow Model (NFM): We utilize a 1D Rational Quadratic Spline (RQS) flow. To strictly satisfy the finite Sobolev norm assumptions required by our theory, we bound the derivative of the splines using a Sigmoid activation (maximum derivative $= 10.0$). The base distribution is a standard Normal matching the empirical mean and variance of the observations. The flow consists of 4 blocks, 16 spline bins, and a tail bound of $30.0$. The hidden sizes of the conditioner MLPs were scaled between 25 and 128 depending on $N$ to guarantee sufficient sieve capacity.
\end{itemize}
\paragraph{Optimization}
The models are trained to minimize the empirical unbiased U-statistic MMD loss. We use AdamW optimizer \cite{loshchilovdecoupled} with a weight decay of $10^{-4}$ (GMM) or $5\times 10^{-3}$ (NFM). Gradients are clipped to a global norm of $1.0$. For learning rate, we use cosine decay schedule starting at $10^{-3}$ (NFM) or $5 \times 10^{-2}$ (GMM), decaying down to a factor of $10^{-4}$ over the training run. Each $N$ configuration is run over 20 to 100 independent random seeds. Training spans $1000$ to $1500$ epochs using batch sizes ranging from $200$ to $400$.

\paragraph{Results}
Figure~\ref{fig:rkhs-rate} evaluates the theoretical predictions of Corollary~\ref{cor:latent} and Theorem~\ref{th:l2rate}. We consider a one-dimensional density deconvolution problem in which the latent variable follows a two-component Laplace mixture distribution and the observations are generated as
\[
\tilde{X}_i = X_i + \varepsilon_i,
\]
with heteroscedastic additive noise. The convMMD estimator is trained by minimizing the discrepancy between the empirical noisy distribution and the noise-convolved model distribution.

We compare two expressive sieve classes: Gaussian mixture models and normalizing flows parameterized by rational quadratic splines. To probe the two inverse-problem regimes in Theorem~\ref{th:l2rate}, we use Laplace noise with a Laplace kernel for the ordinary-smooth setting and Gaussian noise with a Gaussian kernel for the supersmooth setting. Sample sizes are varied over
\[
N \in \{4000, 8000, 12000, 16000, 24000, 32000\}.
\]

We measure nonparametric recovery using
\[
\mathrm{MISE} \approx \|q_{\widehat{\theta}_N} - p\|_{L_2}^2.
\]
In the ordinary-smooth case, the theory predicts polynomial decay; for the Laplace smoothness parameters used here, the expected log-log slope is approximately $-0.33$. Both GMM and normalizing-flow sieves exhibit the predicted linear trend. In the supersmooth case, the problem is severely ill-posed and the theory predicts logarithmic convergence. The empirical $\log(\mathrm{MISE})$ versus $\log(\log N)$ trends align closely with the predicted slope of $-1.5$.

We also evaluate latent-space convergence through
\[
\mathrm{MMD}^2(p, q_{\widehat{\theta}_N}).
\]
Consistent with Corollary~\ref{cor:latent}, the squared latent MMD decays at the parametric rate $O_p(N^{-1})$ in both ordinary-smooth and supersmooth regimes. Thus, even when $L_2$ recovery is slowed by supersmooth noise, the RKHS metric retains fast parametric convergence. 

\subsection{Denoising in presence of outliers} \label{sec:2d_denoising_details}
\paragraph{Data Generating Process and Outlier Corruption}
For each of the three topologies (Two Moons, Circle, Checkerboard), we generate $N=2000$ latent samples $\mathbf{x}_i \in \mathbb{R}^2$. To evaluate robustness, we randomly select $3\%$ of the latent samples and displace them using a large uniform offset (e.g., $\mathcal{U}(-4.0, 4.0)$). This creates a set of "true" outliers that exist far outside the primary manifold.

We generate independent per-sample noise standard deviations $\phi_i \sim \mathcal{U}(a, b)$. For the Moon and Checkerboard datasets, we use noise bounds $(a, b) = (0.2, 0.6)$. Because the Circle has a larger geometric footprint, we increase the noise bounds to $(a,b) = (0.5, 1.0)$. The final observed data is generated as $\widetilde{\mathbf{x}}_i = \mathbf{x}_i + \mathbf{u}_{x,i}$, where $\mathbf{u}_{x,i} \sim \mathcal{N}(\mathbf{0}, \phi_i^2 \mathbf{I}_2)$.

\paragraph{Method Configuration}

\subparagraph{convMMD\_NF}
We utilize a Rational Quadratic Spline (RQS) Normalizing Flow as the sieve model \citep{durkan2019neural}. The flow consists of 4 coupling blocks, with 16 spline bins per block, and conditioner MLPs with a hidden dimension of 32. The tail bound is set to $2.0$. The base distribution is a standard Gaussian shifted and scaled to match the empirical mean and variance of the observations.
The model is trained for 3000 epochs. We use the AdamW optimizer with a learning rate of $5 \times 10^{-2}$ and a cosine decay schedule. Kernel bandwidths are chosen automatically using the median heuristic \citep{schrab2025practical,biggs2023mmd}. We use multiple bandwidths and select bandwidth using the formula $h = \sqrt{Q_q} / (\log_{10}(2n) \sqrt{2})$, where $Q_q$ are the quantiles of the pairwise distances between observed noisy samples and model generated noisy samples. Here, $q=[0.1, 0.2, 0.3, 0.4, 0.5, 0.6, 0.7, 0.8, 0.9]$. Once trained, we draw $M = 30,000$ samples from the learned prior $q_{\widehat{\theta}_N}$ and compute the posterior mean for each observation via Monte Carlo integration:
$$ \widehat{\mathbf{x}}_i^{\mathrm{EB}} = \mathbb{E}_{q_{\widehat{\boldsymbol{\theta}}_N}}[\mathbf{x} \mid \widetilde{\mathbf{x}}_i] \approx \sum_{m=1}^M w_{i,m} \mathbf{y}_m, \quad \text{where} \quad w_{i,m} \propto \exp\left(-\frac{\|\widetilde{\mathbf{x}}_i - \mathbf{y}_m\|^2}{2\phi_i^2}\right), $$

\subparagraph{NPEB}
We use the Nonparametric Empirical Bayes framework via the `npeb` package. This method computes the Nonparametric Maximum Likelihood Estimator (NPMLE) using the `GLMixture` solver. The algorithm places a fine grid over the observational space and optimizes the mixture weights using convex optimization (via CVXPY). We use diagonal precision matrices parameterized by the known heteroscedastic noise variances. The denoised estimates are then extracted using the exact discrete posterior mean (GMLEB).

\subparagraph{XDGMM}
Extreme Deconvolution (XDGMM) fits a continuous Gaussian Mixture Model to data with known heteroscedastic measurement noise. To ensure a fair comparison and avoid bottlenecks, we implemented a custom JAX-accelerated, JIT-compiled version of the XDGMM Expectation-Maximization algorithm. We set the number of Gaussian components to $K=10$ and regularize the covariances by $10^{-6}$ to prevent singularity collapses caused by the outliers. The model is trained for 100 EM iterations (or until convergence with a tolerance of $10^{-5}$). Once fit, we calculate the exact closed-form Gaussian posterior mean for each observation. 

\paragraph{Evaluation}
Performance is measured using the standard Mean Squared Error (MSE) between the denoised estimates $\widehat{\mathbf{x}}_i^{\mathrm{EB}}$ and the clean latent truths $\mathbf{x}_i$ (excluding the outliers):
$$ \text{MSE} = \frac{1}{N} \sum_{i=1}^N \|\widehat{\mathbf{x}}_i^{\mathrm{EB}} - \mathbf{x}_i\|^2 $$
The MSE results presented in Figure~\ref{fig:2d-exp} are representative of the average performance across 5 independent random seeds.

\subsubsection{Kernel Bandwidth Sensitivity}\label{sec:bw_sensivity}

In this experiment, we use convMMD with Normalizing Flow. We use a multi-scale Laplace kernel with bandwidths $\{{\sigma_j}\}_{j=1}^9$ selected via the median heuristic: pairwise distances between observed noisy samples and model generated noisy samples are pooled, nine quantiles of their empirical distribution are extracted, and each $\sigma_j$ is set proportionally to the square root of the corresponding quantile, normalised by $\log_{10}(2n)\sqrt{2}$. This heuristic provides user with a principled data-driven default. Here, we verify the sensitivity of the learned prior and downstream denoising performance on bandwidth choice. 

\subparagraph{Setup.} We hold the full training procedure fixed and vary only a single multiplicative scale factor $s \in {0.10, 0.25, 0.50, 0.75, 1.00, 1.50, 2.00, 3.00, 5.00}$ applied uniformly to all nine bandwidths, i.e.\ $\tilde{\sigma}_j = s \cdot \sigma_j^{\text{heuristic}}$. The value $s = 1$ recovers the default heuristic. For each $(s, \text{dataset}, \text{seed})$ triplet we train a fresh normalizing flow from the same initialization using the scaled bandwidths and evaluate on the held-out test set of 2000 points. Base bandwidths are computed once per seed before training of the model. All other hyperparameters (architecture, learning rate, number of gradient steps) match the main experiment. Results are averaged over five seeds and we report mean $\pm$ standard deviation.
As a bandwidth-independent reference, we include the Oracle baseline (KDE fitted on a large clean sample with no outliers), whose MSE and W1 do not depend on $s$ and are shown as horizontal dotted lines in Figure~\ref{fig:bandwidth-sens-1}.
\subparagraph{Results.} Figure~\ref{fig:bandwidth-sens-1} plots test MSE (left) and prior W1 distance (right) as a function of $s$ for all three datasets, with shaded $\pm$1 std bands across seeds. We see that performance is remarkably stable for $s \in [0.25, 2.00]$, covering nearly an order of magnitude around the heuristic default. Both MSE and W1 remain within one standard deviation of the $s = 1$ value across this range, indicating that convMMD is not sensitive to moderate misspecification of the bandwidth. At $s = 0.10$ (very narrow kernels) the loss becomes dominated by near-zero-distance interactions and the prior loses its ability to capture global structure, causing an increase in W1. At $s = 5.00$ (very wide kernels) the kernel smooths over local geometry and W1 increases similarly, though the degradation is more gradual. 
These results justify using the median heuristic as a fixed, tuning-free component of the conMMD pipeline. 

\subsection{Deconvolution Performance across Dimensions}

\label{app:deconv_with_dim}
\paragraph{Dataset: Tangled Ribbon Manifold}

The latent distribution is supported on a two-dimensional spiral manifold
embedded in $\mathbb{R}^D$.  A scalar parameter $t \sim \mathcal{U}(1.5\pi,
4.5\pi)$ parameterises the base spiral via
\begin{equation}
    \mathbf{x}_{\text{base}} = \Bigl(\tfrac{t\cos t}{t_{\max}},\;
    \tfrac{t\sin t}{t_{\max}}\Bigr) \in [-1,1]^2,
\end{equation}
padded with $D-2$ zeros to form a $D$-dimensional vector, and then multiplied
by a fixed random orthogonal matrix $Q \in \mathbb{R}^{D \times D}$ (drawn once per seed via QR decomposition of a standard Gaussian matrix).  This construction
ensures the manifold is genuinely embedded in $\mathbb{R}^D$ without being
axis-aligned, and that the intrinsic dimension remains 2 regardless of $D$. For each run, we generate $N_{\text{total}} = 12\,000$ samples: the first 2\,000 noisy observations $\{\mathbf{y}_i\}$ form the training set, and the remaining 10\,000 clean latent samples $\{\mathbf{x}_i\}$ form the held-out evaluation set used exclusively
for computing SWD.

\paragraph{Noise model.}
Each observation in the high-dimensional experiments is similarly corrupted by heteroscedastic isotropic Gaussian noise:
\begin{equation}
    \widetilde{\mathbf{x}}_i = \mathbf{x}_i + \mathbf{u}_{x,i}, \qquad
    \mathbf{u}_{x,i} \sim \mathcal{N}(\mathbf{0},\, \phi_i^2 \mathbf{I}_d),
    \qquad
    \phi_i \sim \mathcal{U}(0.085,\, 0.127),
\end{equation}
giving a population mean of $\bar{\phi} = 0.106$. 

Each method is run under three noise-knowledge conditions, exercising both the homoscedastic/heteroscedastic axis and the correctly-specified/misspecified axis:
\begin{itemize}
    \item \textbf{Well Specified:} Each method receives the exact per-sample
    $\phi_i$ used to generate the data.

    \item \textbf{Homoscedastic Misspecified:} Each method receives the constant $\widehat{\phi} = \bar{\phi} = 0.106$, equal to the true
    population mean. This condition tests whether discarding
    within-sample variance information is harmful.
    
    \item \textbf{Heteroscedastic Misspecified:} Each method receives noisy
    per-sample estimates
    \begin{equation}
        \widehat{\phi}_i = \phi_i \exp\!\bigl(e_i -
        \tfrac{\delta^2}{2}\bigr), \quad
        e_i \sim \mathcal{N}(0,\delta^2),
    \end{equation}
    where $\delta = \alpha\sqrt{\pi/2}$ is chosen so that the expected
    relative error satisfies
    $\mathbb{E}[|\widehat{\phi}_i/\phi_i - 1|] \approx \alpha = 0.25$.
    The log-normal centering ensures $\mathbb{E}[\widehat{\phi}_i] =
    \phi_i$ (mean-unbiased estimates).
\end{itemize}

The misspecified estimates $\{\widehat{\phi}_i\}$ are generated once per trial in the experiment from a deterministic fold of the data key to ensure all four methods see identical estimates within a trial.
\paragraph{Evaluation Metric}

All methods are evaluated with the \emph{sliced 1-Wasserstein distance} (SWD) between 10\,000 samples drawn from the fitted model and the 10\,000 held-out clean latent samples.
approximated with $P = 1000$ uniformly random projection directions,
using the \texttt{ott-jax} implementation with Euclidean ground cost.
Exact $W_2$ distances are avoided because their sample complexity scales
as $\mathcal{O}(n^{-1/D})$, making them impractical for $D \geq 7$.

\paragraph{Method Configurations}

\subparagraph{convMMD.}
The normalizing flow uses 6 coupling blocks, each consisting of a
Rational Quadratic Spline (RQS) coupling layer 
interleaved with an invertible linear layer (LU decomposition).  Each
coupling network has hidden size 1024 and 16 spline bins; the
tail bound is 1.0.  The base distribution mean and standard deviation
are initialised to the empirical mean and std of the training batch.
Training uses AdamW with cosine learning-rate decay from $5\times10^{-4}$
to $5\times10^{-6}$, gradient clipping at norm 1.0, weight decay $10^{-4}$,
and a full-batch size of 2\,000 for 3\,000 epochs.
The convMMD kernel bandwidth is set via a data-driven heuristic: nine
quantiles ($0.1, \ldots, 0.9$) of pairwise distances between training
observations and dummy model samples are computed, and the corresponding Laplace kernel widths are used in the convMMD loss.
\subparagraph{XDGMM \cite{bovy2011extreme}.}
We use a JAX implementation of Extreme Deconvolution
with $K = 25$ Gaussian components, EM run for up to 100 iterations
(tolerance $10^{-5}$), and a covariance regularisation of $10^{-6}\mathbf{I}$.
Components are initialised with a 10-iteration scikit-learn GMM fit to the observed (noisy) training data.  

\subparagraph{NPEB}
We use the NPEB package \citep{soloff2025multivariate} to fit the model. The package uses the Kiefer--Wolfowitz nonparametric maximum likelihood
estimator for Gaussian location mixtures, with
support atoms initialised by subsampling 2000 points from the training
data. 
\paragraph{deconv}
We use the stochastic variational inference normalizing flow from
\citep{dockhorn2020density}
with 5 flow steps, a batch size of 2000, Adam with learning rate
$5\times10^{-4}$, and 200 training epochs.  Noise covariances are passed as per-sample full diagonal matrices.

\subsection{Denoising MNIST Images}
\paragraph{Dataset: MNIST Images}
We utilize the MNIST dataset in this experiment. In each experimental run, models are trained utilizing the unlabelled noisy training data consisting of $N_{\text{train}} = 60\,000$ images. Quantitative evaluations are performed on a held-out test set consisting of $N_{\text{test}} = 1\,000$ images over 5 runs.  For $\text{convMMD}\_{GAN}$, images are maintained at their native $28 \times 28$ resolution ($D = 784$) and flattened to $D$-dimensional vectors. In $\text{convMMD}\_{GAN*}$, images are zero-padded to $32 \times 32$ ($D = 1024$). All image pixels are normalized to the range $[0, 1]$. 

\paragraph{Noise model.}
Each clean observation $\mathbf{x}_i$ is corrupted by an additive measurement error $\mathbf{u}_i$:
\begin{equation}
    \widetilde{\mathbf{x}}_i = \mathbf{x}_i + \mathbf{u}_i.
\end{equation}
We evaluate the models under two distinct measurement error regimes:
\begin{itemize}
    \item \textbf{Additive White Gaussian Noise (AWGN):} Independent pixel-wise errors drawn as $\mathbf{u}_i \sim \mathcal{N}(\mathbf{0}, \sigma^2 \mathbf{I})$ with a base standard deviation of $\sigma = 0.5$.
  
\item \textbf{Spatially Correlated AR(1) Noise:} Here, we generate stationary Autoregressive (AR(1)) Gaussian noise using exact spectral sampling. A white noise tensor is transformed to the frequency domain via 2D Fast Fourier Transform ($\mathcal{F}$), modulated by the square root of the power spectral density (PSD), and projected back to the spatial domain: 
\begin{equation}
    \mathbf{u}_i = \mathcal{F}^{-1} \Bigl( \sqrt{\mathbf{S}} \odot \mathcal{F}(\mathbf{W}) \Bigr),
\end{equation}
where $\mathbf{W} \sim \mathcal{N}(\mathbf{0}, \mathbf{I})$ is a white noise tensor, and $\odot$ denotes element-wise multiplication. The matrix $\mathbf{S}$ represents the discrete PSD evaluated at spatial frequencies $(\omega_x, \omega_y)$, given by:
\begin{equation}
    S(\omega_x, \omega_y) \propto \frac{1}{(1 - 2\rho \cos \omega_x + \rho^2)(1 - 2\rho \cos \omega_y + \rho^2)}.
\end{equation}
Here, $\rho \in [0, 1)$ is the spatial correlation parameter. To guarantee a strictly controlled comparison across varying correlation levels $\rho$, we apply exact per-sample normalization. Each synthesized noise sample $\mathbf{u}_i$ is scaled such that its sample standard deviation exactly equals the target $\sigma$ prior to addition with the clean image.
\end{itemize}

\paragraph{Method Configurations}

\subparagraph{convMMD-GAN (Fixed Kernel).}
This model parameterizes the prior using a Deep Convolutional Generator (DCGAN) mapping a latent vector $\mathbf{z} \sim \mathcal{N}(\mathbf{0}, \mathbf{I}_{10})$ to $\mathbb{R}^{784}$. The network comprises a dense layer mapping to a $256 \times 7 \times 7$ tensor, followed by a sequence of \texttt{ConvTranspose2d} blocks with SELU/ReLU activations and Batch Normalization, ending with a \texttt{Sigmoid} output. The model is trained using the convMMD objective matching the simulated noise-convolved samples against the observed noisy images. For this experiment, we utilize a mixture of Gaussian (RBF) kernels with bandwidths selected using median heuristic used in Section \ref{sec:2d_denoising_details}. Optimization is performed using Adam (learning rate $5\times 10^{-4}$, $\beta_1=0.5, \beta_2=0.999$) with gradient clipping at norm $10.0$ and a MultiStepLR scheduler (halving at iterations 15\,000 and 30\,000) for 300 epochs.

\subparagraph{convMMD-GAN* (Adversarial Kernel).}
This variant uses MMD-GAN minimax framework \citep{li2017mmd} for optimization.  The generator maps $\mathbf{z} \in \mathbb{R}^{10}$ to a $32 \times 32$ image. A critic network, structured as a multi-layer Convolutional Autoencoder, extracts feature representations of the data. The convMMD objective is evaluated in the critic's feature space using RBF bandwidths $\sigma_k \in \{1, 2, 4, 8, 16\}$. The network is trained using a Rank Hinge Loss combined with an Autoencoder $L_2$ reconstruction penalty (weights $\lambda_{\text{rg}}=16.0, \lambda_{\text{AE}}=8.0$) to prevent representation collapse. Both generator and critic are optimized using RMSprop (learning rate $5\times 10^{-5}$) for 250 epochs, with 5 critic updates per generator update.

\subparagraph{Noise2Self.}
We implement the self-supervised blind-spot network utilizing a fully convolutional U-Net architecture \citep{batson2019noise2self}. We enforce the blind-spot property using $J$-invariant masking strategy. During each training iteration, the input tensor is interpolated using a mask width of 4. The network is trained using Mean Squared Error (MSE) computed strictly over the masked pixels, preventing the learning of an identity mapping. The model is optimized using Adam (learning rate $10^{-3}$) for 50 epochs.

\subparagraph{SURE.}
Stein's Unbiased Risk Estimate (SURE) \citep{soltanayev2018training} provides an unsupervised loss function that requires no clean targets and a known noise variance. We parameterize the denoising model using a fully convolutional autoencoder. The encoder consists of two \texttt{Conv2d} layers (kernel size 3, stride 2, 10 channels), and the decoder symmetrically utilizes two \texttt{ConvTranspose2d} layers. All weights are initialized via Xavier Uniform, and all activations are \texttt{Sigmoid}. The SURE loss divergence term is approximated via Monte Carlo SURE (MC-SURE) using Gaussian perturbation vectors with $\epsilon = 10^{-3}$. The network is trained via Adam (learning rate $10^{-3}$) for 30 epochs.

\subparagraph{BUIFD.}
The Blind Unsupervised Image Filtering (BUIFD) architecture \citep{el2020blind} serves as our empirical upper bound. It consists of three highly parameterized parallel modules: (1) A 20-layer \texttt{DnCNN} feature extractor (64 channels per layer with \texttt{Conv2d}, \texttt{BatchNorm}, and \texttt{ReLU}) to estimate the additive noise. (2) A 6-layer \texttt{NoiseCNN} terminating in a \texttt{Sigmoid} activation to spatially estimate the standard deviation noise map. (3) A \texttt{FinalFusion} module consisting of three dilated convolutional layers that concatenates the noisy input, the estimated prior (input minus estimated noise), the estimated noise map, and their interactions to produce the final denoised image. The model is trained in a fully supervised configuration using MSE against exact ground-truth noise residuals for 50 epochs via Adam (learning rate $10^{-3}$).

\paragraph{Empirical Bayes Inference (convMMD variants)}
For both convMMD models, denoising is separated entirely from prior estimation. Once the generative models are trained, denoising is executed via self-normalized Importance Sampling to evaluate the posterior mean $\mathbb{E}[\mathbf{x} \mid \widetilde{\mathbf{x}}]$. For each noisy test image, we draw $S = 100\,000$ latent samples $\mathbf{z}_s$ and generate their corresponding clean candidate images $\mathbf{x}_s$. Candidate weights are assigned proportional to the exact known noise likelihood $m(\widetilde{\mathbf{x}} - \mathbf{x}_s)$. 

For the spatially correlated AR(1) noise condition, direct spatial evaluation of the dense $D \times D$ covariance likelihood is computationally intractable. We resolve this by computing the log-likelihood in the frequency domain. Utilizing Parseval's theorem, we take the 2D FFT of the spatial residual $\widetilde{\mathbf{x}} - \mathbf{x}_s$ and weight the squared magnitude by the exact inverse AR(1) PSD.

\newpage

\end{document}

%% file: sections/introduction.tex
\section{Introduction}

Modern scientific inference increasingly relies on measurements that are only corrupted proxies for the latent quantities of interest, so ignoring measurement error can cause an analysis to target the observed surrogate rather than the scientific object itself \citep{carroll2006measurement}. In survey astronomy, for example, stellar mass, luminosity, and weak-lensing shear must be inferred from observations degraded by instrumental noise, selection effects, projection distortions, and model-dependent pipelines \citep{starck2002deconvolution, sok2022finite, anbajagane2025decade}. These distortions can change scientific conclusions: small biases in recovered weak-lensing shape distributions can propagate into biased cosmological inference \citep{mandelbaum2018weak}, while noisy stellar-mass estimates can induce Eddington bias, scattering more moderate-mass galaxies into high-mass bins than the reverse and altering the inferred abundance and high-mass slope of the galaxy stellar mass function \citep{obreschkow2018eddington}. More broadly, measurement error is pervasive and has contributed to reproducibility failures across scientific domains \citep{loken2017measurement}. Thus, deconvolution is not merely a technical correction but a prerequisite for aligning statistical learning with the latent distributions that science seeks to understand. 

This population-level perspective naturally leads to a closely related and equally important problem: \textit{denoising}. When the scientific target is to unravel information about individual objects, such as a particular image, spectrum, galaxy, experimental sample, or data point, it is not enough to recover the latent distribution alone \citep[e.g.,][]{richardson1972bayesian, lucy1974iterative, pu2016deep, puetter2005digital, dockhorn2020density, sedaghat2021machines, wang2019data, zhang2023hyperspectral}. One must also infer the latent realization underlying each corrupted observation. Denoising, therefore, complements deconvolution by moving from distributional recovery to object-level recovery, enabling scientific interpretation and downstream inference at the level of individual measurements.

In this work, we study this problem in a multivariate, potentially high-dimensional regime under the additive noise model $\widetilde{\mathbf{X}} = \mathbf{X} + \mathbf{U}$,
where $\mathbf{X}$ is a latent variable, $\mathbf{U}$ is measurement noise, and only noisy observations $\widetilde{\mathbf{X}}$ are available \citep{carroll1988optimal,liu1989consistent,stefanski1990deconvolving, fan1991optimal, delaigle2008density}. 
This setting induces two coupled tasks: \emph{density deconvolution}, which seeks to recover the distribution of $\mathbf{X}$, and \emph{denoising}, which seeks to estimate latent realizations $\mathbf{X}_{i}$ from individual observations $\widetilde{\mathbf{X}}_{i}$. Both are ill-posed, and their difficulty depends strongly on the noise distribution, with high-frequency structure often the hardest to recover \citep{carroll1988optimal, fan1991optimal}.

Classical deconvolution methods often rely on Fourier inversion, kernel estimators, orthogonal and trigonometric series expansions \citep{carroll1988optimal, liu1989consistent, stefanski1990deconvolving, fan1991optimal, delaigle2008density, hall2005discrete}. % 
While these approaches offer a mathematically rigorous foundation, illuminating phenomena such as the phase transition between ordinary and super-smooth distributions, they struggle in practice. 
Specifically, they often require meticulous regularization, scale poorly with increasing dimensions, and fail to leverage the rich, learned representations available in data. 
Parametric models, though tractable, remain susceptible to performance degradation under misspecification. 
Consequently, bridging the gap between classical deconvolution theory and high-capacity generative modeling remains an open challenge.

Recent progress in deep generative models offers a route to close this gap. Normalizing flows, variational models, diffusion-based methods, and related approaches can represent highly structured distributions in spaces where classical density estimators struggle \citep{rezende2015variational, kingma2018glow, papamakarios2021normalizing, lu2025stochastic}. Several recent methods bring such models to  deconvolution: \citep{dockhorn2020density} use stochastic variational inference with normalizing flows and \citep{lu2025stochastic} trains diffusion models via stochastic forward--backward iteration on noisy data. Yet most training objectives assume clean samples or absorb corruption into the learned distribution, and likelihood-based training under known noise can become intractable when the noisy-data likelihood requires high-dimensional integration or unstable inversion. The central question is therefore how to train expressive generative models directly on noisy observations while preserving statistical interpretability and avoiding explicit deconvolution. Closely related multivariate baselines include NPMLE \citep{soloff2025multivariate} and Extreme Deconvolution \citep{bovy2011extreme}. 

We address this question through a simulation-based, likelihood-free framework based on \emph{Convolutional Maximum Mean Discrepancy} \citep[convMMD,][]{convMMD}. When the corruption mechanism is known, we compare the empirical distribution of the noisy data with that obtained by convolving a candidate latent model with the same noise law. This avoids direct inversion and bypasses evaluation of a deconvolved likelihood, replacing unstable spectral inversion with convolved distribution matching in observation space.
Our method has two stages. We first learn a latent density model $q_{\boldsymbol{\theta}}$ by minimizing an empirical convMMD criterion between the noisy observations and the noise-convolved model. This objective requires only sampling from the latent model and simulator of the corruption mechanism, making it compatible with stochastic gradient optimization and flexible generator families. We then use the fitted latent model as an empirical prior and perform posterior inference for each observation, yielding an \emph{Empirical Bayes} denoiser \cite{robbins1992empirical, efron2019bayes}. A single learned model therefore supports both latent density estimation and pointwise recovery. 

Our contributions are both methodological and theoretical. Methodologically, we cast nonparametric deconvolution and denoising as a unified generative learning problem, minimizing the convMMD objective over flexible sieve classes like Gaussian mixtures and neural generative models. Theoretically, we extend prior parametric work \citep{convMMD} to the nonparametric regime. We establish finite-sample oracle inequalities, proving that the estimator achieves a fast parametric $\mathcal{O}(N^{-1/2})$ convergence rate in the latent reproducing kernel Hilbert space (RKHS) geometry. Furthermore, we prove nonparametric $L_2$ convergence rates for the latent density across both ordinary- and super-smooth noise regimes. Finally, we connect these results to downstream inference by linking latent density estimation error to posterior recovery risk.

%% file: sections/theory.tex
\vspace{-5mm}
\section{Method} \label{sec:method}
\vspace{-3mm}
We propose a unified framework for nonparametric density deconvolution and empirical Bayes denoising under additive measurement error. The method fits a flexible generative model for the latent density by matching noisy observations to the model-implied noise-convolved distribution, then uses the fitted density as an empirical prior for denoising. We formalize the inverse problem, introduce a simulation-based convMMD estimator over growing generative sieves, prove finite-sample and asymptotic guarantees, and demonstrate the approach with Gaussian mixtures and generative models. 

\subsection{Problem Setup: Deconvolution and Denoising}
\label{subsec:setup}

%We propose a unified framework for nonparametric density deconvolution and empirical Bayes denoising under additive measurement error. The method estimates the latent density by matching the empirical distribution of the noisy observations to the corresponding noise-convolved distribution induced by a flexible generative model, then uses the fitted latent density as an empirical prior for denoising. We formalize the inverse problem, define a simulation-based convMMD estimator over a growing sieve of generative models, establish finite-sample and asymptotic guarantees, and illustrate the framework with Gaussian mixture models and normalizing flows. 

%\subsection{Problem Setup: Deconvolution and Denoising}

Let $\mathbf{X} \in \mathcal{X} \subseteq \mathbb{R}^d$ be an unobserved latent random variable with unknown density $p$. We observe only a noisy proxy
$$
    \widetilde{\mathbf{X}} = \mathbf{X} + \mathbf{U}_X,
$$
where $\mathbf{U}_X$ is an additive measurement error with a known distribution $m$. We denote the distribution of the noisy observation by $\widetilde{p}$, so that
$
    \widetilde{p} = p * m.
$
Given i.i.d.\ noisy observations
$
    \mathcal{D}_N = \{\widetilde{\mathbf{x}}_i\}_{i=1}^N \stackrel{\mathrm{iid}}{\sim} \widetilde{p},
$
our objective is two-fold. The first goal is \emph{deconvolution}: estimate the unknown latent density $p$ from the noisy sample $\mathcal{D}_N$. We do so by introducing a model family $\{q_{\boldsymbol{\theta}} : \boldsymbol{\theta} \in \Theta_J\}$ and learning $q_{\boldsymbol{\theta}}$ so that its noise-convolved distribution $\widetilde{q}_{\boldsymbol{\theta}} = q_{\boldsymbol{\theta}} * m$ matches the observed noisy law. The second goal is \emph{denoising}: once a latent density estimate has been learned, recover the latent value associated with a particular noisy observation $\widetilde{\mathbf{x}}_i$. Under the learned model $q_{\widehat{\boldsymbol{\theta}}}$, the posterior density of the latent signal is
$$
    \pi_{\widehat{\boldsymbol{\theta}}}(\mathbf{x} \mid \widetilde{\mathbf{x}}_i)
    \propto
    m(\widetilde{\mathbf{x}}_i - \mathbf{x})\, q_{\widehat{\boldsymbol{\theta}}}(\mathbf{x}),
$$
and the denoised estimate is taken to be the posterior mean
$$
    \widehat{\mathbf{x}}_i
    =
    \mathbb{E}_{q_{\widehat{\boldsymbol{\theta}}}}[\mathbf{X} \mid \widetilde{\mathbf{X}}=\widetilde{\mathbf{x}}_i].
$$
Thus, density estimation and denoising are treated within a single framework: first estimate the latent population distribution, then use that estimate to regularize pointwise inversion.

\subsection{convMMD Estimation over Sieve Classes}
\label{subsec:convmmd}

Let $k$ be a bounded positive definite kernel on $\mathbb{R}^d$. For a candidate latent density $q_{\boldsymbol{\theta}}$, we define its induced noisy distribution by $q_{\boldsymbol{\theta}} * m$. We estimate $\boldsymbol{\theta}$ by minimizing the MMD \citep[][]{gretton12a} between the empirical noisy distribution and the model-implied noisy distribution:
\begin{equation}
\label{eq:estimator}
\widehat{\boldsymbol{\theta}}_N
=
\arg\min_{\boldsymbol{\theta} \in \Theta_J}
\widehat{\operatorname{convMMD}}_k^2(p,q_{\boldsymbol{\theta}},m)
\equiv
\arg\min_{\boldsymbol{\theta} \in \Theta_J}
\operatorname{MMD}_k^2\!\left(\widehat{(p*m)}_N, q_{\boldsymbol{\theta}} * m\right).
\end{equation}
With $\widetilde{\mathbf{Y}} \sim q_{\boldsymbol{\theta}} * m$, the corresponding population objective takes the form
{\footnotesize
\begin{align}
    L_N(\boldsymbol{\theta})
    &\equiv
    \operatorname{MMD}_k^2\!\left(\widehat{(p*m)}_N, q_{\boldsymbol{\theta}} * m\right)
    =
    \frac{1}{N^2}\sum_{i,j=1}^N k(\widetilde{\mathbf{x}}_i,\widetilde{\mathbf{x}}_j)
    +
    \mathbb{E}_{\widetilde{\mathbf{Y}},\widetilde{\mathbf{Y}}'}[k(\widetilde{\mathbf{Y}},\widetilde{\mathbf{Y}}')]
    -
    \frac{2}{N}\sum_{i=1}^N \mathbb{E}_{\widetilde{\mathbf{Y}}}[k(\widetilde{\mathbf{x}}_i,\widetilde{\mathbf{Y}})].
\label{eq:objective}
\end{align}}
The objective is likelihood-free and simulation-based. If one can sample $\mathbf{Y} \sim q_{\boldsymbol{\theta}}$ and independently simulate $\mathbf{U}_Y \sim m$, then $\widetilde{\mathbf{Y}}=\mathbf{Y}+\mathbf{U}_Y$ is a draw from $\widetilde{q}_{\boldsymbol{\theta}} = q_{\boldsymbol{\theta}} * m$. So the expectations in Equation~\eqref{eq:objective} can be approximated by Monte Carlo, and an optimization algorithm can then be used to arrive at $\widehat{\boldsymbol{\theta}}_N$. See Appendix~\ref{app:algorithm} for the algorithm.

To move beyond the parametric setting, we place the estimator in a sieve framework. Let
$$
    \mathcal{Q}_J = \{q_{\boldsymbol{\theta}} : \boldsymbol{\theta} \in \Theta_J\},
    \qquad
    \mathcal{Q}_1 \subset \mathcal{Q}_2 \subset \cdots,
$$
where $J$ is a complexity index, and assume that $\cup_{J \ge 1} \mathcal{Q}_J$ is dense in a target class of latent densities. The parameter $J$ can represent, for example, the number of mixture components or the width or depth of a neural architecture. This formulation allows the estimation problem to be analyzed separately at two levels: a general level covering any regular enough simulation-based sieve, and a sharper level requiring quantitative approximation bounds for the sieve in a strong norm such as $L_2$.

\subsection{Empirical Bayes Denoising}
\label{subsec:eb}

Once the latent density estimate $q_{\widehat{\boldsymbol{\theta}}_{N}}$ is obtained, we use it as an empirical prior for denoising. For each noisy observation $\widetilde{\mathbf{x}}_{i}$, the posterior distribution is
\begin{equation}
    \pi_{\widehat{\boldsymbol{\theta}}_N}(\mathbf{x} \mid \widetilde{\mathbf{x}}_i)
    =
    \frac{m(\widetilde{\mathbf{x}}_i-\mathbf{x}) q_{\widehat{\boldsymbol{\theta}}_N}(\mathbf{x})}
    {\int m(\widetilde{\mathbf{x}}_i-\mathbf{z}) q_{\widehat{\boldsymbol{\theta}}_N}(\mathbf{z})\,d\mathbf{z}},
\end{equation}
and the empirical Bayes estimator is the posterior mean
\begin{equation}
    \widehat{\mathbf{x}}_i^{\mathrm{EB}}
    =
    \int \mathbf{x}\, \pi_{\widehat{\boldsymbol{\theta}}_N}(\mathbf{x} \mid \widetilde{\mathbf{x}}_i)\,d\mathbf{x}.
\end{equation}
In practice, if the density $q_{\widehat{\boldsymbol{\theta}}_N}$ can be evaluated, this posterior mean can be computed by numerical integration or MCMC. If the generative model $q_{\widehat{\boldsymbol{\theta}}_N}$ is implicit and does not admit a tractable density (e.g., a generative adversarial network), the posterior mean can still be efficiently approximated via self-normalized importance sampling in the latent space, drawing samples $\mathbf{Z}$ from the base distribution and weighting the corresponding generated outputs by the known noise likelihood.  
\section{Theoretical Guarantees}
\label{subsec:theory-general}
We now state the basic guarantees that hold for any sieve class satisfying the assumptions collected in Appendix~\ref{app:assumptions}. The first theorem gives a finite-sample oracle inequality in convMMD loss. The second states the latent-space equivalence established in \citep{convMMD}. As a result, this method yields a parametric-rate guarantee in the latent RKHS geometry induced by the noise-smoothed kernel.

\begin{theorem}[Finite-sample oracle inequality for sieve-convMMD]
\label{th:oracle}
Under Assumptions~\ref{assump:noise-indep}, \ref{assump:known-noise}, \ref{assump:noise-second-moment}, \ref{assump:kernel}, and \ref{assump:compactness}, 
for every $\delta \in (0,1)$, with probability at least $1-\delta$,
\begin{equation}
\label{eq:oracle-bound}
\operatorname{convMMD}_k\!\left(p,q_{\widehat{\boldsymbol{\theta}}_N},m\right)
\le
\inf_{\boldsymbol{\theta} \in \Theta_J}
\operatorname{convMMD}_k(p,q_{\boldsymbol{\theta}},m)
+
2\sqrt{\frac{K}{N}}
+
2\sqrt{\frac{2K\log(1/\delta)}{N}},
\end{equation}
where $K$ is the kernel bound from Assumption~\ref{assump:kernel}. 
\end{theorem}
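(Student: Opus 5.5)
We work entirely in the RKHS $\mathcal{H}_k$ through mean embeddings. For a distribution $P$ with finite $\mathbb{E}_P\sqrt{k(\mathbf{x},\mathbf{x})}$, define $\mu_P=\mathbb{E}_{\mathbf{x}\sim P}[k(\mathbf{x},\cdot)]\in\mathcal{H}_k$. The kernel bound of Assumption~\ref{assump:kernel} guarantees that this embedding exists for every distribution, including the empirical measure $\widehat{(p*m)}_N$. With this notation, $\operatorname{convMMD}_k(p,q_{\boldsymbol{\theta}},m)=\|\mu_{p*m}-\mu_{q_{\boldsymbol{\theta}}*m}\|_{\mathcal{H}_k}$, and the empirical objective is the squared norm $\|\mu_{\widehat{(p*m)}_N}-\mu_{q_{\boldsymbol{\theta}}*m}\|^2_{\mathcal{H}_k}$.

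Since squaring is monotone on $[0,\infty)$, $\widehat{\boldsymbol{\theta}}_N$ also minimizes the unsquared empirical distance. Assumption~\ref{assump:compactness} ensures the minimizer exists; alternatively, an $\epsilon$-minimizer can be used and $\epsilon\to0$ taken at the end. We then fix an arbitrary $\boldsymbol{\theta}\in\Theta_J$ and apply the triangle inequality twice:
\[
\|\mu_{p*m}-\mu_{q_{\widehat{\boldsymbol{\theta}}_N}*m}\|\le \|\mu_{p*m}-\mu_{\widehat{(p*m)}_N}\|+\|\mu_{\widehat{(p*m)}_N}-\mu_{q_{\boldsymbol{\theta}}*m}\|\le 2\Delta_N+\|\mu_{p*m}-\mu_{q_{\boldsymbol{\theta}}*m}\|,
\]
where $\Delta_N=\|\mu_{p*m}-\mu_{\widehat{(p*m)}_N}\|_{\mathcal{H}_k}$. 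Taking the infimum over $\boldsymbol{\theta}$ reduces the oracle inequality to the bound $\Delta_N\le\sqrt{K/N}+\sqrt{2K\log(1/\delta)/N}$ with probability at least $1-\delta$. This estimation term is deterministic in the sieve, so the sieve complexity $J$ plays no role in it.

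To bound $\Delta_N$, we control its mean and its concentration separately.

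\emph{Mean.} By Jensen, $\mathbb{E}\Delta_N\le(\mathbb{E}\Delta_N^2)^{1/2}$. Expanding the squared norm of the centered average of the i.i.d.\ Hilbert-valued variables $k(\widetilde{\mathbf{x}}_i,\cdot)-\mu_{p*m}$, the cross terms vanish. This gives $\mathbb{E}\Delta_N^2=N^{-1}\mathbb{E}\|k(\widetilde{\mathbf{X}},\cdot)-\mu_{p*m}\|^2\le N^{-1}\mathbb{E}k(\widetilde{\mathbf{X}},\widetilde{\mathbf{X}})\le K/N$.

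\emph{Concentration.} $\Delta_N$ is a function of the $N$ independent observations. Replacing one observation changes it by at most $N^{-1}\|k(\widetilde{\mathbf{x}}_i,\cdot)-k(\widetilde{\mathbf{x}}_i',\cdot)\|\le N^{-1}\cdot 2\sqrt{K}$, using the reproducing property and Cauchy--Schwarz. McDiarmid's inequality with $c_i=2\sqrt{K}/N$ then gives $\mathbb{P}(\Delta_N-\mathbb{E}\Delta_N\ge\epsilon)\le\exp(-N\epsilon^2/(2K))$. Setting the right-hand side to $\delta$ produces exactly the deviation term $\sqrt{2K\log(1/\delta)/N}$, and the factor $2$ in front of $\Delta_N$ yields the constants in \eqref{eq:oracle-bound}.

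The main obstacle is bookkeeping rather than depth. First, the bounded-differences constant must be tracked so the stated constants come out exactly; the crude bound $2\sqrt{K}$ suffices, since $\sqrt{2K}$ would only improve the result. Second, the model embeddings $\mu_{q_{\boldsymbol{\theta}}*m}$ must be well defined uniformly over the sieve. This is where boundedness of $k$ (and translation invariance, which makes $k(\mathbf{x},\mathbf{x})$ constant) is used. The noise assumptions \ref{assump:noise-indep}, \ref{assump:known-noise}, and \ref{assump:noise-second-moment} enter only to guarantee that $p*m$ and $q_{\boldsymbol{\theta}}*m$ are the well-defined laws of the observations and of the simulated data.
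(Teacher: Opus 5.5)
Your proposal is correct and matches the paper's proof step for step: the same mean-embedding reformulation, the triangle inequality combined with the optimality of $\widehat{\boldsymbol{\theta}}_N$ to isolate $2\Delta_N$, Jensen's inequality to get $\mathbb{E}\Delta_N\le\sqrt{K/N}$, and McDiarmid's inequality with $c_i=2\sqrt{K}/N$. Your side remarks, on existence of the minimizer (or an $\epsilon$-minimizer) and on the bounded-differences constant, are harmless refinements that the paper does not spell out.
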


\begin{remark}
Theorem~\ref{th:oracle} shows that the learned model performs, in convMMD loss, almost as well as the best element of the sieve class. This theorem is model-agnostic. It does not require mixtures, flows, or any particular parametric form. It applies to any model family for which $q_{\boldsymbol{\theta}}$ is a valid density, the map $\boldsymbol{\theta} \mapsto q_{\boldsymbol{\theta}} * m$ is well-defined, and optimization is restricted to a regularized compact sieve.
\end{remark}

\begin{theorem}[Latent MMD equivalence, \citealp{convMMD}]
\label{th:equiv}
Suppose Assumptions~\ref{assump:noise-indep}, \ref{assump:known-noise}, \ref{assump:noise-second-moment}, \ref{assump:conv-invertibility}, and \ref{assump:kernel} hold. Let $k(\mathbf{x},\mathbf{y})=k(\mathbf{x}-\mathbf{y})$ be translation invariant. Define the smoothed kernel
$$
    \widetilde{k}(\mathbf{x},\mathbf{y})
    =
    \mathbb{E}_{\mathbf{U}_X,\mathbf{U}_Y \sim m}[k(\mathbf{x}+\mathbf{U}_X,\mathbf{y}+\mathbf{U}_Y)].
$$
Then, for any two latent distributions $p$ and $q$,
\begin{equation}
    \operatorname{convMMD}_k(p,q,m)
    =
    \operatorname{MMD}_{\widetilde{k}}(p,q).
\end{equation}
\end{theorem}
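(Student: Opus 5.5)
The statement to prove is Theorem~\ref{th:equiv}: for a translation-invariant bounded kernel $k$ and the smoothed kernel $\widetilde{k}$, $\operatorname{convMMD}_k(p,q,m)=\operatorname{MMD}_{\widetilde{k}}(p,q)$. I would prove it by expanding both squared discrepancies as kernel expectations and showing the expansions agree term by term. Taking square roots is harmless because both sides are nonnegative.

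First, by definition, $\operatorname{convMMD}_k^2(p,q,m)=\operatorname{MMD}_k^2(p*m,q*m)$. For independent $\widetilde{\mathbf{X}},\widetilde{\mathbf{X}}'\sim p*m$ and $\widetilde{\mathbf{Y}},\widetilde{\mathbf{Y}}'\sim q*m$ this equals
\[
\mathbb{E}[k(\widetilde{\mathbf{X}},\widetilde{\mathbf{X}}')]+\mathbb{E}[k(\widetilde{\mathbf{Y}},\widetilde{\mathbf{Y}}')]-2\,\mathbb{E}[k(\widetilde{\mathbf{X}},\widetilde{\mathbf{Y}})].
\]
Using Assumption~\ref{assump:noise-indep}, I represent each noisy variable as latent plus noise, $\widetilde{\mathbf{X}}=\mathbf{X}+\mathbf{U}_X$ and so on. Here $\mathbf{X},\mathbf{X}'\sim p$, $\mathbf{Y},\mathbf{Y}'\sim q$, and all noise variables are i.i.d.\ $m$, independent of the latents and of each other.

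Next I condition on the latent pair in each term and apply the tower property. For example,
\[
\mathbb{E}[k(\mathbf{X}+\mathbf{U}_X,\mathbf{Y}+\mathbf{U}_Y)]=\mathbb{E}_{\mathbf{X},\mathbf{Y}}\big[\mathbb{E}_{\mathbf{U}_X,\mathbf{U}_Y}[k(\mathbf{X}+\mathbf{U}_X,\mathbf{Y}+\mathbf{U}_Y)]\big]=\mathbb{E}[\widetilde{k}(\mathbf{X},\mathbf{Y})].
\]
Fubini applies because $|k|\le K$ (Assumption~\ref{assump:kernel}), so every integrand is bounded. The within-sample terms need care: the two noise draws attached to $\mathbf{X}$ and $\mathbf{X}'$ are independent, even in the self term. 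This matches $\widetilde{k}$, which uses independent $\mathbf{U}_X,\mathbf{U}_Y$. Summing the three terms gives exactly $\mathbb{E}\widetilde{k}(\mathbf{X},\mathbf{X}')+\mathbb{E}\widetilde{k}(\mathbf{Y},\mathbf{Y}')-2\mathbb{E}\widetilde{k}(\mathbf{X},\mathbf{Y})$.

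The main obstacle is making sure this last expression really is $\operatorname{MMD}^2_{\widetilde{k}}(p,q)$, which requires $\widetilde{k}$ to be a bounded positive definite kernel with a well-defined RKHS.
\begin{itemize}
\item \emph{Boundedness.} $|\widetilde{k}|\le K$ is immediate from $|k|\le K$.
\item \emph{Positive definiteness.} Write $\widetilde{k}(\mathbf{x},\mathbf{y})=\langle \mu_{\delta_\mathbf{x}*m},\mu_{\delta_\mathbf{y}*m}\rangle_{\mathcal{H}_k}$, the inner product of the kernel mean embeddings of the shifted noise laws. An inner-product representation of this form is automatically positive definite.
\item \emph{Stationarity.} By translation invariance, $\widetilde{k}(\mathbf{x},\mathbf{y})=\mathbb{E}[k(\mathbf{x}-\mathbf{y}+\mathbf{U}_X-\mathbf{U}_Y)]$, so $\widetilde{k}$ is again stationary. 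In the Fourier domain its spectral measure is $|\phi_m|^2\phi_k$, which is nonnegative.
\end{itemize}
Hence the squared MMD under $\widetilde{k}$ equals $\|\mu^{\widetilde{k}}_p-\mu^{\widetilde{k}}_q\|^2_{\mathcal{H}_{\widetilde{k}}}$, which is the expansion above. Taking square roots completes the argument. Assumption~\ref{assump:conv-invertibility} is not needed for the identity itself; it is only needed to conclude that $\operatorname{MMD}_{\widetilde{k}}$ is a metric, since $\phi_m\neq0$ makes $\widetilde{k}$ characteristic when $k$ is.
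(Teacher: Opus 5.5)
Your proof is correct. The paper itself gives no proof of this statement; it imports the result from prior work \citep{convMMD}, so your argument is a self-contained verification rather than a parallel of one in the text.

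The closest computation in the paper is the spectral identity used in the proof of Theorem~\ref{th:l2rate}:
$\operatorname{MMD}_k^2(q*m,p*m)=(2\pi)^{-d}\int|\phi_q(\mathbf{t})-\phi_p(\mathbf{t})|^2\,|\phi_m(\mathbf{t})|^2\phi_k(\mathbf{t})\,d\mathbf{t}$.
This is the Fourier-side form of the same equivalence: $\widetilde{k}$ is the stationary kernel with spectral density $|\phi_m|^2\phi_k$, exactly as in your third bullet.

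Your route has three steps: the three-term expansion, the tower property with independent noise copies, and the feature map $\mathbf{x}\mapsto\mu_{\delta_{\mathbf{x}}*m}$ to certify that $\widetilde{k}$ is a kernel. This is more elementary and more general than the Fourier route.
\begin{itemize}
\item It uses only boundedness and positive definiteness of $k$. It does not need translation invariance, $k\in L^1$, or positivity of $\phi_k$.
\item It applies to arbitrary latent probability measures, including degenerate pushforwards.
\end{itemize}
The Fourier route, in turn, gives you the weight $W=|\phi_m|^2\phi_k$ that drives the $L_2$ rates. It also makes your closing remark immediate: under Assumption~\ref{assump:conv-invertibility} and $\phi_k>0$, the spectral density of $\widetilde{k}$ is positive almost everywhere, so its support is all of $\mathbb{R}^d$ and $\widetilde{k}$ is characteristic.

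One small fix is needed. Assumption~\ref{assump:kernel} only bounds the diagonal $k(\mathbf{x},\mathbf{x})$, so the bound $|k(\mathbf{x},\mathbf{y})|\le K$ should be obtained from Cauchy--Schwarz in $\mathcal{H}_k$, as in the proof of Theorem~\ref{th:oracle}. With that, $|\widetilde{k}|\le K$ and your Fubini step follow as you state.
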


\begin{corollary}[Parametric convergence in latent RKHS]
\label{cor:latent}
Under the assumptions of Theorems~\ref{th:oracle} and \ref{th:equiv}, for every $\delta \in (0,1)$, with probability at least $1-\delta$,
\begin{equation}
    \operatorname{MMD}_{\widetilde{k}}(p,q_{\widehat{\boldsymbol{\theta}}_N})
    \le
    \inf_{\boldsymbol{\theta} \in \Theta_J}
    \operatorname{MMD}_{\widetilde{k}}(p,q_{\boldsymbol{\theta}})
    +
    2\sqrt{\frac{K}{N}}
    +
    2\sqrt{\frac{2K\log(1/\delta)}{N}}.
\end{equation}
In particular, if the sieve is well-specified at level $J$, then
\begin{equation}
    \operatorname{MMD}_{\widetilde{k}}(p,q_{\widehat{\boldsymbol{\theta}}_N})
    =
    O_p(N^{-1/2}).
\end{equation}
\end{corollary}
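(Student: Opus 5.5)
The statement to prove is Corollary~\ref{cor:latent}. The plan is to combine Theorem~\ref{th:oracle} with Theorem~\ref{th:equiv}, so that every convMMD quantity in the oracle inequality becomes a latent MMD under the smoothed kernel $\widetilde{k}$.

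\emph{Step 1.} Fix $\delta\in(0,1)$ and let $E_\delta$ be the event of probability at least $1-\delta$ on which \eqref{eq:oracle-bound} holds. This is the only place where randomness enters.

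\emph{Step 2.} Work pointwise in $\boldsymbol{\theta}$. Theorem~\ref{th:equiv} is a deterministic identity valid for any pair of latent distributions, so apply it to $p$ and $q_{\boldsymbol{\theta}}$ for each $\boldsymbol{\theta}\in\Theta_J$. This gives
\[
\operatorname{convMMD}_k(p,q_{\boldsymbol{\theta}},m)=\operatorname{MMD}_{\widetilde{k}}(p,q_{\boldsymbol{\theta}}).
\]
Because the identity holds for every element of the sieve, the two infima over $\Theta_J$ coincide. Applying the same identity at the random point $\widehat{\boldsymbol{\theta}}_N$ also converts the left-hand side. Substituting both into \eqref{eq:oracle-bound} on $E_\delta$ yields the first display of the corollary.

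The one thing to check is that $q_{\widehat{\boldsymbol{\theta}}_N}$ is a legitimate latent distribution for Theorem~\ref{th:equiv}. It is, since it lies in $\mathcal{Q}_J$ by construction. Translation invariance of $k$ is assumed in both theorems, so no extra hypothesis is needed.

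\emph{Step 3.} For the well-specified case, assume there exists $\boldsymbol{\theta}_0\in\Theta_J$ with $q_{\boldsymbol{\theta}_0}=p$. Then the infimum is zero, and on $E_\delta$
\[
\operatorname{MMD}_{\widetilde{k}}(p,q_{\widehat{\boldsymbol{\theta}}_N})\le \frac{2\sqrt{K}\bigl(1+\sqrt{2\log(1/\delta)}\bigr)}{\sqrt N}.
\]
Given $\varepsilon>0$, choose $\delta=\varepsilon$ and set $C_\varepsilon=2\sqrt K\bigl(1+\sqrt{2\log(1/\varepsilon)}\bigr)$. Then $\mathbb{P}\bigl(\sqrt N\,\operatorname{MMD}_{\widetilde{k}}>C_\varepsilon\bigr)\le\varepsilon$ for all $N$, which is exactly the definition of $O_p(N^{-1/2})$.

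The main obstacle is mild: the argument relies on Theorem~\ref{th:equiv} holding as an exact identity for all distributions, not merely for $p$. Its proof, via Fubini and the identity $\mu_{q*m}=\mathbb{E}_{\mathbf{Y}\sim q}\,\mathbb{E}_{\mathbf{U}}\,k(\mathbf{Y}+\mathbf{U},\cdot)$, is uniform in $q$, so this holds.
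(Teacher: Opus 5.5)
Your proposal is correct and takes essentially the same approach as the paper. The paper gives no separate proof; it presents Corollary~\ref{cor:latent} as an immediate consequence of substituting the identity of Theorem~\ref{th:equiv}, which holds for every pair of latent distributions, into the oracle inequality of Theorem~\ref{th:oracle}, and your conversion of the high-probability bound into $O_p(N^{-1/2})$ by taking $\delta=\varepsilon$ correctly fills in the detail it leaves implicit (the case $\varepsilon\ge 1$ is trivial).
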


\begin{remark}
Corollary~\ref{cor:latent} gives a sharp parametric rate, but in the latent RKHS geometry associated with the smoothed kernel $\widetilde{k}$ and with a sieve with capacity $J$ large enough to model the distribution $p$. This is weaker than an $L_2$ statement and does not by itself quantify full recovery of high-frequency structure lost through convolution \citep{sriperumbudur2010hilbert, liu2017approximation}.  To obtain such guarantees, one must impose stronger smoothness and approximation assumptions. 
\end{remark}

\subsection{Nonparametric Guarantees}
\label{subsec:theory-strong}

We now state the stronger theorem that yields explicit nonparametric convergence rates in $L_2$. This result requires additional assumptions on the smoothness of the latent density, the ill-posedness of the noise, and the approximation power of the sieve. These assumptions are collected in Appendix~\ref{app:assumptions} and are substantially stronger than those used for the basic convMMD guarantees.

\begin{theorem}[Nonparametric $L_2$ Convergence] \label{th:l2rate}
  Under Assumptions~\ref{assump:noise-indep} --  \ref{assump:compactness}, let $q_{\widehat{\boldsymbol{\theta}}_N}$ be the convMMD estimator over $\mathcal{Q}_J$. 

  \textbf{Case A (Ordinary Smooth Noise and Kernel):} Under (OS-N) with exponent $\gamma$ and (OS-K) with exponent $\nu$, choosing $J_N \asymp N^{d/(2\beta)}$ yields
  $$\|q_{\widehat{\boldsymbol{\theta}}_N} - p\|_{2} = O_p\left(N^{-\beta/(2\beta + 2\gamma + \nu)}\right).$$

  \textbf{Case B (Supersmooth Noise or Kernel):} If the noise is supersmooth (SS-N) with exponent $\gamma$, or the kernel is supersmooth (SS-K) with exponent
  $\nu$, or both, define
  $$\gamma^* = \begin{cases}
  \gamma & \text{if (SS-N) and (OS-K)}, \\
  \nu & \text{if (OS-N) and (SS-K)}, \\
  \max(\gamma, \nu) & \text{if (SS-N) and (SS-K)}.
  \end{cases}$$
  Then, choosing $J_N \asymp N^{d/(2\beta)}$ yields
  $$\|q_{\widehat{\boldsymbol{\theta}}_N} - p\|_{2} = O_p\left((\log N)^{-\beta/\gamma^*}\right).$$
\end{theorem}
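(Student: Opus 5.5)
The plan is a spectral argument that converts the convMMD control from Theorem~\ref{th:oracle} into an $L_2$ bound on the latent density. Write $\Delta\phi = \phi_{q_{\widehat{\boldsymbol{\theta}}_N}} - \phi_p$. Plancherel gives $\|q_{\widehat{\boldsymbol{\theta}}_N}-p\|_2^2 = (2\pi)^{-d}\int |\Delta\phi|^2$. Bochner's theorem and the convolution theorem, for the translation-invariant kernel, give
\[
\operatorname{convMMD}_k^2 = (2\pi)^{-d}\int |\Delta\phi(\mathbf{t})|^2 W(\mathbf{t})\,d\mathbf{t}, \qquad W=|\phi_m|^2\phi_k .
\]
The two-sided smoothness assumptions on noise and kernel then yield $W(\mathbf{t})\asymp (1+\|\mathbf{t}\|^2)^{-\gamma-\nu/2}$ in Case A. In Case B one gets an exponentially decaying lower bound $W(\mathbf{t})\gtrsim \exp(-c^*\|\mathbf{t}\|^{\gamma^*})$ times a polynomial factor, and the polynomial factor can be absorbed into the exponent.

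\textbf{Step 1: split at a cutoff.} Fix a cutoff $\Omega>0$ and split the $L_2$ integral into $\|\mathbf{t}\|\le\Omega$ and $\|\mathbf{t}\|>\Omega$.

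\textbf{Step 2: the high-frequency tail.} Both $p$ and the sieve elements lie in a Sobolev ball of $H^\beta$, which is the compactness/regularization assumption on $\mathcal{Q}_J$. So $\int(1+\|\mathbf{t}\|^2)^\beta|\Delta\phi|^2\le C$, and the tail is at most $C\Omega^{-2\beta}$.

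\textbf{Step 3: the low-frequency part.} Multiply and divide by $W$ and pull out $\sup_{\|\mathbf{t}\|\le\Omega}W^{-1}$. This bounds the low-frequency part by $\sup_{\|\mathbf{t}\|\le\Omega}W^{-1}$ times $\operatorname{convMMD}_k^2$.

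\textbf{Step 4: control convMMD.} Square the oracle inequality of Theorem~\ref{th:oracle}, using $(a+b)^2\le 2a^2+2b^2$. The stochastic term is $O_p(N^{-1})$. For the approximation term, take $q_J^*$ from Assumption~\ref{assump:sieve-approx} and use $\sup W\le C_W$ with Plancherel:
\[
\inf_{\boldsymbol{\theta}}\operatorname{convMMD}_k^2\le C_W\|q_J^*-p\|_2^2\lesssim J^{-2\beta/d}.
\]
Choosing $J_N\asymp N^{d/(2\beta)}$ makes this term $O(N^{-1})$, so $\operatorname{convMMD}_k^2=O_p(N^{-1})$.

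\textbf{Step 5: the master bound.} Combining Steps 2--4 gives
\[
\|q_{\widehat{\boldsymbol{\theta}}_N}-p\|_2^2\lesssim \Big(\sup_{\|\mathbf{t}\|\le\Omega}W^{-1}\Big)N^{-1}+\Omega^{-2\beta}
\]
in probability.

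\textbf{Step 6: optimize $\Omega$.} In Case A the supremum is $\asymp\Omega^{2\gamma+\nu}$. Balancing $\Omega^{2\gamma+\nu}N^{-1}=\Omega^{-2\beta}$ gives $\Omega_N=N^{1/(2\beta+2\gamma+\nu)}$, and hence the rate $N^{-\beta/(2\beta+2\gamma+\nu)}$ after taking square roots. In Case B the supremum is at most $\exp(c'\Omega^{\gamma^*})$, where $\gamma^*$ is the larger exponent present. Take $\Omega_N=(\epsilon\log N/c')^{1/\gamma^*}$ with $\epsilon<1$. Then the variance term is $N^{\epsilon-1}$, which is negligible against the tail $\Omega_N^{-2\beta}\asymp(\log N)^{-2\beta/\gamma^*}$. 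This yields $(\log N)^{-\beta/\gamma^*}$.

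\textbf{Main obstacle: Step 2.} The uniform Sobolev bound must hold for the random estimator $q_{\widehat{\boldsymbol{\theta}}_N}$, not only for $p$. This needs the sieve to be a regularized subset of a fixed $H^\beta$ ball with radius independent of $J$. I would extract this from the compactness assumption (Assumption~\ref{assump:compactness}). If the assumption only bounds norms $J$-dependently, the constant $C$ would grow with $J_N$ and degrade the rates. A secondary point is that $W$ must be bounded below on compacts, so that $\sup_{\|\mathbf{t}\|\le\Omega}W^{-1}$ is finite; this follows from the two-sided assumptions, which require $\phi_m$ and $\phi_k$ to be nonvanishing.
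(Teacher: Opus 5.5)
Your proposal is correct and follows the paper's proof essentially step for step. The shared steps are: Plancherel together with the spectral form of convMMD with weight $W=|\phi_m|^2\phi_k$; a frequency cutoff $\Omega$ with the Sobolev tail bound $C\Omega^{-2\beta}$; the $\sup_{\|\mathbf{t}\|\le\Omega}W^{-1}$ bound on the low frequencies; the squared oracle inequality with the bias bound $C_W\|q_J^*-p\|_2^2\lesssim J^{-2\beta/d}$; and optimizing $\Omega$ after taking $J_N\asymp N^{d/(2\beta)}$. Your explicit Case B choice $\Omega_N=(\epsilon\log N/c')^{1/\gamma^*}$ with $\epsilon<1$ pins down the constant that the paper's balancing $\Omega_N\asymp(\log N)^{1/\gamma^*}$ leaves implicit. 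The $J$-independent Sobolev radius you flag is exactly what the paper takes from Assumption~\ref{assump:compactness}, with $C_B$ fixed independently of $J$.
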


% Theorem~\ref{th:l2rate} makes clear that the stronger $L_2$ theory applies only to sieve classes satisfying a quantitative approximation bound. Universality property of a sieve classes alone is not sufficient. The theorem also reveals a feature specific to convMMD-based deconvolution: the rate depends on the smoothness of the latent density, the noise, and the Fourier decay of the kernel used in the MMD loss. In the ordinary-smooth regime, choosing a kernel with lighter Fourier smoothing improves the rate. In the super-smooth regime, the exponential ill-posedness of the noise dominates. We also note that the two regimes in Theorem~\ref{th:l2rate} correspond to the classical distinction in deconvolution theory between polynomially decaying and exponentially decaying characteristic functions \red{[citation?]}.

The two regimes in Theorem~\ref{th:l2rate} correspond to the classical distinction between \emph{ordinary-smooth} and \emph{super-smooth} inverse problems, determined by how quickly the noise attenuates high-frequency information. In the ordinary-smooth regime, the noise characteristic function decays polynomially, leading to a mildly ill-posed problem and polynomial $L_2$ convergence rates. In contrast, in the super-smooth regime, exponential decay severely suppresses high frequencies, making the problem exponentially ill-posed and degrading convergence to logarithmic rates. 

A key feature of convMMD is that the ordinary-smooth rate depends on the kernel's Fourier decay $\nu$. Compared to the classical minimax optimal bounds \citep{fan1991optimal}, our exponent replaces the dimension $d$ with $\nu$. Since bounded MMD kernels require $\nu > d$, our rate is slightly sub-optimal. However, this theoretical gap reflects a deliberate practical trade-off: achieving optimal rates  \citep{fan1991optimal} via classical density estimation requires vanishing bandwidths that cause numerical instability. By using a fixed, smooth kernel, convMMD incurs a small asymptotic speed penalty to guarantee a stable optimization landscape for generative models. In the super-smooth regime, the noise dominates, and standard logarithmic rates are recovered. We defer detailed examples to Appendix~\ref{app:smooth-examples}.

\subsection{Empirical Bayes Denoising Guarantees}
\label{subsec:denoising}

We connect density estimation to empirical Bayes denoising. Let $\widehat{\mathbf{x}}^*$ be the oracle posterior mean (using the true $p$) and $\widehat{\mathbf{x}}^{\mathrm{EB}}$ be the empirical Bayes estimator (using $q_{\widehat{\boldsymbol{\theta}}_N}$). 

% \begin{theorem}[Empirical Bayes Denoising Rates]
% \label{th:denoising-general-rates}
% Under Assumptions~\ref{assump:noise-indep} --  \ref{assump:self-reg}, let $\widehat{\mathbf{x}}^*$ be the oracle posterior mean and $\widehat{\mathbf{x}}^{\mathrm{EB}}$ be the empirical Bayes estimator using $q_{\widehat{\boldsymbol{\theta}}_N}$. Combining this with Theorem~\ref{th:l2rate}, the expected excess risk scales as $O_p(N^{-2\beta/(2\beta + 2\gamma + \nu)})$ for ordinary-smooth noise, and $O_p((\log N)^{-2\beta/\gamma^*})$ for super-smooth noise.
% \end{theorem}

\begin{theorem}[Empirical Bayes Denoising Rates]
\label{th:denoising-general}
Let $\widehat{\mathbf{x}}^*$ be the oracle posterior mean and $\widehat{\mathbf{x}}^{\mathrm{EB}}$ be the empirical Bayes estimator using $q_{\widehat{\boldsymbol{\theta}}_N}$. Under Assumptions~\ref{assump:noise-indep} -- \ref{assump:bounded-noise},  with the additional requirement that the true marginal $\widetilde{p}$ is strictly positive on a compact observation domain $\mathcal{K}$, the expected excess empirical Bayes risk evaluated over $\mathcal{K}$ satisfies:
$$ \int_{\mathcal{K}} \|\widehat{\mathbf{x}}^{\mathrm{EB}}(\widetilde{\mathbf{x}}) - \widehat{\mathbf{x}}^*(\widetilde{\mathbf{x}})\|^2 \widetilde{p}(\widetilde{\mathbf{x}}) d\widetilde{\mathbf{x}} = O_p\!\left(\|q_{\widehat{\boldsymbol{\theta}}_N} - p\|_{2}^2\right). $$
Combined with Theorem~\ref{th:l2rate}:
\begin{itemize}
    \item \textbf{Case A (Ordinary-smooth):} $O_p\Big(N^{-\frac{2\beta}{2\beta + 2\gamma + \nu}}\Big)$.
    \item \textbf{Case B (Super-smooth):} $O_p\Big((\log N)^{-\frac{2\beta}{\gamma^*}}\Big)$.
\end{itemize}
\end{theorem}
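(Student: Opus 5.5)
The plan is to write both posterior means as ratios and control numerator and denominator perturbations separately. Write $q = q_{\widehat{\boldsymbol{\theta}}_N}$, $\widetilde{q} = q * m$, and $\mathbf{r}_f(\widetilde{\mathbf{x}}) = \int \mathbf{x}\, m(\widetilde{\mathbf{x}}-\mathbf{x}) f(\mathbf{x})\,d\mathbf{x}$. Then
\[
\widehat{\mathbf{x}}^*=\frac{\mathbf{r}_p}{\widetilde{p}}, \qquad \widehat{\mathbf{x}}^{\mathrm{EB}}=\frac{\mathbf{r}_q}{\widetilde{q}} .
\]
Adding and subtracting $\mathbf{r}_p/\widetilde{q}$ gives
\[
\widehat{\mathbf{x}}^{\mathrm{EB}}-\widehat{\mathbf{x}}^*=\frac{\mathbf{r}_q-\mathbf{r}_p}{\widetilde{q}}+\widehat{\mathbf{x}}^*\,\frac{\widetilde{p}-\widetilde{q}}{\widetilde{q}} .
\]
By the compactness assumption, $p$ and $q$ are supported in $\{\|\mathbf{x}\|\le M\}$. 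Hence $\|\widehat{\mathbf{x}}^*\|\le M$, $\|\mathbf{r}_q-\mathbf{r}_p\|\le M(|q-p|*m)$ and $|\widetilde{p}-\widetilde{q}|\le |q-p|*m$ pointwise. This yields
\[
\|\widehat{\mathbf{x}}^{\mathrm{EB}}-\widehat{\mathbf{x}}^*\|\le \frac{2M}{\widetilde{q}}\,(|q-p|*m) .
\]

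The main obstacle is the random denominator $\widetilde{q}$, which could be small. I would handle it by proving uniform convergence of $\widetilde{q}$ to $\widetilde{p}$. By Cauchy--Schwarz,
\[
\|\widetilde{q}-\widetilde{p}\|_\infty\le \|q-p\|_2\,\|m\|_2 ,
\]
and $\|m\|_2^2\le \|m\|_\infty<\infty$ by the bounded-noise assumption. Theorem~\ref{th:l2rate} gives $\|q-p\|_2\to 0$ in probability. Since $\widetilde{p}\ge 2\rho_0>0$ on the compact set $\mathcal{K}$, the event $\{\inf_{\mathcal{K}}\widetilde{q}\ge\rho_0\}$ has probability tending to one. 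On that event the pointwise bound holds on $\mathcal{K}$ with $2M/\rho_0$ in place of $2M/\widetilde{q}$. Because the claim is an $O_p$ statement, it suffices to work on this event.

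Next I would square the pointwise bound and integrate against $\widetilde{p}$ over $\mathcal{K}$. Since $\beta>d/2$, Sobolev embedding gives $\|p\|_\infty<\infty$, and Young's inequality gives $\|\widetilde{p}\|_\infty\le\|p\|_\infty$. Extending the integral to $\mathbb{R}^d$ then gives the bound
\[
\frac{4M^2\|p\|_\infty}{\rho_0^2}\,\big\||q-p|*m\big\|_2^2 .
\]
Applying Young's inequality once more, $\||q-p|*m\|_2\le \||q-p|\|_2\|m\|_1=\|q-p\|_2$. This establishes the $O_p(\|q-p\|_2^2)$ claim with constants independent of $N$.

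Finally, squaring the two rates in Theorem~\ref{th:l2rate} gives Cases A and B. I would also note, though it is not needed for the stated bound, that orthogonality of $\mathbf{X}-\mathbb{E}[\mathbf{X}\mid\widetilde{\mathbf{X}}]$ makes this quantity exactly the excess MSE over the Bayes risk.
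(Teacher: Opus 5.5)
Your proposal is correct and follows the paper's proof step for step. It uses the same add-and-subtract of $\mathbf{r}_p/\widetilde{q}$, the same compact-support bounds yielding $2M(|q-p|*m)/\widetilde{q}$, the same Cauchy--Schwarz control $\|\widetilde{q}-\widetilde{p}\|_\infty\le\|q-p\|_2\|m\|_2$ to keep $\widetilde{q}\ge\rho_0$ on $\mathcal{K}$ with probability tending to one, and the same Sobolev-embedding plus Young's-inequality finish. The paper only places the orthogonal excess-risk decomposition at the start rather than as a closing remark, and your application of Young's inequality directly to $|q-p|$ is slightly cleaner than the paper's, which states it for $(q-p)*m$ after reducing to $\||q-p|*m\|_2$.
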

Theorem~\ref{th:denoising-general} shows that accurate deconvolution implies accurate denoising: the excess Bayes risk inherits the squared $L_2$ rate of the latent density estimator. When the noise is Gaussian ($m = \mathcal{N}(\mathbf{0},\boldsymbol{\Sigma})$), Tweedie's formula \citep{robbins1992empirical, efron2011tweedie} reveals that $\widehat{\mathbf{x}}^*(\widetilde{\mathbf{x}}) = \widetilde{\mathbf{x}} + \boldsymbol{\Sigma} \nabla \widetilde{p}(\widetilde{\mathbf{x}})/\widetilde{p}(\widetilde{\mathbf{x}})$. Denoising then depends only on the accuracy of the marginal density, bypassing deconvolution entirely, and it may be possible to obtain faster rates of convergence in this case.

\subsection{Model Classes Covered by the Theory}
\label{subsec:model-classes}

The preceding theorems apply at two levels of generality. At the first level, Theorem~\ref{th:oracle}, Theorem~\ref{th:equiv}, and Corollary~\ref{cor:latent} apply to any \emph{simulation-based sieve model} satisfying the regularity assumptions in Appendix~\ref{app:assumptions}. The model class must define valid probability measures from which one can sample, admit simulation from the convolved model $q_{\boldsymbol{\theta}} * m$, and be regularized over a compact parameter set at each sieve level. No particular parametric form is required. At the second level, Theorem~\ref{th:l2rate} additionally requires quantitative $L_2$ sieve approximation rates (Assumption~\ref{assump:sieve-approx}). This is a mathematical property that separates model families for which the full deconvolution theory is currently available from those for which it remains conjectural.
We illustrate this distinction using two classes of models.

\paragraph{Gaussian Mixture Models (GMMs).} 
Gaussian mixture models provide a classical sieve for which the assumptions can be verified explicitly \citep{genovese2000rates, ghosal2001entropies}, including quantitative approximation rates. Consider the sieve
$
q_{\boldsymbol{\theta}}(\mathbf{x}) = \sum_{j=1}^J \pi_j \mathcal{N}(\mathbf{x} \mid \boldsymbol{\mu}_j, \boldsymbol{\Sigma}_j)
$,
with parameters $\boldsymbol{\theta} = \{\pi_j, \boldsymbol{\mu}_j, \boldsymbol{\Sigma}_j\}_{j=1}^J$, weights $\pi_j \ge 0$, and $\sum_{j=1}^J \pi_j = 1$. A sieve of $J$-component Gaussian mixtures with bounded means and regularized covariance matrices forms a compact, simulation-friendly parametric family. Under standard compactness and Sobolev regularity conditions on the target, classical results prove that GMMs achieve the exact polynomial $L_2$ approximation rates required by Assumption~\ref{assump:sieve-approx} \citep{genovese2000rates, ghosal2001entropies}. 

\paragraph{Neural Generative Models.} 
Models defined as neural pushforward maps $q_{\boldsymbol{\theta}} = (G_{\boldsymbol{\theta}})_\# p_z$, including normalizing flow models (NFM) \citep{papamakarios2021normalizing},  generative adversarial networks  (GANs)  \citep{goodfellow2014generative}, and diffusion models \cite{sohl2015deep, ho2020denoising, songscore}, are structurally ideal for convMMD. Simulating the noise-convolved distribution is mathematically trivial: $\widetilde{\mathbf{Y}} = G_{\boldsymbol{\theta}}(\mathbf{Z}) + \mathbf{U}_Y$, allowing even implicit models like GANs to be fitted without tractable densities. We note that the choice of architecture dictates the feasibility of downstream tasks. While implicit models like GANs are sufficient for empirical Bayes denoising, explicitly evaluating the learned density for pure deconvolution requires a tractable model such as normalizing flows. Under standard network regularization (e.g., spectral normalization), these models fully satisfy the basic convMMD estimation guarantees of Theorem~\ref{th:oracle}.  While these architectures are known universal approximators \citep{teshima2020coupling, hornik1989multilayer}, proving explicit $L_2$ polynomial approximation rates for their induced pushforward densities remains an active area of approximation theory. Consequently, Theorem~\ref{th:l2rate} applies to neural generative models conditionally, assuming the underlying architecture possesses the necessary approximation capacity.

A formal mathematical verification of how both GMMs and Neural Generative Models satisfy the framework's assumptions (including handling degenerate distributions and empirical Bayes importance sampling) is provided in Appendix~\ref{app:model_classes}.

%% file: sections/results.tex
\section{Experiments}\label{sec:exp}

\paragraph{Verifying Rates of Convergence.}

Figure~\ref{fig:rkhs-rate} provides empirical evidence for the two convergence phenomena predicted by our theory. We evaluate convMMD on a one-dimensional heteroscedastic deconvolution problem in which the latent density is a two-component Laplace mixture and the observations are corrupted by additive measurement noise. The estimator is trained directly by minimizing the convMMD objective, using two sieve classes: Gaussian mixture models and rational-quadratic-spline normalizing flows \cite{durkan2019neural}. We vary the sample size and compare two inverse-problem regimes: an ordinary-smooth setting with Laplace noise and a Laplace kernel, and a supersmooth setting with Gaussian noise and a Gaussian kernel.

\begin{figure}[h!]
    \centering
    \vspace{-3mm}
    \includegraphics[width=0.97\linewidth]{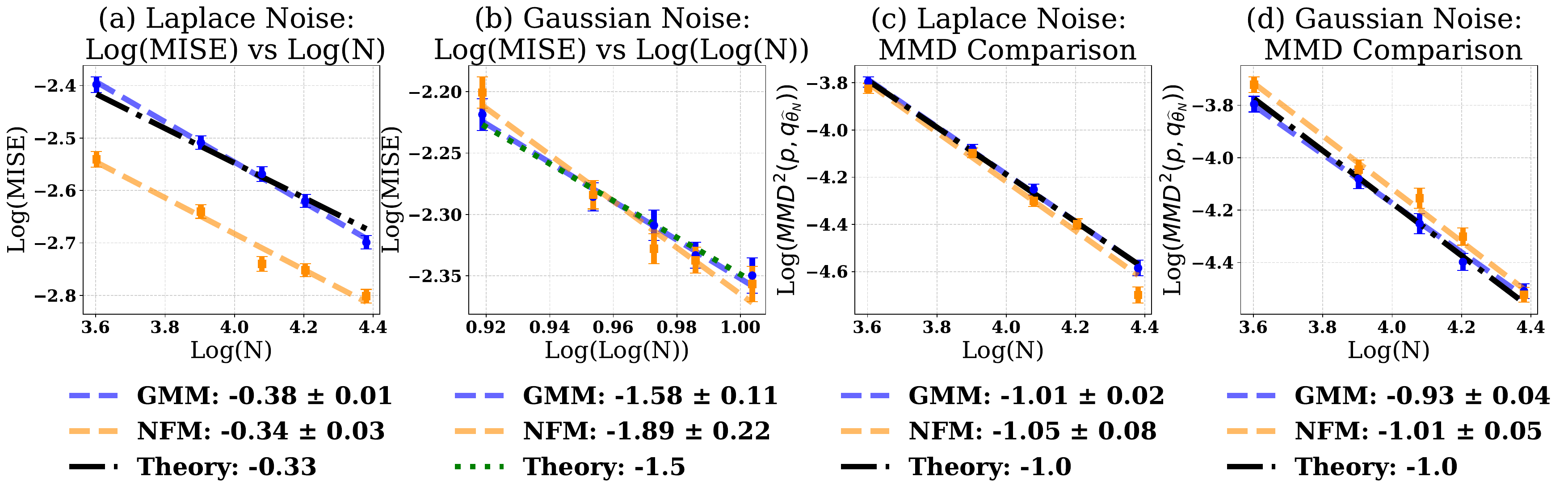}
    \caption{\textit{Rate validation.} $L_2$ error follows the ordinary-smooth and supersmooth rates predicted by Theorem~\ref{th:l2rate}, while Latent $\mathrm{MMD}^2$ follows the parametric $N^{-1}$ rate predicted by Corollary~\ref{cor:latent}}
    \label{fig:rkhs-rate}
    \vspace{-3mm}
\end{figure}

The right panels validate Corollary~\ref{cor:latent}. In both the ordinary-smooth and supersmooth settings, the latent squared MMD error, $\mathrm{MMD}^2(p,q_{\widehat{\theta}_N})$, decays at the parametric $N^{-1}$ rate, indicating that convMMD achieves fast convergence in the RKHS metric even under measurement error. The left panels validate Theorem~\ref{th:l2rate}. For ordinary-smooth noise, the $L_2$ density error follows the predicted polynomial decay; for supersmooth Gaussian noise, the rate slows to the expected logarithmic behavior. Thus, the experiments reproduce the central theoretical distinction: RKHS convergence remains parametric, while nonparametric density recovery reflects the ill-posedness of the deconvolution operator. Full experimental details are provided in Appendix~\ref{sec: exp-details}. \vspace{-2mm}

%\begin{figure}[H]
%    \centering
%    \includegraphics[width=0.98\linewidth]{Images/l2_rate.png}
%    \caption{Illustration of Theorem \ref{th:l2rate} for the special case of (a) Laplace Noise with Laplace Kernel and (b) Gaussian Noise with Gaussian Kernel.}
%    \label{fig:l2-rate}
%\end{figure}

\paragraph{Denoising in presence of outliers.}  We next demonstrate empirical Bayes denoising with our simulation-based framework. Given the learned sieve density $q_{\widehat{\theta}_N}$, we use it as a data-driven prior to estimate the posterior mean $\mathbb{E}[X_i \mid \tilde{X}_i]$. We evaluate this procedure on three structured two-dimensional datasets (Two Moons, Circle, and Checkerboard) where latent points are corrupted by heteroscedastic Gaussian noise. To assess robustness, we further contaminate the latent distribution with $3\%$ outliers. See Table~\ref{tab:results} for comparison of convMMD with normalizing flow (NF) and Gaussian mixture model (GMM) sieves against two standard deconvolution baselines: NPEB, a nonparametric empirical Bayes method based on the NPMLE \citep{soloff2025multivariate}, and XDGMM, which fits a Gaussian-mixture prior by expectation maximization~\citep{bovy2011extreme}.  An Oracle baseline is also included, which uses the true latent distribution estimated via KDE on a large clean sample, and thus provides an empirical upper bound on the achievable performance. 

\begin{figure}[H]
    \centering
     \vspace{-2mm}
    \includegraphics[width=0.97\linewidth]{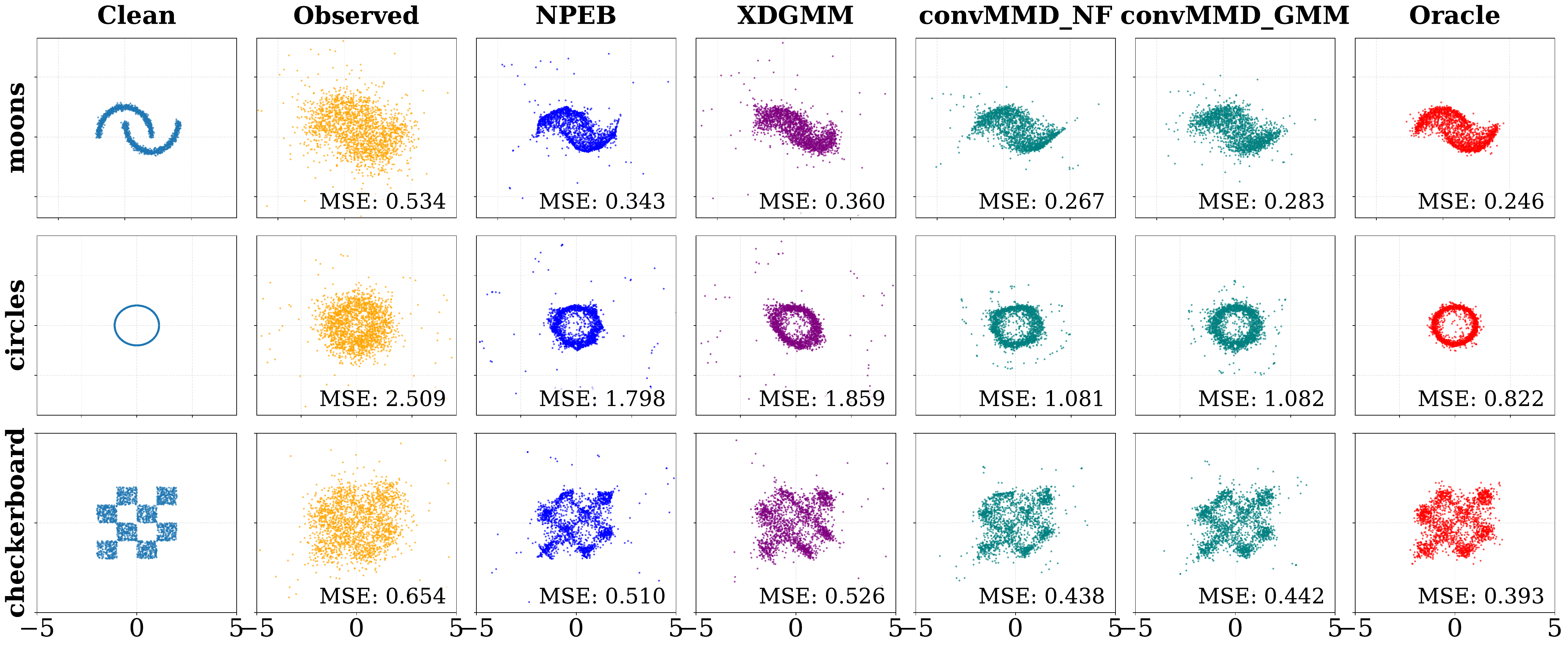}
    \caption{Benchmarking denoising. The last column assumes the true density is perfectly known. }
    \label{fig:2d-exp}
    \vspace{-2mm}
\end{figure}

\begin{wraptable}{r}{0.61\textwidth}
\vspace{-11pt}
\centering
\small

\caption{Performance on synthetic datasets. We report mean $\pm$ std of MSE and Wasserstein-1 (W1) for denoising and deconvolution, respectively.}
\label{tab:results}
\vspace{-2mm}
\resizebox{0.60\textwidth}{!}{
\begin{tabular}{llcc}
\toprule
\textbf{Dataset} & \textbf{Method} & \textbf{MSE ($\downarrow$)} & \textbf{W1 Distance ($\downarrow$)} \\ \midrule
\multirow{5}{*}{Moons} 
    & convMMD\_NF & $\mathbf{0.2672 \pm 0.0154}$ & $0.1242 \pm 0.0022$ \\
    & convMMD\_GMM & $0.2829 \pm 0.0071$ & $\mathbf{0.1148 \pm 0.0185}$ \\
    & NPEB    & $0.3428 \pm 0.0095$ & $0.2112 \pm 0.0050$ \\
    & XDGMM   & $0.3599 \pm 0.0089$ & $0.1896 \pm 0.0047$ \\
    & Oracle  & \textit{0.2460 $\pm$ 0.0040} & \textit{0.0350 $\pm$ 0.0019}\\ \midrule
\multirow{5}{*}{Circles} 
    & convMMD\_NF & $\mathbf{1.0809 \pm 0.0202}$ & $0.2935 \pm 0.0139$ \\
    & convMMD\_GMM & $1.0817 \pm 0.0238$ & $\mathbf{0.2717 \pm 0.0210}$ \\
    & NPEB    & $1.7982 \pm 0.0635$ & $0.5900 \pm 0.0163$ \\
    & XDGMM   & $1.8593 \pm 0.0310$ & $0.4643 \pm 0.0114$ \\
    & Oracle  & \textit{0.8217 $\pm$ 0.0191} & \textit{0.0801 $\pm$ 0.0233}\\ \midrule
\multirow{5}{*}{Checkerboard} 
    & convMMD\_NF & $\mathbf{0.4383 \pm 0.0196}$ & $0.1785 \pm 0.0134$ \\
    & convMMD\_GMM & $0.4424 \pm 0.0230$ & $\mathbf{0.1197 \pm 0.0071}$ \\
    & NPEB    & $0.5096 \pm 0.0186$ & $0.2763 \pm 0.0131$ \\
    & XDGMM   & $0.5260 \pm 0.0194$ & $0.2053 \pm 0.0064$ \\
    & Oracle  & \textit{0.3929 $\pm$ 0.0147} & \textit{0.0633 $\pm$ 0.0039} \\ \bottomrule
\end{tabular}
}
\vspace{-4mm}

\end{wraptable}

Figure~\ref{fig:2d-exp} shows denoised posterior means from a representative run. NPEB places the prior on a discrete grid, producing clustered, speckled estimates that miss the continuity of the underlying manifolds. XDGMM utilizes a continuous Gaussian mixture, but its EM-based likelihood optimization leads to oversmoothing, failing to recover sharp boundaries (Moons) or hollow geometries (Circle). By evaluating our framework with both Normalizing Flow (\texttt{convMMD\_NF}) and GMM (\texttt{convMMD\_GMM}) sieves, we isolate the advantage of the convMMD objective. Notably, despite sharing the same parametric family as XDGMM, \texttt{convMMD\_GMM} gives better reconstructions, confirming that simulation-based convMMD framework recovers complex topology better than likelihood maximization in presence of outliers. This observation is consistent with theoretical guarantees regarding MMD's robustness to corrupted mass \citep{cherief2022finite, alquier2024universal}. The sieves also exhibit distinct inductive biases: the continuous mapping of NF excels on continuous topological manifolds (Moons, Circles), while GMM efficiently captures multimodal clustered grids (Checkerboard).  We also investigate the sensitivity of convMMD to the choice of kernel bandwidth, as the median heuristic introduces an implicit free parameter. Results across a nine-point logarithmic grid of scale factors confirm that the performance remains stable over a wide range; see Appendix \ref{sec:bw_sensivity} and Figure \ref{fig:bandwidth-sens-1} for the full analysis.

\begin{figure}[htbp]
    \centering
    \includegraphics[trim=0mm 8mm 0mm 0mm,clip,width=1\linewidth]{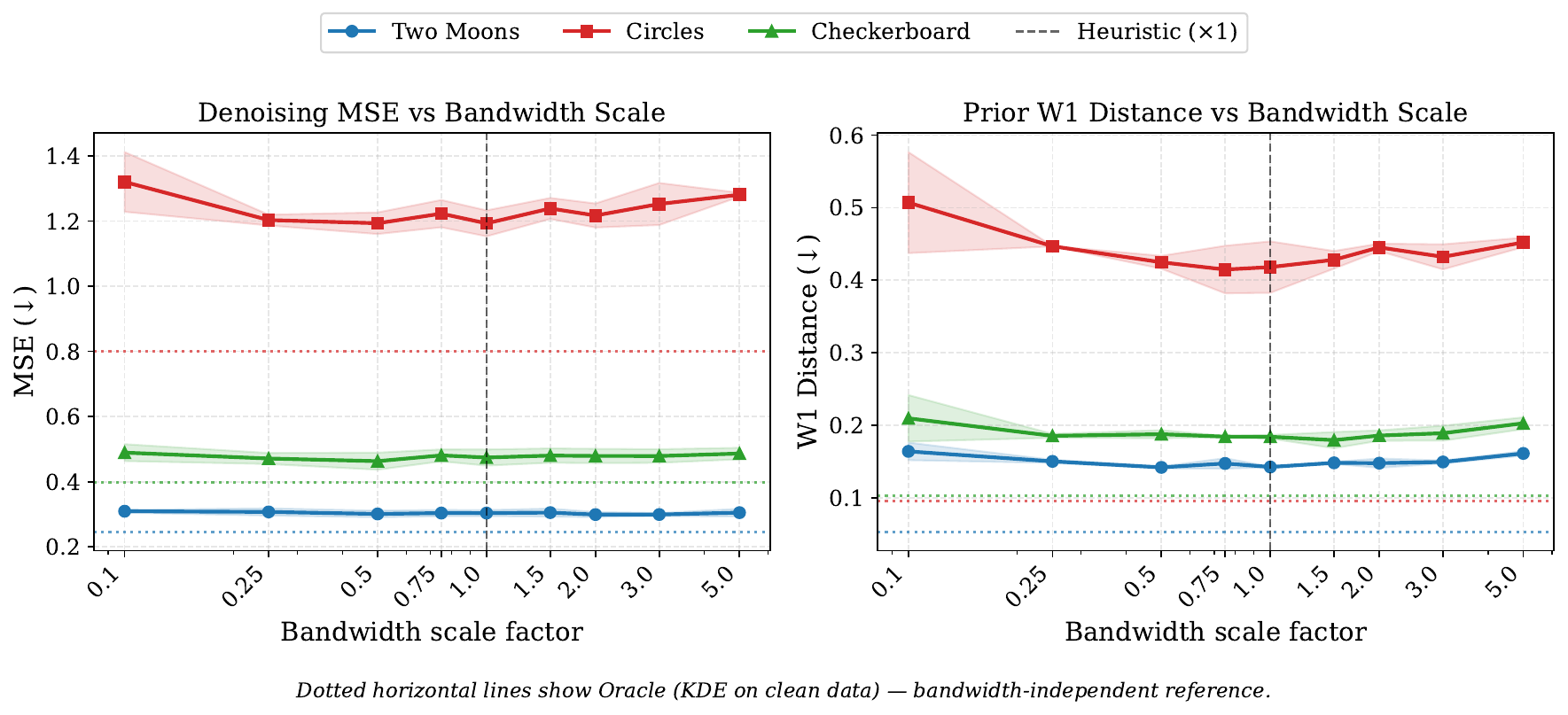}
    \caption{Dotted horizontal lines show Oracle (KDE on clean data) -- bandwidth independent reference. }
    \label{fig:bandwidth-sens-1}
\end{figure}

\paragraph{Noise Misspecification and Scalability across Dimensions.} We next evaluate the scalability and robustness to noise misspecification of convMMD for multivariate deconvolution. We use a ``Tangled Ribbon'' manifold: a low-intrinsic-dimensional topological structure embedded in observation spaces of increasing dimension, $D \in \{3,7,11,15\}$, and corrupted with heteroscedastic isotropic Gaussian noise, where per-sample noise scales $\sigma_i \sim \mathcal{U}(0.085, 0.127)$ with mean $\bar{\sigma} = 0.106$. We compare convMMD (with Normalizing Flows) against three multivariate deconvolution baselines: XDGMM \citep{bovy2011extreme}, NPEB \citep{soloff2025multivariate}, and deconv \citep{dockhorn2020density}. Full experimental details and network architectures are provided in Appendix~\ref{app:deconv_with_dim}.

We assess the robustness to noise misspecification by evaluating each method under three noise regimes: \emph{Well Specified}, where each method receives the true per-sample $\sigma_i$; \emph{Homoscedastic Misspecified}, where a single constant $\bar{\sigma}$ is provided; and \emph{Heteroscedastic Misspecified}, where per-sample estimates are corrupted by log-normal noise calibrated to yield 25\% expected relative error. We use the sliced 1-Wasserstein distance (SWD), computed over 1000 random one-dimensional projections, as our primary metric throughout, since exact Wasserstein distances scale poorly in high dimensions at rate $\mathcal{O}(n^{-1/D})$ \citep{fournier2015rate}. 

\begin{table}[htbp]
\centering
\caption{SWD (scaled by $\sqrt{D}$) between generated samples and the true manifold (see Figure~\ref{fig:deconv-3d-1}). Values are mean $\pm$ std.}
\label{tab:swd_results}
\resizebox{0.95\textwidth}{!}{
\begin{tabular}{ll cccc}
\toprule
\textbf{Noise Type} & \textbf{Method}
    & $D=3$ & $D=7$ & $D=11$ & $D=15$ \\
\midrule
\multirow{4}{*}{\makecell[l]{Well-specified\\(True $\sigma_i$)}}
  & convMMD
      & $0.0329 \pm 0.0049$ & $\mathbf{0.0339} \pm 0.0034$
      & $\mathbf{0.0342} \pm 0.0018$ & $\mathbf{0.0396} \pm 0.0051$ \\
  & XDGMM
      & $\mathbf{0.0290} \pm 0.0065$ & $0.0357 \pm 0.0051$
      & $0.0405 \pm 0.0025$ & $0.0488 \pm 0.0030$ \\
  & NPEB
      & $0.0351 \pm 0.0044$ & $0.0441 \pm 0.0033$
      & $0.0741 \pm 0.0052$ & $0.1161 \pm 0.0047$ \\
  & deconv
      & $0.0711 \pm 0.0058$ & $0.1006 \pm 0.0018$
      & $0.1317 \pm 0.0139$ & $0.2090 \pm 0.0346$ \\
\midrule
\multirow{4}{*}{\makecell[l]{Homoscedastic\\Misspecified}}
  & convMMD
      & $0.0334 \pm 0.0061$ & $\mathbf{0.0318} \pm 0.0017$
      & $\mathbf{0.0367} \pm 0.0036$ & $\mathbf{0.0363} \pm 0.0014$ \\
  & XDGMM
      & $\mathbf{0.0299} \pm 0.0047$ & $0.0376 \pm 0.0041$
      & $0.0422 \pm 0.0036$ & $0.0513 \pm 0.0029$ \\
  & NPEB
      & $0.0366 \pm 0.0038$ & $0.0453 \pm 0.0029$
      & $0.0707 \pm 0.0041$ & $0.1145 \pm 0.0055$ \\
  & deconv
      & $0.0770 \pm 0.0034$ & $0.1003 \pm 0.0025$
      & $0.1382 \pm 0.0201$ & $0.2105 \pm 0.0274$ \\
\midrule
\multirow{4}{*}{\makecell[l]{Heteroscedastic\\Misspecified}}
  & convMMD
      & $\mathbf{0.0318} \pm 0.0049$ & $\mathbf{0.0346} \pm 0.0036$
      & $\mathbf{0.0394} \pm 0.0032$ & $\mathbf{0.0469} \pm 0.0016$ \\
  & XDGMM
      & $0.0343 \pm 0.0037$ & $0.0457 \pm 0.0044$
      & $0.0544 \pm 0.0040$ & $0.0641 \pm 0.0047$ \\
  & NPEB
      & $0.0379 \pm 0.0036$ & $0.0516 \pm 0.0031$
      & $0.0767 \pm 0.0048$ & $0.1068 \pm 0.0032$ \\
  & deconv
      & $0.0807 \pm 0.0086$ & $0.1007 \pm 0.0032$
      & $0.1265 \pm 0.0038$ & $0.2192 \pm 0.0176$ \\
\bottomrule
\end{tabular}
}
\end{table}
Table~\ref{tab:swd_results} reports results across independent trials, and Figure~\ref{fig:deconv-3d-1} shows a representative example run for $D=3$.
In $D=3$ under oracle noise, both XDGMM and convMMD successfully recover the latent manifold, with XDGMM slightly more competitive at low dimensionality due to its inductive bias for multimodal densities. NPEB, as a discrete nonparametric estimator, also captures the manifold structure at $D=3$, but its performance degrades as dimension increases. As $D$ increases, convMMD outperforms all baselines by exploiting the expressivity of normalizing flows to recover the low-dimensional manifold structure and achieving significantly lower SWD at $D=11$ and $D=15$. deconv performs poorly across all settings, which we attribute to its sensitivity to sample size; in our experiments, it required substantially more than the $N=2000$ training points used here to produce well-calibrated deconvolved samples.

Under misspecification, convMMD remains the most robust method: its SWD increases only modestly from the oracle to the heteroscedastic misspecified regime across all dimensions. XDGMM shows greater sensitivity, particularly at higher dimensions; NPEB and deconv demonstrate poor performance. This robustness of convMMD to approximate noise specification is practically important, as exact per-sample noise may not be available in some real-world applications. 

\begin{figure}[htbp]
    \includegraphics[width=1\linewidth]{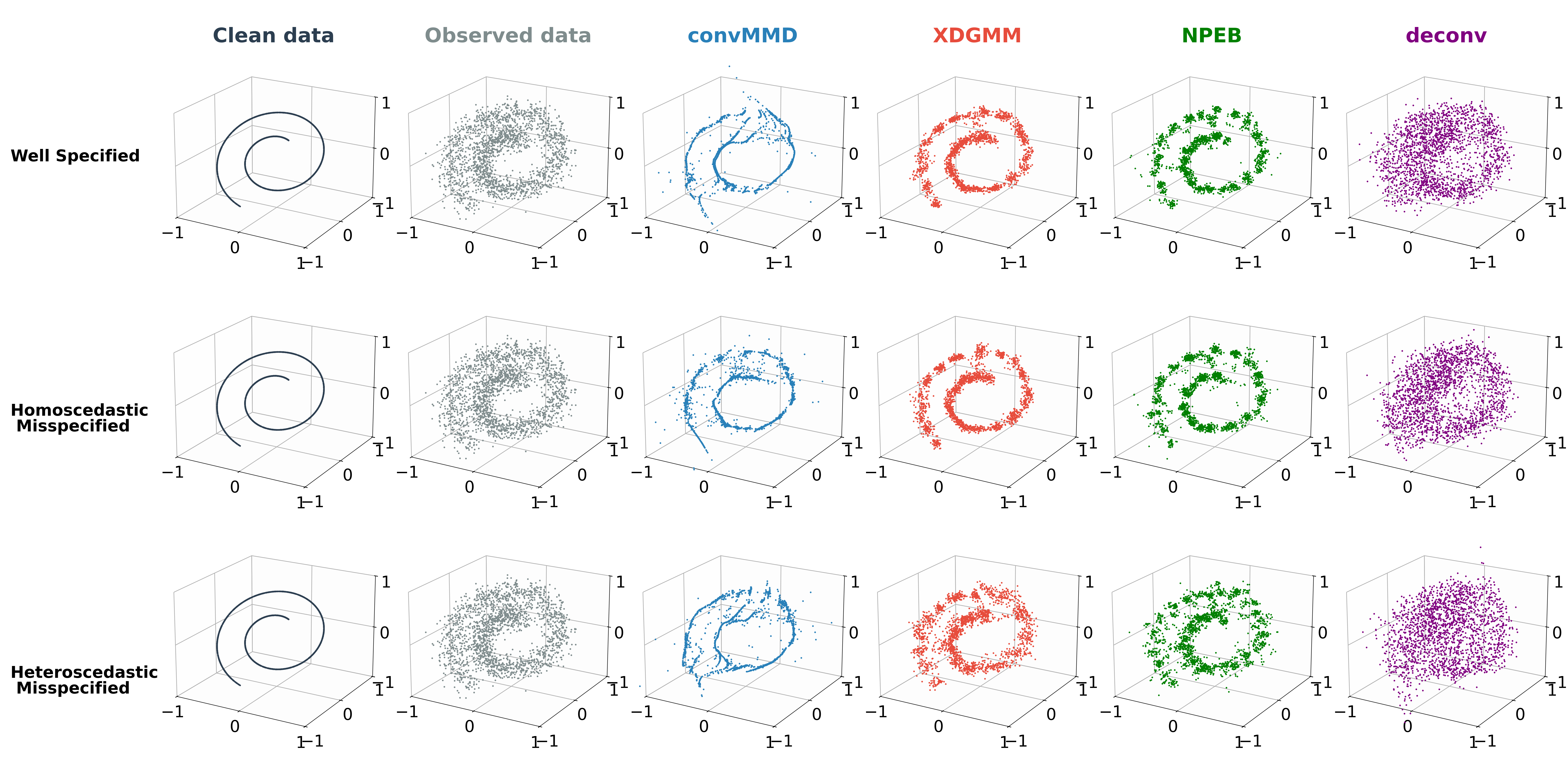}
    \caption{Qualitative Comparison of Estimated Prior for d = 3 across different methods (columns) and different noise models (rows). }
    \label{fig:deconv-3d-1}
\end{figure}

\paragraph{Performance in High-dimensional Denoising.}

In classical nonparametric statistics, density deconvolution is typically restricted to low-dimensional domains.  To illustrate scalability to higher-dimensional spaces ($D=784$), we apply our framework to MNIST image denoising. \begin{figure}[htbp]
    \centering
    \includegraphics[width=0.95\textwidth]{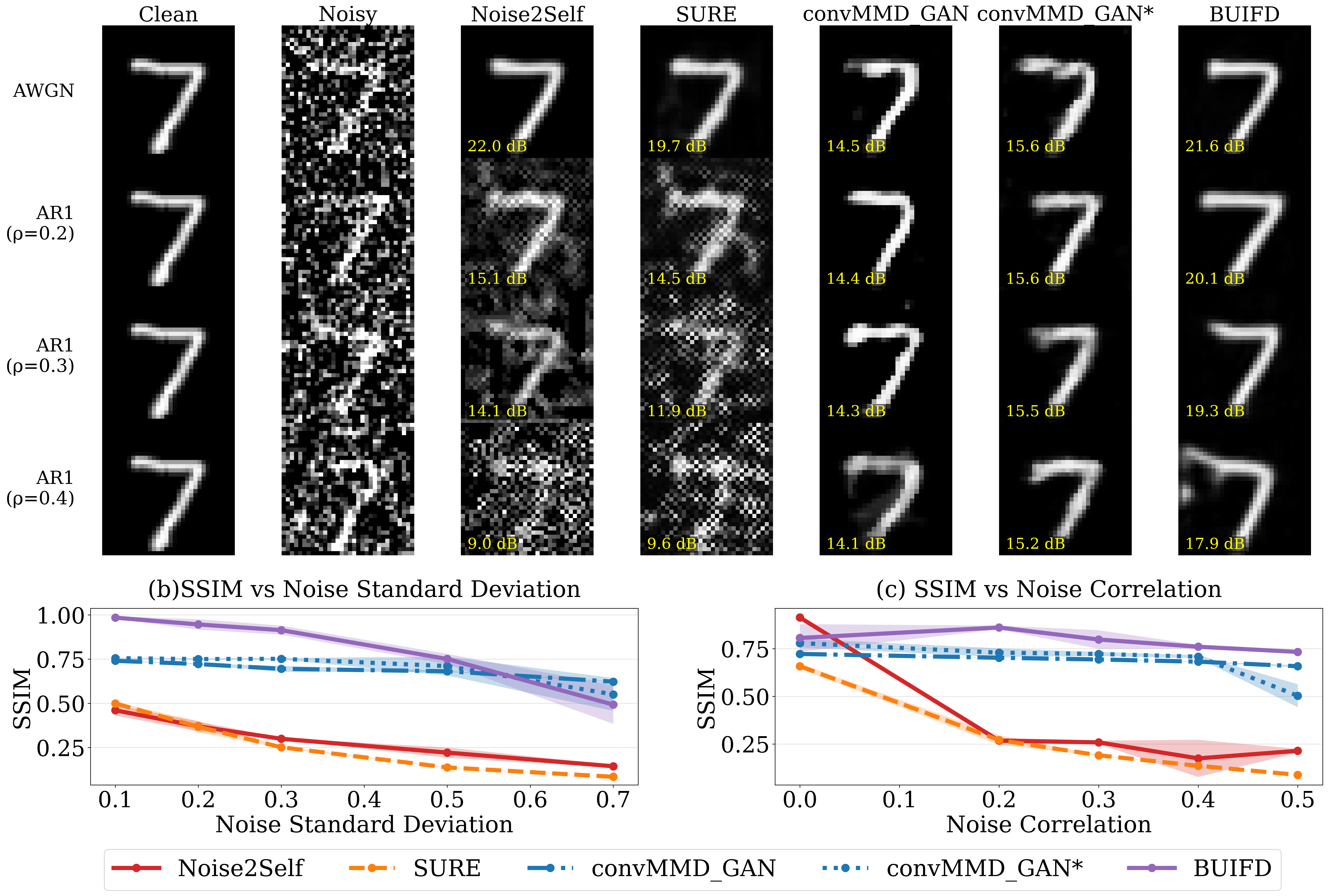}
    \caption{Image denoising performance.}
    \label{fig:mnist_comparison}
\end{figure} We consider unlabeled noisy MNIST images $\mathcal{D}_{\text{obs}}$, and use them to learn the prior $q_{\boldsymbol{\theta}}$ via convMMD and denoise via empirical Bayes. We parameterize $q_{\boldsymbol{\theta}}$ using two GAN variants: $\text{convMMD}\_\text{GAN}$ (GAN with a fixed Gaussian kernel) and $\text{convMMD}\_\text{GAN*}$ (GAN with an adversarially learned kernel \citep{li2017mmd}). We note that $\text{convMMD}\_\text{GAN*}$ may fall outside Assumptions \ref{assump:kernel} - \ref{assump:kernel-smooth}, however we include it as an empirical extension. Baselines include Noise2Self \citep{batson2019noise2self} (a self-supervised blind-spot network), SURE \citep{soltanayev2018training} (an unsupervised CNN assuming known Gaussian variance), and BUIFD \citep{el2020blind} (a supervised multi-branch model trained on paired data, serving as an empirical upper bound). We evaluate robustness under Additive White Gaussian Noise (AWGN), controlled by noise standard deviation $\sigma$, and spatially correlated AR(1) noise, controlled by correlation parameter $\rho \in [0, 1)$.
Figure \ref{fig:mnist_comparison} summarizes the results. The top panel illustrates denoising as $\rho$ increases (PSNR in yellow). Panels (b) and (c) report the Structural Similarity Index Measure (SSIM, \cite{wang2004image}) evaluated on the test set. We vary $\sigma$ at fixed $\rho=0.4$ (b), and vary the correlation $\rho$ at fixed $\sigma=0.5$ (c). Under pure AWGN ($\rho=0.0$), Noise2Self excels (22.0 dB) by exploiting pixel-wise noise independence. However, its performance collapses as spatial correlation increases (Fig. \ref{fig:mnist_comparison}a, c). In contrast, convMMD does not assume pixel independence; by matching the global convolved distribution, both variants remain remarkably stable (maintaining $\sim$15.2 dB at $\rho=0.4$, while Noise2Self drops to 9.0 dB). This robust stability persists across increasing noise intensities (Fig. \ref{fig:mnist_comparison}b), demonstrating that convMMD effectively scales classical deconvolution to complex, high-dimensional measurement error regimes. We note that although the high dimensionality of images may violate the global Sobolev smoothness assumption in Theorem \ref{th:l2rate}, our results demonstrate good performance. 
This experiment is intended as a stress test of the proposed deconvolution framework, rather than as a contribution to the image-denoising literature. Image denoising is a mature and highly specialized area, with methods that exploit image-specific inductive biases, carefully engineered architectures, perceptual losses, and noise models tailored to imaging applications \citep[e.g.,][]{buades2005review, dabov2007image, pu2016deep, ho2020denoising, luo2025taming}. By contrast, convMMD is not optimized for images and does not rely on the common assumptions used in this literature. Hence, comparisons with image-denoising methods should not be interpreted as a competition on an image-restoration benchmark. Rather, MNIST provides a transparent high-dimensional test bed in which the ambient dimension is substantially larger than in classical nonparametric deconvolution, and the noise can be made non-trivially correlated across features. The main lesson from this experiment is that distributional deconvolution via convMMD remains stable in a regime where methods relying on noise independence can degrade sharply. Thus, the purpose of the case study is to demonstrate scalability and robustness of the proposed measurement-error framework beyond the low-dimensional settings emphasized in the theory, while leaving image-specific architectural refinements and benchmark-level optimization to future work.

%% file: sections/conclusion.tex
\section{Conclusion} \label{sec:disc}
% \vspace{-3mm}
Our method, which combines flexibility and theory, contributes to the broader statistics and machine learning literature on deconvolution and denoising. Much of modern learning from corrupted data still relies either on heuristic denoisers with limited statistical interpretation or on classical estimators that do not align well with modern generative modeling practices. Our framework offers an alternative: train expressive latent models directly on corrupted observations with an objective that respects the known noise mechanism, supports scalable optimization, and admits both finite-sample and large-sample guarantees. In this sense, the paper advances deconvolution and denoising from a classical inverse problem framework toward a general paradigm for learning under structured corruption utilizing a likelihood-free, simulation-based framework.

\paragraph{Limitations.}  While the proposed framework offers a general approach to deconvolution and denoising, our current implementation involves several considerations. From a theoretical perspective, our convergence guarantees are established under specific regularity conditions (known noise distributions and Sobolev smoothness) and assume the attainment of global minima on a non-convex, Monte Carlo approximated objective. Extending these results to account for the dynamics of stochastic gradient methods and the sieve properties of modern neural architectures, like normalizing flows, while accommodating spatial and other dependence patterns in the data remains an area for further study.
In practice, the performance of convMMD is sensitive to the kernel and to the choice of the latent generative model and its architecture. Furthermore, the accuracy of the empirical Bayes denoising step is inherently linked to the quality of the learned latent density; consequently, model misspecification in high-dimensional or high-noise settings can impact posterior estimates. These factors highlight meaningful avenues for future research into data-driven kernel selection, choice of the latent model, and the robustness of the framework under noise misspecification.

%% file: sections/assumptions.tex
\section{Assumptions}
\label{app:assumptions}
For clarity, we collect here all assumptions used in the paper. Theorems in the main text cite the exact subset they require. 
\paragraph{Notation.} For two positive functions $f(\mathbf{t})$ and $g(\mathbf{t})$, we write $f(\mathbf{t}) \asymp g(\mathbf{t})$ to mean that there exist constants $0 < c \le C < \infty$ such that $c \, g(\mathbf{t}) \le f(\mathbf{t}) \le C \, g(\mathbf{t})$ for all $\mathbf{t}$ in the relevant domain. This two-sided bound is essential for the $L_2$ convergence analysis, as it allows us to both upper-bound the low-frequency error and lower-bound the spectral weight when inverting from MMD to $L_2$.
\subsection{Data-Generating Assumptions}

\begin{assumption}[Noise independence]
\label{assump:noise-indep}
The measurement error $\mathbf{U}_X$ is independent of the latent variable $\mathbf{X}$, i.e.,
\[
\mathbf{X} \perp\!\!\!\perp \mathbf{U}_X.
\]
\end{assumption}

\begin{assumption}[Known noise model]
\label{assump:known-noise}
The noise distribution $m$ is known. In heteroscedastic settings, we allow
\[
\mathbf{U}_{X,i} \sim r(\cdot \mid \phi_i),
\qquad
\phi_i \sim g(\cdot \mid \psi_k),
\]
where the conditional family $r$ and the parameter distribution $g$ are known, or the values $\phi_i$ are observed.
\end{assumption}

\begin{assumption}[Finite second moments of the noise]
\label{assump:noise-second-moment}
The noise variables have finite second moments. Specifically,
\[
\mathbb{E}[\|\mathbf{U}_X\|^2 \mid \phi] = \alpha(\phi) < \infty,
\]
and $\alpha(\phi)$ is integrable with respect to $g$.
\end{assumption}

\begin{assumption}[Convolution identifiability]
\label{assump:conv-invertibility}
The characteristic function of the noise,
\[
\phi_m(\mathbf{t})=\mathbb{E}[e^{i\mathbf{t}^\top \mathbf{U}_X}],
\]
is nonzero almost everywhere.
\end{assumption}

\begin{assumption}[Noise smoothness] \label{assump:noise-smooth}
  Let $\phi_m(\mathbf{t}) = \mathbb{E}[e^{i\mathbf{t}^\top \mathbf{U}_X}]$ be the characteristic function of the noise. We assume $\phi_m$ satisfies one of the following two-sided decay conditions for constants $\gamma > 0$:
  \begin{itemize}
      \item \textbf{(OS-N) Ordinary Smooth Noise}: $|\phi_m(\mathbf{t})| \asymp (1+\|\mathbf{t}\|^2)^{-\gamma/2}$.\\
      Examples: Laplace noise ($\gamma = 2$), Gamma distributions.
      \item \textbf{(SS-N) Supersmooth Noise}: $|\phi_m(\mathbf{t})| \asymp \exp(-c_m \|\mathbf{t}\|^\gamma)$ for some $c_m > 0$.\\
      Examples: Gaussian noise ($\gamma = 2$), Cauchy noise ($\gamma = 1$).
  \end{itemize}
  The two-sided bound ensures that the characteristic function decays at a known polynomial or exponential rate, neither faster nor slower.
\end{assumption}

\begin{assumption}[Latent Sobolev smoothness]
\label{assump:latent-sobolev}
The true latent density $p$ belongs to the Sobolev space $H^\beta(\mathbb{R}^d)$ for some $\beta > d/2$, that is,
\[
\int_{\mathbb{R}^d}
(1+\|\mathbf{t}\|^2)^\beta
|\phi_p(\mathbf{t})|^2\, d\mathbf{t}
<
\infty.
\]
\end{assumption}

\subsection{Model and Method Assumptions}

\begin{assumption}[Kernel bound]
\label{assump:kernel}
The kernel $k:\mathbb{R}^d \times \mathbb{R}^d \to \mathbb{R}$ is translation invariant, so that $k(\mathbf{x},\mathbf{y})=k(\mathbf{x}-\mathbf{y})$, and bounded:
\[
\sup_{\mathbf{x}} k(\mathbf{x},\mathbf{x}) \le K < \infty.
\]
Moreover, $k \in L^{1}(\mathbb{R}^d)$ and its Fourier transform $\phi_{k}(\mathbf{t})$ is strictly positive for all $\mathbf{t} \in \mathbb{R}^d$.
\end{assumption}

\begin{assumption}[Kernel smoothness] \label{assump:kernel-smooth}
  The Fourier transform of the kernel satisfies one of the following two-sided decay conditions for constants $\nu > 0$:
  \begin{itemize}
      \item \textbf{(OS-K) Ordinary Smooth Kernel}: $\phi_{k}(\mathbf{t}) \asymp (1+\|\mathbf{t}\|^2)^{-\nu/2}$.\\
      Examples: Laplace kernel ($\nu = d+1$), Mat\'ern kernels.
      \item \textbf{(SS-K) Supersmooth Kernel}: $\phi_{k}(\mathbf{t}) \asymp \exp(-c_k \|\mathbf{t}\|^\nu)$ for some $c_k > 0$.\\
      Examples: Gaussian kernel ($\nu = 2$).
  \end{itemize}
  The two-sided bound ensures that the kernel's spectral weight decays at a controlled rate, which is crucial for relating MMD convergence to $L_2$ convergence.
\end{assumption}

\begin{assumption}[Sieve approximation property]
\label{assump:sieve-approx}
Let $\mathcal{Q}_J=\{q_{\boldsymbol{\theta}}:\boldsymbol{\theta}\in\Theta_J\}$ be the sieve class. For every target density $p$ in the relevant smoothness class, there exists $q_J \in \mathcal{Q}_J$ such that
\[
\inf_{q \in \mathcal{Q}_J}
\|q-p\|_{L_2(\mathcal{X})}
\le
C_{\mathrm{approx}} J^{-\beta/d}
\]
for some constant $C_{\mathrm{approx}}>0$.
\end{assumption}

\begin{assumption}[Compact latent domain]
\label{assump:compact-domain}
The true latent distribution $p$ and the model distributions $q_{\boldsymbol{\theta}}$ for all $\boldsymbol{\theta} \in \Theta_J$ are supported on a compact domain $\mathcal{X} \subset \mathbb{R}^d$, such that $\|\mathbf{x}\| \le M$ for all $\mathbf{x} \in \mathcal{X}$ and some finite constant $M > 0$.
\end{assumption}
\begin{remark}
\label{rem:compact-support}
Assumption~\ref{assump:compact-domain} is a theoretical requirement for the Empirical Bayes denoising bounds, where the bound $M$ on the posterior mean appears explicitly. For Normalizing Flows and GMMs, which traditionally possess support on $\mathbb{R}^d$, the theoretical sieve class is assumed to be smoothly truncated to a sufficiently large compact domain $\mathcal{X}$ encompassing the data manifold. 
\end{remark}

\begin{assumption}[Parameter space compactness] \label{assump:compactness}
  For a fixed sieve size $J$, the parameters $\boldsymbol{\theta}$ are restricted to a compact subset $\Theta_J \subset \mathbb{R}^{D_J}$. Additionally, the optimization is
   regularized such that the estimator maintains a uniformly bounded Sobolev norm: $\sup_{\boldsymbol{\theta} \in \Theta_J} \|q_{\boldsymbol{\theta}}\|_{H^\beta} \leq C_B < \infty$.
\end{assumption}

\begin{assumption}[Bounded noise density]
\label{assump:bounded-noise}
The noise density satisfies $\|m\|_\infty < \infty$.
\end{assumption}

\begin{assumption}[Strictly positive marginal]
\label{assump:positive-marginal}
The true marginal density $\widetilde{p} = p * m$ is strictly positive on a compact observation domain $\mathcal{K} \subset \mathbb{R}^d$, i.e., $\inf_{\widetilde{\mathbf{x}} \in \mathcal{K}} \widetilde{p}(\widetilde{\mathbf{x}}) > 0$.
\end{assumption}

Assumptions~\ref{assump:noise-indep}--\ref{assump:positive-marginal} are organized in increasing strength. Assumptions~\ref{assump:noise-indep}, \ref{assump:known-noise}, \ref{assump:noise-second-moment}, \ref{assump:conv-invertibility}, \ref{assump:kernel}, and \ref{assump:compactness} are sufficient for the basic convMMD guarantees in the main text. Assumptions~\ref{assump:noise-smooth}, \ref{assump:latent-sobolev}, and \ref{assump:sieve-approx} are additionally needed for the strong nonparametric $L_2$ rates. For the empirical Bayes denoising bounds, Assumptions~\ref{assump:bounded-noise} and \ref{assump:positive-marginal} are required in addition to the $L_2$ rate assumptions.

\section{Optimization Algorithm}
\label{app:algorithm}

The empirical convMMD objective in Equation~\eqref{eq:objective} is generally non-convex, and the expectations appearing in it are typically unavailable in closed form. We therefore optimize it by Monte Carlo approximation and automatic differentiation. The essential requirement is that the model sampling mechanism and the noise simulation can be written as differentiable transformations of exogenous base randomness. This allows the use of pathwise gradients and standard stochastic optimization methods such as Adam.

\begin{algorithm}[ht!]
\caption{Sieve-convMMD density estimation via Monte Carlo and AutoDiff}
\label{alg:optimization}
\begin{algorithmic}[1]
\Statex \textbf{Input:} noisy data $\mathcal{D}_N=\{\widetilde{\mathbf{x}}_i\}_{i=1}^N$, sieve class $\mathcal{Q}_J=\{q_{\boldsymbol{\theta}}:\boldsymbol{\theta}\in\Theta_J\}$, initial parameter $\boldsymbol{\theta}^{(0)}$, kernel $k$, learning rate $\eta$, Monte Carlo batch size $M$, total iterations $N_{\mathrm{iter}}$
\Statex \textbf{Output:} estimator $\widehat{\boldsymbol{\theta}}_N$

\For{$t=1,\dots,N_{\mathrm{iter}}$}
    \State Sample a minibatch $\{\widetilde{\mathbf{x}}_b\}_{b=1}^M \subset \mathcal{D}_N$.
    \State Sample base latent randomness $\mathbf{Z}_1,\dots,\mathbf{Z}_M \sim p_z$.
    \State Sample measurement noise variables $\mathbf{U}_{Y,1},\dots,\mathbf{U}_{Y,M} \sim m$.
    \State Generate latent model samples $\mathbf{Y}_b(\boldsymbol{\theta}^{(t-1)}) \sim q_{\boldsymbol{\theta}^{(t-1)}}$ via reparameterization.
    \State Form noisy model samples $\widetilde{\mathbf{Y}}_b(\boldsymbol{\theta}^{(t-1)}) = \mathbf{Y}_b(\boldsymbol{\theta}^{(t-1)}) + \mathbf{U}_{Y,b}$.
    \State Compute a Monte Carlo estimate $\widehat{L}_M(\boldsymbol{\theta}^{(t-1)})$ of the empirical convMMD objective.
    \State Compute the pathwise gradient
    \[
        \widehat{\mathbf{g}}^{(t)}
        =
        \nabla_{\boldsymbol{\theta}} \widehat{L}_M(\boldsymbol{\theta})\big|_{\boldsymbol{\theta}=\boldsymbol{\theta}^{(t-1)}}.
    \]
    \State Update parameters
    \[
        \boldsymbol{\theta}^{(t)}
        =
        \boldsymbol{\theta}^{(t-1)} - \eta \widehat{\mathbf{g}}^{(t)}.
    \]
\EndFor

\State \textbf{return} $\widehat{\boldsymbol{\theta}}_N = \boldsymbol{\theta}^{(N_{\mathrm{iter}})}$
\end{algorithmic}
\end{algorithm}

\paragraph{Normalizing Flows:}
For models like Normalizing Flows, we sample base noise $\mathbf{Z} \sim p_z$ and measurement noise $\mathbf{U}_Y \sim m$, and generate the noisy observation as a differentiable computational graph:
$$ \widetilde{\mathbf{Y}}(\boldsymbol{\theta}) = G_{\boldsymbol{\theta}}(\mathbf{Z}) + \mathbf{U}_Y $$
We then substitute $\widetilde{\mathbf{Y}}(\boldsymbol{\theta})$ directly into the empirical MMD estimator $\widehat{L}_M(\boldsymbol{\theta})$ and backpropagate.

\paragraph{Gaussian Mixture Models:}
Applying the reparameterization trick to Gaussian Mixture Models is traditionally hindered by the discrete sampling of the mixture components $c \sim \text{Categorical}(\boldsymbol{\pi})$. However, because the MMD objective is an expectation, we can analytically marginalize the discrete choices, pulling the scalar weights $\pi_c$ outside the integral. Let $\mathbf{Y}_c(\boldsymbol{\theta}_c) = \boldsymbol{\mu}_c + \boldsymbol{\Sigma}_c^{1/2} \boldsymbol{\epsilon}$ where $\boldsymbol{\epsilon} \sim \mathcal{N}(\mathbf{0}, \mathbf{I}_d)$. The expected cross-term between the observed data and the model becomes:
$$ \mathbb{E}_{\widetilde{\mathbf{Y}} \sim q_{\boldsymbol{\theta}} * m} [k(\widetilde{\mathbf{x}}_i, \widetilde{\mathbf{Y}})] = \sum_{c=1}^J \pi_c \, \mathbb{E}_{\boldsymbol{\epsilon}, \mathbf{U}_Y}\left[k\left(\widetilde{\mathbf{x}}_i, \boldsymbol{\mu}_c + \boldsymbol{\Sigma}_c^{1/2}\boldsymbol{\epsilon} + \mathbf{U}_Y\right)\right] $$
Similarly, the $U$-statistic for the model-to-model term becomes a double summation over the components $c$ and $d$, weighted by $\pi_c \pi_d$. By simulating batches of $\boldsymbol{\epsilon}$ and $\mathbf{U}_Y$ for each component, we construct a fully differentiable empirical loss where gradients flow deterministically into $\boldsymbol{\mu}_c$ and $\boldsymbol{\Sigma}_c$ via reparameterization, and analytically into the mixture weights $\pi_c$.

%% file: sections/justifications.tex
\section{Verification of Model Classes as Sieve Classes}
\label{app:model_classes}

In this section, we formally verify how our two representative model families—Gaussian Mixture Models and Neural Generative Models —satisfy the structural and theoretical assumptions of the convMMD framework (Assumptions~\ref{assump:compactness} and \ref{assump:sieve-approx}), and the self-regularization condition required for empirical Bayes denoising.

\subsection{Gaussian Mixture Models (GMMs)}
\label{app:gmm-justification}

Gaussian mixtures serve as our mathematically explicit baseline for which the entire chain of assumptions, from basic compactness to quantitative $L_2$ approximation, can be rigorously verified.

\paragraph{Parameter Space and Compactness.}
For a fixed complexity $J \ge 1$, we define the sieve class $\mathcal{Q}_J = \{q_{\boldsymbol{\theta}}(\mathbf{x}) = \sum_{c=1}^J \pi_c \, \mathcal{N}(\mathbf{x} \mid \boldsymbol{\mu}_c, \boldsymbol{\Sigma}_c) : \boldsymbol{\theta} \in \Theta_J\}$. To satisfy the compactness and regularization requirements of Assumption~\ref{assump:compactness}, we consider the parameter space:
\begin{equation}
\Theta_J = \Delta_J^\epsilon \times \mathcal{M}^J \times \mathcal{S}(\underline{\lambda}, \overline{\lambda})^J,
\label{eq:gmm-compact-space}
\end{equation}
where $\Delta_J^\epsilon = \{\pi \in [0,1]^J : \sum \pi_c = 1, \pi_c \ge \epsilon/J\}$ is the bounded-away-from-zero simplex, $\mathcal{M} = [-M_\mu, M_\mu]^d$ restricts the means, and $\mathcal{S}(\underline{\lambda}, \overline{\lambda}) = \{\boldsymbol{\Sigma} \in \mathbb{S}_{++}^d : \underline{\lambda} \mathbf{I}_d \preceq \boldsymbol{\Sigma} \preceq \overline{\lambda} \mathbf{I}_d\}$ bounds the covariance eigenvalues. 

% Because $\Theta_J$ is a closed and bounded subset of a finite-dimensional Euclidean space, it is strictly compact. The map $\boldsymbol{\theta} \mapsto q_{\boldsymbol{\theta}}$ is continuous in both $L_1$ and arbitrary Sobolev norms $H^\beta$. Furthermore, the condition $\boldsymbol{\Sigma}_c \succeq \underline{\lambda} \mathbf{I}_d$ prevents the mixture components from collapsing into Dirac deltas, ensuring a uniform Sobolev bound $\sup_{\boldsymbol{\theta} \in \Theta_J} \|q_{\boldsymbol{\theta}}\|_{H^\beta} < \infty$, fully satisfying Assumption~\ref{assump:compactness}. We satisfy this requirement by adding an inverse-variance barrier penalty to the empirical objective. Specifically, we regularize the diagonal elements of the component covariances using a penalty proportional to $\sum_{c,k} 1/(\sigma_{c, k}^2 + \epsilon)$. Inspired by the trace penalty of an Inverse-Wishart prior, this soft constraint strictly prevents any component's variance from collapsing to zero. This ensures that the components do not degenerate into Dirac deltas, directly satisfying the uniform Sobolev regularity of Assumption~\ref{assump:compactness} and maintaining the self-regularization condition required for empirical Bayes denoising.

Because $\Theta_J$ is a closed and bounded subset of a finite-dimensional Euclidean space, it is strictly compact. The map $\boldsymbol{\theta} \mapsto q_{\boldsymbol{\theta}}$ is continuous in both $L_1$ and arbitrary Sobolev norms $H^\beta$. Furthermore, the condition $\boldsymbol{\Sigma}_c \succeq \underline{\lambda} \mathbf{I}_d$ prevents the mixture components from collapsing into Dirac deltas, ensuring a uniform Sobolev bound $\sup_{\boldsymbol{\theta} \in \Theta_J} \|q_{\boldsymbol{\theta}}\|_{H^\beta} < \infty$, fully satisfying Assumption~\ref{assump:compactness}.

In practice, we enforce the constraint $\boldsymbol{\Sigma}_c \succeq \underline{\lambda} \mathbf{I}_d$ by parameterizing each covariance as $\boldsymbol{\Sigma}_c = \mathbf{L}_c \mathbf{L}_c^\top + \underline{\lambda} \mathbf{I}_d$, where $\mathbf{L}_c$ is an unconstrained lower-triangular matrix. This Cholesky-based parameterization guarantees positive definiteness and ensures that all eigenvalues are at least $\underline{\lambda}$, preventing singularity in all dimensions $d \ge 1$. For computational efficiency, when restricting to diagonal covariances, we add a log-barrier penalty $-\alpha \sum_{c,k} \log(\sigma_{c,k}^2)$ to the objective, which implicitly enforces $\sigma_{c,k}^2 > 0$ during optimization.

\paragraph{Quantitative Sieve Approximation.}
The strong deconvolution rates in Theorem~\ref{th:l2rate} rely on the polynomial approximation capacity of the sieve (Assumption~\ref{assump:sieve-approx}). For GMMs, this property is well-established. Following the classical approximation theory of \citep{genovese2000rates} and \citep{ghosal2001entropies}, Gaussian mixtures with regularized parameters achieve polynomial approximation rates for sufficiently smooth target densities on compact domains. Specifically, for $p \in H^\beta(\mathcal{X})$, there exists a constant $C$ such that:
\begin{equation}
\inf_{\boldsymbol{\theta} \in \Theta_J} \|q_{\boldsymbol{\theta}} - p\|_{L_2(\mathcal{X})} \le C J^{-\beta/d}.
\end{equation}

\begin{remark} Achieving fine approximation as $J \to \infty$ requires concentrated components ($\underline{\lambda}_J \to 0$), yet individual sharp Gaussians have unbounded Sobolev norms, which violates our assumptions.

We resolve this by constraining the \emph{global mixture's} Sobolev norm rather than individual component eigenvalues:
\begin{equation}
\Theta_J = \left\{ \boldsymbol{\theta} : \boldsymbol{\Sigma}_c \succeq \underline{\lambda}_J \mathbf{I}_d \quad \text{and} \quad \|q_{\boldsymbol{\theta}}\|_{H^\beta} \le M \right\},
\end{equation}
where $\underline{\lambda}_J \asymp J^{-2/d}$ shrinks with $J$, and $M$ is a fixed constant independent of $J$.

The constraint $\|q_{\boldsymbol{\theta}}\|_{H^\beta} \le M$ prevents pathological configurations (e.g., isolated spikes) while permitting the fine-scale resolution needed for good approximation. 
\end{remark}

\subsection{Neural Generative Models}
\label{app:neural-gen-details}

We analyze normalizing flows, GANs, and diffusion models uniformly as pushforward distributions $q_{\boldsymbol{\theta}} = (G_{\boldsymbol{\theta}})_\# p_z$, where $\mathbf{Z} \sim p_z$ is a fixed base distribution and $G_{\boldsymbol{\theta}} : \mathbb{R}^k \to \mathbb{R}^d$ is a parameterized neural network. 

\paragraph{Handling Degenerate Densities (GANs).}
A theoretical concern with neural generative models (e.g., GANs with $k < d$) is that the latent pushforward $(G_{\boldsymbol{\theta}})_\# p_z$ may be supported on a lower-dimensional manifold, failing to admit a Lebesgue density. The convMMD framework cleanly circumvents this. 

\begin{proposition}[Existence of the convolved density]
\label{prop:convolved-density}
Let $m$ be a noise distribution admitting a bounded density $m \in L^\infty(\mathbb{R}^d)$. 
Then, for any pushforward measure $q_{\boldsymbol{\theta}} = (G_{\boldsymbol{\theta}})_\# p_z$ (regardless of whether $q_{\boldsymbol{\theta}}$ has a density), the convolved distribution $q_{\boldsymbol{\theta}} * m$ admits a Lebesgue density on $\mathbb{R}^d$ given by:
\begin{equation}
    (q_{\boldsymbol{\theta}} * m)(\widetilde{\mathbf{x}}) = \mathbb{E}_{\mathbf{Z} \sim p_z}[m(\widetilde{\mathbf{x}} - G_{\boldsymbol{\theta}}(\mathbf{Z}))].
\end{equation}
\end{proposition}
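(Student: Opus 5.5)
We will show that the measure $q_{\boldsymbol{\theta}} * m$ is absolutely continuous and identify its density. The tool is Tonelli's theorem applied to the law of $\widetilde{\mathbf{Y}} = G_{\boldsymbol{\theta}}(\mathbf{Z}) + \mathbf{U}_Y$, where $\mathbf{Z} \sim p_z$ and $\mathbf{U}_Y \sim m$ are independent. Throughout, $G_{\boldsymbol{\theta}}$ need only be Borel measurable; no invertibility and no density for $q_{\boldsymbol{\theta}}$ is assumed.

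\textbf{Step 1: write the law of $\widetilde{\mathbf{Y}}$ as an iterated integral.} Fix a Borel set $A \subseteq \mathbb{R}^d$. By independence and the definition of the pushforward,
\begin{equation*}
\mathbb{P}(\widetilde{\mathbf{Y}} \in A) = \int \int \mathbf{1}_A(\mathbf{y} + \mathbf{u})\, m(\mathbf{u})\, d\mathbf{u}\, dq_{\boldsymbol{\theta}}(\mathbf{y}) = \mathbb{E}_{\mathbf{Z} \sim p_z}\Big[ \int \mathbf{1}_A(G_{\boldsymbol{\theta}}(\mathbf{Z}) + \mathbf{u})\, m(\mathbf{u})\, d\mathbf{u} \Big].
\end{equation*}

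\textbf{Step 2: shift the inner integral onto the observation variable.} In the inner integral, substitute $\widetilde{\mathbf{x}} = G_{\boldsymbol{\theta}}(\mathbf{Z}) + \mathbf{u}$. Lebesgue measure is translation invariant, so the Jacobian is one, and the inner integral becomes $\int_A m(\widetilde{\mathbf{x}} - G_{\boldsymbol{\theta}}(\mathbf{Z}))\, d\widetilde{\mathbf{x}}$.

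\textbf{Step 3: exchange the order of integration.} The integrand $(\mathbf{z}, \widetilde{\mathbf{x}}) \mapsto \mathbf{1}_A(\widetilde{\mathbf{x}})\, m(\widetilde{\mathbf{x}} - G_{\boldsymbol{\theta}}(\mathbf{z}))$ is nonnegative. It is jointly measurable because it is a composition of Borel maps. Tonelli's theorem therefore lets us swap the expectation over $\mathbf{Z}$ and the integral over $\widetilde{\mathbf{x}}$, giving
\begin{equation*}
\mathbb{P}(\widetilde{\mathbf{Y}} \in A) = \int_A f(\widetilde{\mathbf{x}})\, d\widetilde{\mathbf{x}}, \qquad f(\widetilde{\mathbf{x}}) := \mathbb{E}_{\mathbf{Z}}\big[m(\widetilde{\mathbf{x}} - G_{\boldsymbol{\theta}}(\mathbf{Z}))\big].
\end{equation*}
Since this holds for every Borel set $A$, $f$ is a Lebesgue density of $q_{\boldsymbol{\theta}} * m$, and it is exactly the formula claimed in the statement.

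\textbf{Step 4: regularity of $f$.} Bounded convergence is not needed for existence; Tonelli already suffices. The boundedness hypothesis $\|m\|_\infty < \infty$ adds that $f$ is finite everywhere, with $0 \le f \le \|m\|_\infty$. In particular $f$ is a genuine pointwise-defined bounded density, not merely an a.e.\ object. Taking $A = \mathbb{R}^d$ gives $\int f = 1$, which confirms normalization.

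\textbf{Main obstacle.} The only delicate point is the joint measurability needed for Tonelli in Step 3. It holds because $m$ is Borel (after choosing a Borel version of the density) and $(\mathbf{z}, \widetilde{\mathbf{x}}) \mapsto \widetilde{\mathbf{x}} - G_{\boldsymbol{\theta}}(\mathbf{z})$ is Borel. So the argument goes through even when $q_{\boldsymbol{\theta}}$ lives on a lower-dimensional manifold, as in GANs with $k < d$.
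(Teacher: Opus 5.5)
Your proof is correct and follows the same route as the paper: apply Fubini--Tonelli to the law of $G_{\boldsymbol{\theta}}(\mathbf{Z}) + \mathbf{U}_Y$ and identify the inner integral $\mathbb{E}_{\mathbf{Z}}[m(\widetilde{\mathbf{x}} - G_{\boldsymbol{\theta}}(\mathbf{Z}))]$ as the density. You are more careful than the paper about joint measurability (choosing a Borel version of $m$), and you correctly note that boundedness of $m$ only makes the density bounded and is not needed for it to exist.
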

\begin{proof}
By Fubini's theorem, for any measurable set $A \subset \mathbb{R}^d$, the probability measure of the convolution is $(q_{\boldsymbol{\theta}} * m)(A) = \int \mathbf{1}_A(\widetilde{\mathbf{x}}) \left( \int m(\widetilde{\mathbf{x}} - \mathbf{y}) dq_{\boldsymbol{\theta}}(\mathbf{y}) \right) d\widetilde{\mathbf{x}}$. Since $m \in L^\infty$, the inner integral is a bounded measurable function defining the Radon--Nikodym derivative. 
\end{proof}
Consequently, the MMD objective in Theorem~\ref{th:oracle} is always well-defined, and optimization via reparameterization gradients $\widetilde{\mathbf{Y}} = G_{\boldsymbol{\theta}}(\mathbf{Z}) + \mathbf{U}_Y$ is unconditionally valid. For pure deconvolution tasks, evaluating the learned density is not possible if the output is degenerate. However, for empirical Bayes denoising, this is inconsequential: the posterior mean can still be accurately estimated using importance sampling drawn from the base measure $p_z$, without ever evaluating the latent density.

\paragraph{Compactness and Regularity.}
For a network $G_{\boldsymbol{\theta}}$ with $D_J$ parameters, compactness (Assumption~\ref{assump:compactness}) is enforced using standard regularizations: explicit weight clipping ($\|\mathbf{W}\|_\infty \le B$), or spectral normalization ($\|\mathbf{W}\|_{\mathrm{op}} \le s$). Under these constraints, $\Theta_J$ is a compact subset of $\mathbb{R}^{D_J}$. If spectral normalization is used, $G_{\boldsymbol{\theta}}$ is uniformly Lipschitz \citep{miyato2018spectral}. For diffeomorphic mappings like normalizing flows, bounded network weights and invertible architectures (e.g., affine coupling layers) ensure uniform bounds on the Jacobian, guaranteeing $\sup_{\boldsymbol{\theta} \in \Theta_J} \|q_{\boldsymbol{\theta}}\|_{H^\beta} < \infty$.

\paragraph{The Sieve Approximation Requirement.}
While these models satisfy the basic estimation guarantees (Theorem~\ref{th:oracle}), their applicability to the explicit $L_2$ deconvolution rates (Theorem~\ref{th:l2rate}) depends on Assumption~\ref{assump:sieve-approx}. 
Currently, strong universality results exist for these architectures: normalizing flows are universal diffeomorphism approximators \citep{teshima2020coupling}, and GANs can approximate any continuous distribution in Wasserstein distance \citep{liang2021well}. Furthermore, unconstrained ReLU networks achieve explicit polynomial approximation rates for Sobolev functions \citep{yarotsky2017error}. 

However, mapping the $L_\infty$ or $L_p$ function-approximation rates of a neural network directly to the $L_2$ approximation rate of its \emph{induced pushforward density} remains an active area of approximation theory. Thus, while implicit models are fully consistent under our convMMD framework, the exact polynomial deconvolution rates of Theorem~\ref{th:l2rate} apply to them \emph{conditionally}, assuming the underlying architecture is sized appropriately to satisfy the quantitative bound in Assumption~\ref{assump:sieve-approx}.

\paragraph{Empirical Bayes Implementation.}
For denoising, the empirical Bayes estimator requires the computation of the posterior mean. 
For normalizing flows, the exact density evaluation $q_{\boldsymbol{\theta}}(\mathbf{x}) = p_z(G_{\boldsymbol{\theta}}^{-1}(\mathbf{x})) |\det \nabla G_{\boldsymbol{\theta}}^{-1}(\mathbf{x})|$ allows direct numerical integration or MCMC sampling. 
For GANs and diffusion models lacking tractable densities, we bypass exact evaluation by using Self-Normalized Importance Sampling (SNIS) in the latent space:
\begin{equation}
    \widehat{\mathbf{x}}^{\mathrm{EB}}(\widetilde{\mathbf{x}}_i) \approx \frac{\sum_{j=1}^S G_{\boldsymbol{\theta}}(\mathbf{z}_j) \, m(\widetilde{\mathbf{x}}_i - G_{\boldsymbol{\theta}}(\mathbf{z}_j))}{\sum_{j=1}^S m(\widetilde{\mathbf{x}}_i - G_{\boldsymbol{\theta}}(\mathbf{z}_j))}, \qquad \mathbf{z}_j \stackrel{\mathrm{iid}}{\sim} p_z.
\end{equation}
This estimator converges to the true posterior mean $\mathbb{E}_{q_{\boldsymbol{\theta}}}[\mathbf{X} \mid \widetilde{\mathbf{X}} = \widetilde{\mathbf{x}}_i]$ as $S \to \infty$ and leverages only the forward pass of the generator and the known noise density $m$, fitting seamlessly into the convMMD pipeline.